\newcolumntype{L}[1]{>{\raggedright\arraybackslash}p{#1}}
\newcolumntype{C}[1]{>{\centering\arraybackslash}p{#1}}
\newcolumntype{R}[1]{>{\raggedleft\arraybackslash}p{#1}}
\long\def\comment#1{}
\newcommand{\nop}[1]{}
\newtheorem{theorem}{\bf Theorem}[section]
\newtheorem{lemma}{\bf Lemma}[section]
\newtheorem{example}{\bf Example}
\theoremstyle{remark}
\theoremstyle{definition}
\newtheorem{definition}{\bf Definition}
\newcommand\vldbdoi{10.14778/3725688.3725715}
\newcommand\vldbpages{1905 - 1918}
\newcommand\vldbvolume{18}
\newcommand\vldbissue{6}
\newcommand\vldbyear{2025}
\newcommand\vldbauthors{\authors}
\newcommand\vldbtitle{\shorttitle} 
\newcommand\vldbavailabilityurl{https://github.com/dulei715/DynamicWEventCode}
\newcommand\vldbpagestyle{empty}
\let\oldequation\equation
\let\oldendequation\endequation
\renewenvironment{equation}{
	\begingroup
	\scriptsize 
	\oldequation
}{
	\oldendequation
	\endgroup
}
\newcommand{\entity}[1]{\mathcal{#1}}
\newcommand{\algvar}[1]{\mathcal{#1}}
\newcommand{\constvar}[1]{\mathbb{#1}}
\newcommand{\vectorfont}[1]{\boldsymbol{#1}}
\newcommand{\problemDefineSimpleName}{PWEPP-IDS}
\newcommand{\problemDefineTotalName}{Personalized $w$-Event Private Publishing for Infinite Data Streams}
\newcommand{\privacyDefineSimpleName}{($\vectorfont{w}$,$\vectorfont{\epsilon}$)-EPDP}
\newcommand{\privacyDefineTotalName}{$\vectorfont{w}$-Event $\vectorfont{\epsilon}$-Personalized Differential Privacy}
\newcommand{\solutionA}{PWSM}
\newcommand{\solutionATotalName}{Personalized Window Size Mechanism}
\newcommand{\solutionMethodA}{PBD}
\newcommand{\solutionMethodATotalName}{Personalized Budget Distribution}
\newcommand{\solutionMethodB}{PBA}
\newcommand{\solutionMethodBTotalName}{Personalized Budget Absorption}
\newcommand{\solutionCMPPLDPU}{PLBU}
\newcommand{\solutionCMPPLDPUTotalName}{Personalized LDP Budget Uniform}
\newcommand{\solutionCMPA}{BD}
\newcommand{\solutionCMPATotalName}{Budget Distribution}
\newcommand{\solutionCMPB}{BA}
\newcommand{\solutionCMPBTotalName}{Budget Absorption}
\newcommand{\solutionCMPLDPU}{LBU}
\newcommand{\solutionCMPLDPUTotalName}{LDP Budget Uniform}
\newcommand{\checkInDatasetName}{Foursquare}
\newcommand{\trajectoryDatasetName}{Taxi}
\newcommand{\tlnsDatasetName}{TLNS}
\newcommand{\sinDatasetName}{Sin}
\newcommand{\logDatasetName}{Log}
\newcounter{ruleCounter}
\begin{document}

\title{Infinite Stream Estimation under Personalized $w$-Event Privacy}

\author{Leilei Du}
\affiliation{%
	\institution{Hunan University}
	\city{Changsha}
	\country{China}
}
\email{leileidu@hnu.edu.cn}

\author{Peng Cheng}
\orcid{0000-0002-9797-6944}
\affiliation{%
	\institution{Tongji University}
	\city{Shanghai}
	\country{China}
}
\email{cspcheng@tongji.edu.cn}

\author{Lei Chen}
\affiliation{%
	\institution{HKUST (GZ) \& HKUST}
	\city{Guangzhou \& HK SAR}
	\country{China}
}
\email{leichen@cse.ust.hk}

\author{Heng Tao Shen}
\affiliation{%
	\institution{Tongji University \& UESTC}
	\city{Shanghai \& Chengdu}
	\country{China}
}
\email{shenhengtao@hotmail.com}

\author{Xuemin Lin}
\affiliation{%
	\institution{Shanghai Jiaotong University}
	\city{Shanghai}
	\country{China}
}
\email{xuemin.lin@gmail.com}

\author{Wei Xi}
\affiliation{%
	\institution{Xi'an Jiaotong University}
	\city{Xi'an}
	\country{China}
}
\email{xiwei@xjtu.edu.cn}

\sloppy

\begin{abstract}
	Streaming data collection is indispensable for stream data analysis, such as event monitoring. However, publishing these data directly leads to privacy leaks. $w$-event privacy is a valuable  tool to protect individual privacy within a given time window while maintaining high accuracy in data collection.
	Most existing $w$-event privacy studies on infinite data stream only focus on homogeneous privacy requirements for all users. In this paper, we propose personalized $w$-event privacy protection that allows different users to have different privacy requirements in private data stream estimation. Specifically, we design a mechanism that allows users to maintain constant privacy requirements at each time slot, namely \solutionATotalName{} (\solutionA{}). Then, we propose two solutions to accurately estimate stream data statistics while achieving \privacyDefineTotalName{} (\privacyDefineSimpleName{}), namely \solutionMethodATotalName{} (\solutionMethodA{}) and \solutionMethodBTotalName{} (\solutionMethodB{}). 
	\solutionMethodA{} always provides at least the same privacy budget for the next time step as the amount consumed in the previous release.
	\solutionMethodB{} fully absorbs the privacy budget from the previous $k$ time slots, while also borrowing from the privacy budget of the next $k$ time slots, to increase the privacy budget for the current time slot.
	We prove that both \solutionMethodA{} and \solutionMethodB{} outperform the state-of-the-art private stream estimation methods while satisfying the privacy requirements of all users.
	We demonstrate the efficiency and effectiveness of our \solutionMethodA{} and \solutionMethodB{} on both real and synthetic datasets, compared with the recent uniformity $w$-event approaches, \solutionCMPATotalName{} (\solutionCMPA{}) and  \solutionCMPBTotalName{} (\solutionCMPB). Our \solutionMethodA{} achieves $68\%$ less error than \solutionCMPA{} on average on real datasets. 
	Besides, our \solutionMethodB{} achieves $24.9\%$ less error than \solutionCMPB{} on average on synthetic datasets.
\end{abstract}

\maketitle

\pagestyle{\vldbpagestyle}
\begingroup\small\noindent\raggedright\textbf{PVLDB Reference Format:}\\
\vldbauthors. \vldbtitle. PVLDB, \vldbvolume(\vldbissue): \vldbpages, \vldbyear.\\
\href{https://doi.org/\vldbdoi}{doi:\vldbdoi}
\endgroup
\begingroup
\renewcommand\thefootnote{}\footnote{\noindent
	This work is licensed under the Creative Commons BY-NC-ND 4.0 International License. Visit \url{https://creativecommons.org/licenses/by-nc-nd/4.0/} to view a copy of this license. For any use beyond those covered by this license, obtain permission by emailing \href{mailto:info@vldb.org}{info@vldb.org}. Copyright is held by the owner/author(s). Publication rights licensed to the VLDB Endowment. \\
	\raggedright Proceedings of the VLDB Endowment, Vol. \vldbvolume, No. \vldbissue\ %
	ISSN 2150-8097. \\
	\href{https://doi.org/\vldbdoi}{doi:\vldbdoi} \\
}\addtocounter{footnote}{-1}\endgroup

\ifdefempty{\vldbavailabilityurl}{}{
	\vspace{.3cm}
	\begingroup\small\noindent\raggedright\textbf{PVLDB Artifact Availability:}\\
	The source code, data, and/or other artifacts have been made available at \url{\vldbavailabilityurl}.
	\endgroup
}

\section{Introduction}
With the popularity of smart devices and high-quality wireless networks, people can easily access the internet and utilize online services. They continuously report data to platforms and receive services like log stream analysis~\cite{DBLP:conf/www/XiePMJM23}, event monitoring~\cite{DBLP:journals/tpds/GuoJZZ12}, and video querying~\cite{DBLP:conf/cvpr/MoonHPPH23}. To provide better services, these platforms collect data and conduct real-time analysis over aggregated data streams.

However, collecting stream data directly poses severe privacy risks, causing users to refuse communication with platforms. For instance, an AIDS patient may decline to participate in an investigation due to privacy concerns~\cite{feijoo2023exploring}. To resolve this conflict, differential privacy (DP) is proposed  to protect individual privacy while ensuring accurate data estimation~\cite{DBLP:conf/icalp/Dwork06}.

Recently, $w$-event privacy based on DP has emerged for private stream data collection and analysis \cite{DBLP:conf/infocom/WangZLWQR16, DBLP:conf/infocom/WangLPRLC20, DBLP:conf/sigmod/RenSYYZX22}. It effectively protects the privacy of $w$ consecutive related events while offering accurate stream statistics. However, different users may have different privacy requirements.
For instance, entertainers may be reluctant to reveal too much about their locations (i.e., large $w$-event size), while street artists may be willing to expose their locations (i.e., small $w$-event size) for more attention. Thus, if we fix the window size $w$ for all users, it is hard to make everyone satisfied.

We illustrate an example of online car-hailing shown in Figure~\ref{fig:introduction_example}.

\begin{figure}[t!]
	\centering
	\setlength{\abovecaptionskip}{0.2cm}
	\includegraphics[width=0.45\textwidth]{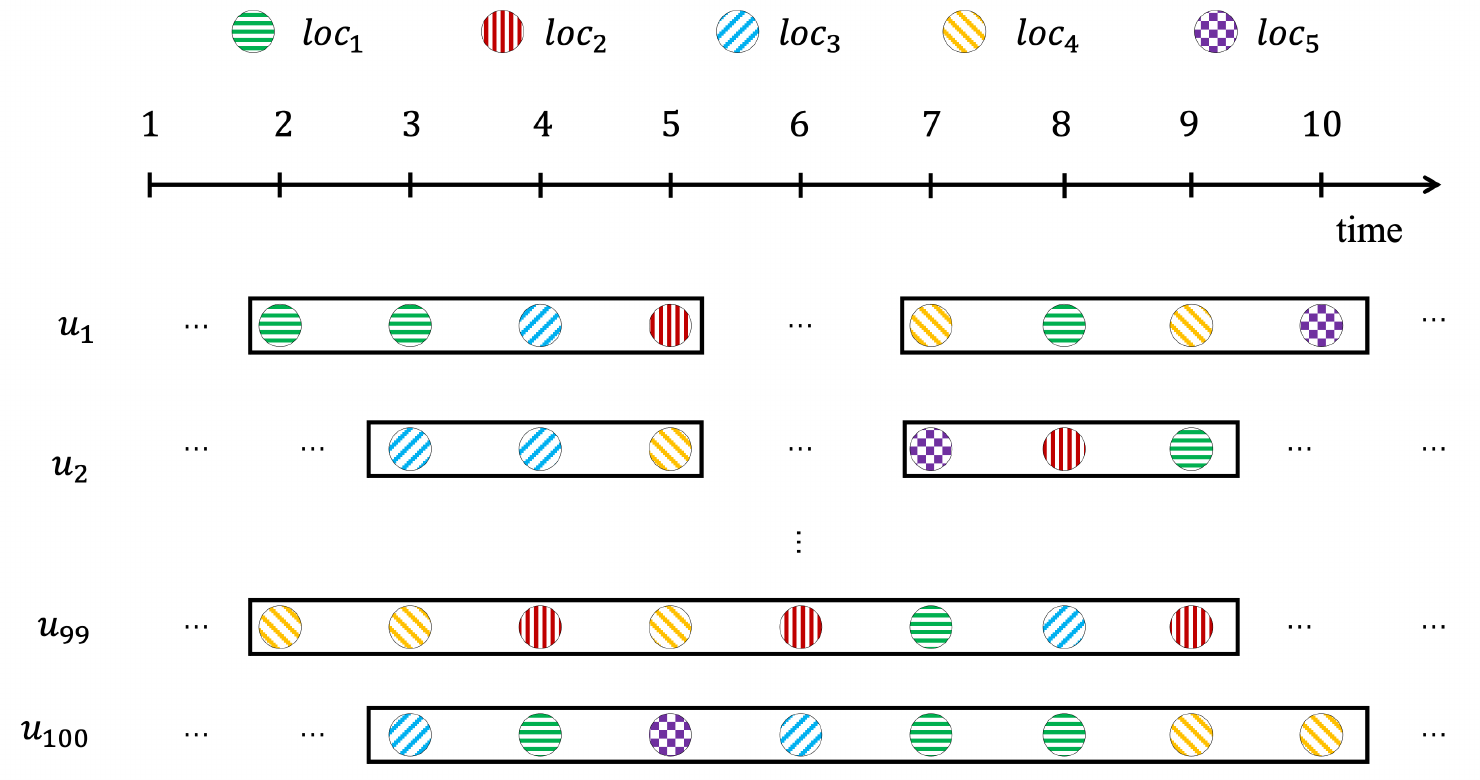}
	\caption{Different event window sizes for different time slots. }\label{fig:introduction_example}
\end{figure}

\begin{example}
	Consider a scenario with $100$ drivers $U=\{u_1,..., u_{100}\}$ who share their locations from $\{loc_1,...,loc_5\}$ at each time slot. 
	Each driver $u_i$ is protected by $w_i$-event privacy, meaning their location data is safeguarded through $\epsilon$-DP across at least $w_i$ consecutive time slots, where $\epsilon$ represents their required privacy protect stength.
	For example, $u_1$ requires location protection across any $4$ consecutive time slots, while $u_{99}$ and $u_{100}$ need protection across any $8$ consecutive time slots. 
	For the drivers $u_i \in U \backslash \{u_{99}, u_{100}\}$, the window size does not exceed $4$. 
	
	Under traditional $w$-event privacy, satisfying all drivers' privacy needs requires setting the event window size to the maximum value (i.e., $w=8$) and making full use of the privacy budget to achieve high utility while maintaining $8$-event privacy. 
	Let $AE_{avg}$ denote the average square error at each time slot, defined as the variance when adding Laplace noise (i.e., {\scriptsize$AE_{avg}=2b^2=2\times\left(\frac{1}{\epsilon/w}\right)^2$}).
	With a total privacy budget $\epsilon$ of $1$ and using the \textit{Uniform method}~\cite{DBLP:journals/pvldb/KellarisPXP14}, the average square error at each time slot under $8$-event privacy is $AE_{avg} = 2\times(\frac{w}{\epsilon})^2 = 128$.
	However, the first $98$ drivers do not actually need $8$-event privacy. 
	By setting the window size to $w=4$ and using the \textit{threshold method}~\cite{DBLP:conf/icde/JorgensenYC15} (or the \textit{sample method}~\cite{DBLP:conf/icde/JorgensenYC15}), we can achieve $AE_{avg} \approx 2\times(\frac{w}{\epsilon})^2 = 32$, which is significantly lower than the error from traditional $8$-event privacy.
\end{example}

In this paper, we define the \problemDefineTotalName{} (\problemDefineSimpleName{}) problem to model personalized privacy requirements in stream data publication. 
To solve \problemDefineSimpleName{}, 
there are two challenges: 1) effectively unifying the privacy budget across all users into a single value to maximize publication utility; 2) effectively distribute each user's personalized privacy budget to their personalized window size to maximize publication utility.

To improve publication utility, we address \problemDefineSimpleName{} using the centralized DP model~\cite{DBLP:conf/icalp/Dwork06}, which requires a single centralized privacy budget for publishing statistics at each time slot. 
However, users have different personalized privacy budgets (not the same budgets), traditionally, satisfying everyone's privacy requirements means selecting the minimum budget among all users, which results in the lowest utility. \textit{How to use a privacy budget higher than the minimum one to achieve higher utility while satisfying the privacy requirement of user with the minimum privacy?} It seems unachievable at a glance. We solve this challenge through elaborately applying the Sampling Mechanism~\cite{DBLP:conf/icde/JorgensenYC15}. Our method theoretically guarantees that even though the selected unified privacy budget is higher than the minimum privacy budget, no privacy leakages for any users exist.

Intuitively, time slots with higher rates of change contain more valuable information and are thus more important. 
To maximize utility, we need to allocate large privacy budgets to publications at these important time slots while approximating others (allocating none). \textit{How to identify important time slots and allocate privacy budgets to achieve maximum publication utility?}
To address this challenge, we design two methods: \solutionMethodATotalName{} (\solutionMethodA{}) and \solutionMethodBTotalName{} (\solutionMethodB{}). 
Both methods identify important time slots by measuring the dissimilarity between current and historical statistics. 
The key difference lies in their assumption: 
\solutionMethodA{} assumes stream data has a stable and high rate of change. Thus, it allocates budgets to each publication in an exponentially decreasing fashion per window. 
\solutionMethodB{}, however, assumes stream data has an unstable and low rate of change. Thus, it skipps or approximates many less important publications. Besides, it maximizes the accuracy of important publications by utilizing unused budget from skipped publications while nullifying future time slot budgets. 
We prove that both \solutionMethodA{} and \solutionMethodB{} satisfy \privacyDefineTotalName{} (\privacyDefineSimpleName{}) and provide their average error upper bounds.
We summarize our contributions as follows.
\begin{itemize}[leftmargin=*]
	\item We formally define \privacyDefineTotalName{} 
	for \problemDefineSimpleName{} in Section~\ref{pro_def}.
	
	\item We propose  \solutionATotalName{} (\solutionA{}) and two methods, namely \solutionMethodATotalName{} (\solutionMethodA{}) and \solutionMethodBTotalName{} (\solutionMethodB{}), to support personalized $\vectorfont{w}$-event privacy with theoretical analyses in Section~\ref{basic_method}.
	
	\item We test our methods on both real and synthetic data sets to demonstrate their efficiency and effectiveness in Section~\ref{experiment}.
\end{itemize}

\section{Related Work}
\subsection{Data Stream Estimation under Differential Privacy}
Based on the privacy model, there are two types of data stream estimation methods: centralized differential privacy~\cite{DBLP:conf/icalp/Dwork06} (CDP) based methods and local differential privacy~\cite{DBLP:conf/stoc/BassilyS15} (LDP) based methods.

\noindent\textbf{Data Stream Estimation under CDP.}
Dwork et al.  first address the problem of Differential Privacy (DP) on data streams~\cite{DBLP:conf/stoc/DworkNPR10}.
They define two types of DP levels: \emph{event-level differential privacy} (event-DP) and \emph{user-level differential privacy} (user-DP).

In event-DP, each single event is hidden in statistic queries. 
Dwork et al. focus on the finite event scenarios and propose a binary tree method to achieve high statistical utility while maintaining event-DP~\cite{DBLP:conf/stoc/DworkNPR10}.
Chan et al. extend it to infinite cases, and produce partial summations for binary counting~\cite{DBLP:journals/tissec/ChanSS11}.
Dwork et al. introduce a cascade buffer counter that updates adaptively based on stream density~\cite{DBLP:conf/soda/Dwork10}.
Bolot et al. propose \textit{decayed privacy} which reduces the privacy costs for past data~\cite{DBLP:conf/icdt/BolotFMNT13}.
Chen et al. develop PeGaSus, a perturb-group-smooth framework for multiple queries under event-DP~\cite{DBLP:conf/ccs/ChenMHM17}.
However, event-DP assumes all element in a stream are independent, making it unsuitable for correlated data stream publishing.

In user-DP, all events for each user are hidden in statistic queries.
Fan et al. propose the FAST algorithm with a sampling-and-filtering framework, counting finite stream data under user-DP~\cite{DBLP:journals/tkde/FanX14}.
Cummings et al. address heterogeneous user data, estimating population-level means while achieving user-DP~\cite{DBLP:conf/nips/CummingsFMT22}. However, they only consider finite data. Offering user-DP for infinite data requires infinite perturbation, leading to poor long-term utility~\cite{DBLP:journals/pvldb/KellarisPXP14}.
Cao et al.~\cite{DBLP:journals/tkde/CaoYXX19} assume that temporal data adhere to the Markov constraint. They decompose privacy leakage into Backward Privacy Leakage (BPL) and Forward Privacy Leakage (FPL), and on this basis propose a new privacy definition: Differential Privacy under Temporal Correlations ($\alpha$-DP$_{\entity{T}}$). However, this definition is limited by the Markov assumption.

To bridge the gap between event-DP and user-DP, Kellaris et al. propose $w$-event DP for infinite streams~\cite{DBLP:journals/pvldb/KellarisPXP14}.
It ensures $\epsilon$-DP for any group of events within a time window of size $w$. They introduce two methods, \textit{Budget Distribution} and \textit{Budget Absorption}, to optimize privacy budget use and estimate statistics effectively. However, neither method handles stream data with significant changes. 
Wang et al. apply the $w$-event concept to the FAST method, proposing a multi-dimensional stream release mechanism called \textit{ResueDP}, which achieves accurate estimation for both rapid and slow data stream changes~\cite{DBLP:conf/infocom/WangZLWQR16}. 
A limitation of all these methods is their reliance on a trusted server to ensure privacy.

\noindent\textbf{Data Stream Estimation under LDP.}
To overcome the dependence on a trusted server, LDP~\cite{DBLP:conf/stoc/BassilyS15} has recently been proposed and adopted by many major companies such as Microsoft, Apple and Google. 
Erlingsson et al. introduce RAPPOR to estimate finite streams under LDP~\cite{DBLP:conf/ccs/ErlingssonPK14}. They design a two-layer randomized response mechanism (i.e., permanent randomized response and instantaneous randomized response) to protect each individual's data. However, RAPPOR is limited to uncorrelated stream data. To address the problem of correlated time series data, Erlingsson et al. develop a new privacy model that introduces \textit{shuffling} to amplify the LDP privacy level~\cite{DBLP:conf/soda/ErlingssonFMRTT19}. However, this model only suits finite stream data.
Joseph et al. propose THRESH for evolving data under LDP~\cite{DBLP:conf/nips/JosephRUW18}, which consumes privacy budget at global update time slots selected by users' LDP voting.
However, it is not applicable to infinite streams as it assumes a fixed number of global updates.
Wang et al. extend event-level privacy from CDP to LDP and design the efficient ToPL method under event LDP~\cite{DBLP:conf/ccs/0001C0SC0LJ21}.
Nevertheless, event-level LDP focuses solely on event-level privacy, lacking privacy protection for correlated data in streams.
Bao et al. propose an $(\epsilon,\delta)$-LDP method (called CGM) for finite streaming data collection using the analytic Guassian mechanism, but requires periodic privacy budget renewal~\cite{DBLP:journals/pvldb/BaoYXD21}.
Ren et al. introduce LDP-IDS for infinite streaming data collection and analysis under $w$-event LDP~\cite{DBLP:conf/sigmod/RenSYYZX22}.
They propose two budget allocation methods and two population allocation methods, bridging the gap between event LDP and user LDP while improving estimation accuracy.
However, all these methods cannot be adopted to support personalized event window sizes.

\subsection{Non-Uniformity Differential Privacy}
Recently, some studies address the non-uniform privacy requirements among items (table columns) or records (table rows)~\cite{DBLP:conf/uss/Murakami019}. 

Alaggan et al. first examine scenarios where each database instance comprises a single user's profile~\cite{DBLP:journals/jpc/AlagganGK16}. They focus on varying privacy requirements for different items and formally define Heterogeneous Differential Privacy (HDP).

Jorgensen et al. investigate the privacy preservation for individual rows, introducing Personalized Differential Privacy (PDP)~\cite{DBLP:conf/icde/JorgensenYC15}. They design two mechanisms leveraging non-uniform privacy requirements to achieve better utility than standard uniform DP.
Kotsogiannis et al. recognize that different data have different sensitivity, then define One-side Differential Privacy (OSPD) and propose algorithms that truthfully release non-sensitive record samples to enhance accuracy in DP-solutions~\cite{DBLP:conf/icde/KotsogiannisDHM20}.

Andr\'{e}s et al. introduce a novel non-uniform privacy concept called Geo-Indistinguishability (Geo-I), where the privacy level for any point increases as the distance to this point decreases~\cite{DBLP:conf/ccs/AndresBCP13}.
Wang et al.~\cite{DBLP:journals/tmc/WangHLWWYQ19} and Du et al.~\cite{DBLP:conf/icde/DuCZX00F23} explore  PDP in spatial crowdsourcing, and develop highly effective private task assignment methods to satisfy diverse workers' privacy and utility requirements.
Liu et al. investigate HDP in federated learning~\cite{DBLP:journals/pvldb/LiuLXLM21}. 
They assume different clients hold DP budget and divide them into private and public parts, then propose two methods to project the ``public'' clients' models into ``private'' clients' models to improve the joint model's utility.
However, all above studies are not suitable for stream data. 

In this paper, we propose \solutionATotalName{} (\solutionA{}) with two implementation methods: \solutionMethodATotalName{} (\solutionMethodA{}) and \solutionMethodBTotalName{} (\solutionMethodB{}). 
Our approach extends traditional $w$-event privacy mechanisms by introducing $\vectorfont{\epsilon}$-personalized differential privacy methods to support personalized privacy requirements.
This enhancement enables our mechanism and methods to handle both infinite correlated data streams and personalized privacy requirements, building upon the foundations of traditional $w$-event privacy mechanisms.

\begin{table}[t!]\vspace{-2ex}
	\caption{Summary for related work.}
	\label{relatedwork_table}\vspace{1ex}
	\centering
		\resizebox{8.5cm}{!}{
			\begin{tabular}{|ccc|c|c|}
				\hline
				\multicolumn{2}{|c|}{\textbf{Model Types}}                                                                          & \textbf{Methods}                                                                                                                           & \textbf{\begin{tabular}[c]{@{}c@{}}Is infinite \\  and correlated\end{tabular}} & \textbf{\begin{tabular}[c]{@{}c@{}}Is personalized \\ privacy\end{tabular}} \\ \hline
				\multicolumn{1}{|c|}{}                                 & \multicolumn{1}{c|}{}                                      & Finite B-tree~\cite{DBLP:conf/stoc/DworkNPR10}                                                                       & \XSolidBrush                                                                                      & \XSolidBrush                      \\ \cline{3-5} 
				\multicolumn{1}{|c|}{}                                 & \multicolumn{1}{c|}{}                                      & Infinite B-tree~\cite{DBLP:journals/tissec/ChanSS11}                                                                 & \XSolidBrush                                                                                      & \XSolidBrush                      \\ \cline{3-5} 
				\multicolumn{1}{|c|}{}                                 & \multicolumn{1}{c|}{}                                      & \begin{tabular}[c]{@{}c@{}}Adaptive-density \\ Counter~\cite{DBLP:conf/soda/Dwork10}\end{tabular}                    & \XSolidBrush                                                                                      & \XSolidBrush                      \\ \cline{3-5} 
				\multicolumn{1}{|c|}{}                                 & \multicolumn{1}{c|}{}                                      & Decayed Privacy~\cite{DBLP:conf/icdt/BolotFMNT13}                                                                    & \XSolidBrush                                                                                      & \XSolidBrush                      \\ \cline{3-5} 
				\multicolumn{1}{|c|}{}                                 & \multicolumn{1}{c|}{\multirow{-5}{*}{event-level privacy}} & PeGaSus~\cite{DBLP:conf/ccs/ChenMHM17}                                                                               & \XSolidBrush                                                                                      & \XSolidBrush                      \\ \cline{2-5} 
				\multicolumn{1}{|c|}{}                                 & \multicolumn{1}{c|}{}                                      & FAST~\cite{DBLP:journals/tkde/FanX14}                                                                                & \CheckmarkBold                                                                                    & \XSolidBrush                      \\ \cline{3-5} 
				\multicolumn{1}{|c|}{}                                 & \multicolumn{1}{c|}{\multirow{-1}{*}{user-level privacy}}  & \begin{tabular}[c]{@{}c@{}}Private heterogeneous \\ mean estimation~\cite{DBLP:conf/nips/CummingsFMT22}\end{tabular} & \CheckmarkBold                                                                                    & \XSolidBrush                      \\ \cline{3-5} 
				\multicolumn{1}{|c|}{}                                 & \multicolumn{1}{c|}{}                                      & $\alpha$-DP$_{\entity{T}}$~\cite{DBLP:journals/tkde/CaoYXX19}                                                                               & \CheckmarkBold                                                                                    & \XSolidBrush                      \\ \cline{2-5} 
				\multicolumn{1}{|c|}{}                                 & \multicolumn{1}{c|}{}                                      & BD \& BA~\cite{DBLP:journals/pvldb/KellarisPXP14}                                                                    & \CheckmarkBold                                                                                    & \XSolidBrush                      \\ \cline{3-5} 
				\multicolumn{1}{|c|}{\multirow{-9}{*}{Centralized DP}} & \multicolumn{1}{c|}{\multirow{-2}{*}{$w$-event privacy}}     & ResuseDP~\cite{DBLP:conf/infocom/WangZLWQR16}                                                                        & \CheckmarkBold                                                                                    & \XSolidBrush                      \\ \hline
				\multicolumn{1}{|c|}{}                                 & \multicolumn{1}{c|}{}                                      & RAPPOR~\cite{DBLP:conf/ccs/ErlingssonPK14}                                                                           & \XSolidBrush                                                                                      & \XSolidBrush                      \\ \cline{3-5} 
				\multicolumn{1}{|c|}{}                                 & \multicolumn{1}{c|}{\multirow{-2}{*}{event-level privacy}} & ToPL~\cite{DBLP:conf/ccs/0001C0SC0LJ21}                                                                              & \XSolidBrush                                                                                      & \XSolidBrush                      \\ \cline{2-5} 
				\multicolumn{1}{|c|}{}                                 & \multicolumn{1}{c|}{}                                      & Shuffling LDP~\cite{DBLP:conf/soda/ErlingssonFMRTT19}                                                                & \CheckmarkBold                                                                                    & \XSolidBrush                      \\ \cline{3-5} 
				\multicolumn{1}{|c|}{}                                 & \multicolumn{1}{c|}{}                                      & THRESH~\cite{DBLP:conf/nips/JosephRUW18}                                                                             & \CheckmarkBold                                                                                    & \XSolidBrush                      \\ \cline{3-5} 
				\multicolumn{1}{|c|}{}                                 & \multicolumn{1}{c|}{\multirow{-3}{*}{user-level privacy}}  & CGM~\cite{DBLP:journals/pvldb/BaoYXD21}                                                                              & \CheckmarkBold                                                                                    & \XSolidBrush                      \\ \cline{2-5} 
				\multicolumn{1}{|c|}{\multirow{-6}{*}{Local DP}}       & \multicolumn{1}{c|}{$w$-event privacy}                       & LDP-IDS~\cite{DBLP:conf/sigmod/RenSYYZX22}                                                                           & \CheckmarkBold                                                                                    & \XSolidBrush                      \\ \hline
				\multicolumn{2}{|c|}{Item heterogeneous}                                                                            & HDP~\cite{DBLP:journals/jpc/AlagganGK16}                                                                             & \XSolidBrush                                                                                      & \XSolidBrush                      \\ \hline
				\multicolumn{2}{|c|}{}                                                                                              & PDP~\cite{DBLP:conf/icde/JorgensenYC15}                                                                              & \XSolidBrush                                                                                      & \CheckmarkBold                    \\ \cline{3-5} 
				\multicolumn{2}{|c|}{}                                                                                              & OSDP~\cite{DBLP:conf/icde/KotsogiannisDHM20}                                                                         & \XSolidBrush                                                                                      & \CheckmarkBold                    \\ \cline{3-5} 
				\multicolumn{2}{|c|}{}                                                                                              & Geo-I~\cite{DBLP:conf/ccs/AndresBCP13}                                                                               & \XSolidBrush                                                                                      & \CheckmarkBold                    \\ \cline{3-5} 
				\multicolumn{2}{|c|}{}                                                                                              & PWSM, VPDM~\cite{DBLP:journals/tmc/WangHLWWYQ19}                                                                     & \XSolidBrush                                                                                      & \CheckmarkBold                    \\ \cline{3-5} 
				\multicolumn{2}{|c|}{}                                                                                              & PUCE, PGT~\cite{DBLP:conf/icde/DuCZX00F23}                                                                           & \XSolidBrush                                                                                      & \CheckmarkBold                    \\ \cline{3-5} 
				\multicolumn{2}{|c|}{\multirow{-6}{*}{Record heterogenous}}                                                         & PFA, PFA+~\cite{DBLP:journals/pvldb/LiuLXLM21}                                                                       & \XSolidBrush                                                                                      & \CheckmarkBold                    \\ \hline
				\multicolumn{3}{|c|}{ \textbf{Our mechanisms}}                                                                                                                                                                                                       & \CheckmarkBold                                                                                    & \CheckmarkBold                    \\ \hline
			\end{tabular}
		}
\end{table}

\section{Problem Settings}\label{pro_def}
In this section, we first introduce key concepts, including data streams. Next, we present the new definition of \privacyDefineTotalName{}. Finally, we provide the problem definition: \problemDefineTotalName{} (\problemDefineSimpleName{}). Table~\ref{tbl:notations} outlines the notations used throughout this paper.

\subsection{Data Stream}
\begin{definition}(Data Stream~\cite{DBLP:journals/pvldb/KellarisPXP14}).
	Let $D_t\in \algvar{D}$ be a database with $d$ columns and $n$ rows (each row representing a user) at $t$-th time slot.
	The infinite database sequence $S=[D_1,D_2, \ldots]$ is called a data stream, where $S[t]$ is the $t$-th element in $S$ (i.e., $S[t]=D_t$).
\end{definition}

For any data stream $S$, its substream between time slot $t_l$ and $t_r$ (where $t_l < t_r$) is noted as $S_{t_l,t_r}=[D_{t_l},D_{t_l+1}, \ldots, D_{t_r}]$.
For $t_l=1$, we denote $S_t=[D_1,D_2, \ldots, D_t]$ and call it the \textit{stream prefix} of $S$.

\begin{definition}(Data Stream Count Publishing~\cite{DBLP:journals/pvldb/KellarisPXP14}).\label{def:count_release}
	Let $Q: \algvar{D}\to \constvar{R}^d$ be a count query. 
	Then, $Q(S[t])=Q(D_t)=\vectorfont{c}_t$ is the count data to be published at time slot $t$, where $\vectorfont{c}_t(j)$ represents the count of the $j$-th column of $D_t$.
	The infinite count data series $[\vectorfont{c}_1, \vectorfont{c}_2,\ldots]$ is called a data stream count publishing.
\end{definition}

\subsection{\privacyDefineTotalName{}}
\begin{definition}($w$-Neighboring Stream Prefixes~\cite{DBLP:journals/tissec/ChanSS11,DBLP:journals/pvldb/KellarisPXP14}).
	Let $w$ be a positive integer, two stream prefixes $S_t$, $S'_t$ are $w$-neighboring (i.e., $S_t \sim_w S'_t$), if 
	\begin{enumerate} [leftmargin=*]
		\item for each $S_t[k]$, $S'_t[k]$ such that $k\leq t$ and $S_t[k]\neq S'_t[k]$, it holds that $S_t[k]$ and $S'_t[k]$ are neighboring~\cite{DBLP:journals/pvldb/KellarisPXP14} in centralized DP, and
		\item for each $S_t[k_1]$, $S_t[k_2]$, $S'_t[k_1]$, $S'_t[k_2]$ with $k_1<k_2$, $S_t[k_1]\neq S'_t[k_1]$ and $S_t[k_2]\neq S'_t[k_2]$, it holds that $k_2-k_1+1\leq w$.
	\end{enumerate}
\end{definition}

\begin{definition}(\privacyDefineTotalName{}, \privacyDefineSimpleName{})\label{Def_EPDP}.
	Let $\entity{M}$ be a mechanism that takes a stream prefix of arbitrary size as input.
	Let $\entity{O}$ be the set of all possible outputs of $\entity{M}$.
	Given a universe of users $U=\{u_1,u_2,\ldots ,u_{|U|}\}$, 
	then $\entity{M}$ is $(\vectorfont{w},\vectorfont{\epsilon})$-EPDP if $\forall O\subseteq\entity{O}$, $\forall w_i\in\vectorfont{w}$ and $\forall S_t, S'_t$ satisfying $S_t \sim_{w_i} S'_t$, it holds that
	{\scriptsize$$\Pr[M(S_t)\in O]\leq e^{\epsilon_i} \Pr[M(S'_t)\in O],$$}
	where $u_i\in U$ requires $w_i$-event privacy and $\epsilon_i$ denotes $u_i$'s privacy budget requirement within $w_i$ continuous events.
\end{definition}
We denote the pair $( w_i, \epsilon_i)$ as $u_i$'s \textit{privacy requirement}. Specifically, when $w_i=1$, it collapses as $\vectorfont{\epsilon}$-Personalized Differential Privacy ($\vectorfont{\epsilon}$-PDP)~\cite{DBLP:conf/icde/JorgensenYC15}. Besides, when $(w_i, \epsilon_i)$ becomes constant (i.e., $(w,\epsilon)$), it collapses as $w$-Event Privacy\cite{DBLP:journals/pvldb/KellarisPXP14,DBLP:conf/sigmod/RenSYYZX22}.

\subsection{Definition of \problemDefineSimpleName{}}
Given a data stream $S$, the server aims to obtain the data stream count publishing, denoted as $\vectorfont{c}=[\vectorfont{c}_1, \vectorfont{c}_2, \ldots]$.
However, to protect user privacy, the server only receives the obfuscated version of the data stream, $S'$, and subsequently publishes the estimated data stream count (i.e., estimation count), denoted as $\vectorfont{r}=[\vectorfont{r}_1, \vectorfont{r}_2, \ldots]$. 
We now define the problem as follows.

\begin{table}[t!]
	\caption{Notations.}
	\label{tbl:notations}
	\centering
	\scalebox{1}{
		\begin{tabular}{c|c}
			\hline
			\textbf{Notations}        & \textbf{Description}                                       \\ \hline
			$\entity{D}$             & the database domain                                        \\ 
			$D_t$             & a database at time slot $t$                                        \\ 
			$S$             & a data stream                                        \\ 
			$u_i$			& the $i$-th user			\\ 
			$\vectorfont{x}_{i,t}$       & $u_i$'s data at time slot $t$                                        \\ 
			$\vectorfont{c}_t$             & a real statistical histogram at time slot $t$                                     \\ 
			$\vectorfont{r}_i$             & an estimation statistic histogram at time slot $t$                                      \\ 
			$w_{i}$             & $u_i$'s privacy window size                                      \\ 
			$\epsilon_{i}$             & $u_i$'s privacy budget                                  \\ \hline
		\end{tabular}
	}
\end{table}

\begin{definition} (\problemDefineSimpleName{}).\label{Problem_Def} 
	Given a user set $U=\{u_1, u_2,..., u_n\}$,  each $u_i$ holds a privacy requirement pair $(w_{i}, \epsilon_{i})$ and a series data $\vectorfont{x}_{i,t}$ for $t\in\constvar{N}^+$. All the $\vectorfont{x}_{i,t}$ for $u_i\in U$ at time slot $t$ form $D_t$. All the $D_t$ form an infinite data stream $S=[D_1, D_2, \ldots]$. \problemDefineSimpleName{} is to publish an obfuscated histogram $\vectorfont{r}=[\vectorfont{r}_1, \vectorfont{r}_2, \ldots]$ of $S$ at each time slot $t$ achieving $(\vectorfont{w},\vectorfont{\epsilon})$-EPDP with the error between $\vectorfont{r}$ and $\vectorfont{c}$ minimized, namely $\forall T\in\constvar{N}^+$:
	\begin{equation}\notag
		\begin{aligned}
			\min_{\epsilon_\theta}\;\; & \sum\limits_{t\in[T]} \|\vectorfont{r}_t-\vectorfont{c}_t\|_2^2 &\\
			s.t.\;\; & \sum_{k=\min{(t-w_{i}+1,1)}}^{t}\epsilon_{i,k}\leq \epsilon_{i}, \;\;\;\; \forall u_i\in U & 
		\end{aligned}
	\end{equation}
	\noindent where $\epsilon_{i,k}$ indicates the privacy budget at time slot $k$.
\end{definition}

\section{\solutionATotalName{}}\label{basic_method}
In this section, we analyze the errors in reporting obfuscated data stream counts and introduce Optimal Budget Selection (OBS) method  to minimize these errors. 
We then propose \solutionATotalName{} (\solutionA{}) to address \problemDefineSimpleName{}. 
The core idea of \solutionA{} is to select the optimal privacy budget $\epsilon_{opt}(t)$ and report obfuscated count results that satisfy $\epsilon_{opt}(t)$-DP at each time slot $t$.

\subsection{Reporting Errors}
For each time slot, we use the Sampling Mechanism (SM)~\cite{DBLP:conf/icde/JorgensenYC15} to satisfy all users' privacy requirements (i.e., achieving $\vectorfont{\epsilon}$-PDP). 
The SM consists of two steps: \textit{sample} ($SM_{s}$) and \textit{disturb} ($SM_{d}$). 
In $SM_s$, the server first sets a privacy budget threshold $\epsilon_\theta$, then constructs a sampling subset $D_S$ by appending items $x_i$ with $\epsilon_i \geq \epsilon_\theta$ to $D_S$, while sampling other items $x_j$ with $\epsilon_j < \epsilon_\theta$ at a probability of $p_j = \frac{e^{\epsilon_j} - 1}{e^{\epsilon_\theta} - 1}$. In $SM_d$, the server employs a DP mechanism (e.g., the Laplace Mechanism) to report an obfuscated result that achieves $\epsilon_\theta$-DP.

SM introduces two types of errors: \textit{sampling error} and \textit{noise error}.
At each time slot $t$, given a privacy budget threshold $\epsilon_\theta$, the data reporting error is $err(\epsilon_\theta)=err_{s}(\epsilon_\theta)+err_{dp}(\epsilon_\theta)$. 
Here, $err_{s}(\epsilon_\theta)$ represents the \textit{sampling error} from sampling users with privacy budgets below $\epsilon_\theta$, while $err_{dp}(\epsilon_\theta)$ represents the \textit{noise error} from adding noise to achieve $\epsilon_\theta$-DP. 
Next, we introduce these sampling and noise errors in detail.

\begin{definition}(Sampling Error~\cite{DBLP:conf/icde/JorgensenYC15}).
	Given a privacy budget threshold $\epsilon_\theta$ and $m$ kinds of privacy budget requirements $\tilde{\epsilon}_1, \tilde{\epsilon}_2,\ldots , \tilde{\epsilon}_m$ from $n$ users with $\tilde{\epsilon}_i<\tilde{\epsilon}_j$ for $i<j$ and $i, j\in[m]$ where $\tilde{\epsilon}_i$ is declared by $n_i$ users~$(\sum\limits_{i=1}^{m}n_i=n)$,
	the sampling error $err_{s}(\epsilon_\theta)$ is defined as
	\begin{equation}\label{eq:sampling_error}
		\begin{aligned}
			err_{s}(\epsilon_\theta) &= Var(count(\vectorfont{r}_t)) + bias(\vectorfont{r}_t)^2 = \sum\limits_{\tilde{\epsilon}_i<\epsilon_\theta}n_i p_i(1-p_i) + \left(\sum\limits_{\tilde{\epsilon}_i<\epsilon_\theta}n_i(1-p_i)\right)^2,
		\end{aligned}
	\end{equation}
	where $p_i=\frac{e^{\tilde{\epsilon}_i}-1}{e^{\epsilon_\theta}-1}$.
\end{definition}
\begin{definition}(Noise Error).
	The noise error $err_{dp}(\epsilon_{\theta})$ is defined as the error of the Laplace mechanism, namely, 
	\begin{equation}\label{eq:noise_error}
		\begin{aligned}
			err_{dp}(\epsilon_{\theta}) &= \frac{2}{\epsilon_\theta^2}.
		\end{aligned}
	\end{equation}
\end{definition}
Various metrics exist to measure the errors of Laplace mechanisms for noise error, including variance \cite{DBLP:conf/icde/JorgensenYC15,DBLP:conf/sigmod/RenSYYZX22}, scale \cite{DBLP:journals/pvldb/KellarisPXP14,DBLP:journals/fttcs/DworkR14}, and $(\alpha,\beta)$-usefulness \cite{DBLP:journals/fttcs/DworkR14,DBLP:conf/stoc/BlumLR08}. In this work, we employ variance as our metric~\cite{DBLP:conf/icde/JorgensenYC15}.

Based on Equations~\eqref{eq:sampling_error} and~\eqref{eq:noise_error}, we can observe that $err_{s}$ depends  on $n_i$, $\tilde{\epsilon}_i$ and $\epsilon_\theta$, and is independent of $\vectorfont{r}_t$. 
Similarly, $err_{dp}$ depends on $\epsilon_\theta$, and is independent of $\vectorfont{r}_t$.

\subsection{Optimal Budget Selection} 
Given the privacy budget requirements $(\epsilon_{1,t}, \epsilon_{2,t}, \ldots, \epsilon_{n,t})$ of $n$ users, we can determine the frequency of each privacy budget requirement and select the optimal $\epsilon_\theta$ that minimizes the data reporting error $err$. This process is detailed in Algorithm~\ref{alg:OPT_B_C}.

Taking $n$ privacy budgets as input, the Optimal Budget Selection (OBS) algorithm counts the different privacy budgets.
Assume there are $\tilde{n}$ distinct privacy budgets, with $n_k$ users requiring $\tilde{\epsilon}_k$ for $k\in[\tilde{n}]$.
Let $\tilde{\epsilon}$ be the set of different privacy budget and $N$ be their corresponding frequencies (Lines \ref{e_set}-\ref{n_set}).
Then, OBS finds the minimum reporting error $err_{min}$ (lines \ref{start_search}-\ref{end_search}).
Specifically, it iterates over all $\tilde{\epsilon}_k\in\tilde{\vectorfont{\epsilon}}$ and selects the value $\tilde{\epsilon}_k$ with the smallest reporing error $err=err_{s}(\tilde{\epsilon}_k)+err_{dp}(\tilde{\epsilon}_k)$. 
\begin{algorithm}[t!]\small
	\caption{Optimal Budget Selection (OBS)}
	\label{alg:OPT_B_C}
	\DontPrintSemicolon
	\KwIn{personalized privacy budget set $\vectorfont{\epsilon}=(\epsilon_1,\epsilon_2,\ldots , \epsilon_n)$}
	\KwOut{$\epsilon_{opt}$, $err_{min}$}
	Set $\tilde{\vectorfont{\epsilon}}=(\tilde{\epsilon}_1,\tilde{\epsilon}_2,\ldots ,\tilde{\epsilon}_{\tilde{n}})$ as the set of different $\epsilon\in\vectorfont{\epsilon}$;\\ \label{e_set}
	Set $N=(n_1,n_2,\ldots ,n_{\tilde{n}})$ as the corresponding frequency of $\tilde{\epsilon}_k\in\tilde{\vectorfont{\epsilon}}$;\\ \label{n_set}
	Initialize $err_{min}$ as the upper bound of error value;\\
	\For{$\tilde{\epsilon}_k\in\tilde{\vectorfont{\epsilon}}$}{\label{start_search}
		$err=err_{s}(\tilde{\epsilon}_k)+err_{dp}(\tilde{\epsilon}_k)$;\\
		\If{$err<err_{min}$}{
			$err_{min}=err$;\\
			$\epsilon_{opt}$ as $\tilde{\epsilon}_k$;\\ 
		}
	}\label{end_search}
	\Return{$\epsilon_{opt}$, $err_{min}$}\;
\end{algorithm}

\begin{example}[Running Example of the OBS Algorithm]
	Suppose we have $10$ privacy budgets as input: $\vectorfont{\epsilon} = ($0.1, 0.4, 0.4, 0.1, 0.4, 0.4, 0.8, 0.8, 0.8, 0.4$)$.
	OBS first determines $\tilde{\vectorfont{\epsilon}}=($0.1, 0.4, 0.8$)$, $\tilde{n}=|\tilde{\vectorfont{\epsilon}}|=3$, and $N=($2, 5, 3$)$.
	Based on these statistics, OBS iterates through the $3$ privacy budgets in $\tilde{\vectorfont{\epsilon}}$ and calculates the errors:
	{\small$err_1=0+\frac{2}{0.1^2}=200$}, {\small$err_2=2\times\frac{e^{0.1}-1}{e^{0.4}-1}\times(1-\frac{e^{0.1}-1}{e^{0.4}-1})+(2\times(1-\frac{e^{0.1}-1}{e^{0.4}-1}))^2+\frac{2}{0.4^2}=15.31$} and {\small$err_3=2\times\frac{e^{0.1}-1}{e^{0.8}-1}\times(1-\frac{e^{0.1}-1}{e^{0.8}-1})+5\times\frac{e^{0.4}-1}{e^{0.8}-1}\times(1-\frac{e^{0.4}-1}{e^{0.8}-1})+(2\times(1-\frac{e^{0.1}-1}{e^{0.8}-1})+5\times(1-\frac{e^{0.4}-1}{e^{0.8}-1}))^2+\frac{2}{0.8^2}=89.74$}.
	Finally, OBS returns $0.4$ with the minimal error $15.31$.
\end{example}

\subsection{\solutionATotalName{}}
Budget division~\cite{DBLP:journals/pvldb/KellarisPXP14,DBLP:conf/sigmod/RenSYYZX22} is a traditional framework for publishing private stream data under $w$-event privacy.
It comprises two basic methods, namely \textit{Uniform} and \textit{Sampling} and two adaptive methods, namely \textit{\solutionCMPATotalName{}} (\solutionCMPA{}) and \textit{\solutionCMPBTotalName{}} (\solutionCMPB{}).
The adaptive methods leverage the stream's variation tendency, resulting in more accurate obfuscated estimations.

In this subsection, we extend the adaptive budget division framework to a personalized context and introduce our \solutionATotalName{} (\solutionA{}). 
Based on \solutionA{}, we propose two methods: \solutionMethodATotalName{} (\solutionMethodA{}) and \solutionMethodBTotalName{} (\solutionMethodB{}). 

In real applications, users must specify their privacy budgets and window sizes. System administrators first define a discretized privacy budget range (e.g., \{$0.1, 0.5, 0.9$\}) and a window size range (e.g., \{$40, 80, 120$\}). Then, they map ascending privacy budget values to descending privacy budget levels (e.g., High, Medium, Low) and ascending window size values to ascending window size levels (e.g., Small, Medium, Large). Users can then select both a privacy budget level and a window size level based on their needs and past experience. After users submit these selections, the server converts them into the corresponding values.

As shown in Algorithm~\ref{alg:PWSM}, the \solutionA{} algorithm takes three inputs: the historical estimation $His$, personalized privacy budget $\vectorfont{\epsilon}$, and personalized window size set $\vectorfont{w}$. Both $\vectorfont{\epsilon}$ and $\vectorfont{w}$ are fixed values collected from all users during system initialization.
\solutionA{} first calculates all users' privacy budget resources $\vectorfont{\epsilon}_t$ at the current time slot $t$ to satisfy \privacyDefineSimpleName{} (line~\ref{currentPB}).
It then divides $\vectorfont{\epsilon}_t$ into two parts: $\vectorfont{\epsilon}_t^{(1)}$ and $\vectorfont{\epsilon}_t^{(2)}$ (line~\ref{dividePB}).
Using $\vectorfont{\epsilon}_t^{(1)}$, \solutionA{} calculates the dissimilarity $dis$ between the current count value and the last published one by invoking the SM method~\cite{DBLP:conf/icde/JorgensenYC15}  (line~\ref{calculate:dis}).
Next, it sets the change threshold as the reporting error $err$ calculated with $\vectorfont{\epsilon}_t^{(2)}$ (line~\ref{calculate:err}).
Finally, \solutionA{} adaptively decides whether to publish a new obfuscated estimation or skip (i.e., use the last published one to approximate) by comparing $dis$ to $\sqrt{err}$ (lines~\ref{begin:judge_report}-\ref{end:judge_report}). 

\begin{algorithm}[t!]\small
	\caption{\solutionA{}}
	\label{alg:PWSM}
	\DontPrintSemicolon
	\KwIn{historical estimation $His$,  privacy requirement ($\vectorfont{w}$, $\vectorfont{\epsilon}$)}
	\KwOut{$\vectorfont{r}$}
	Get the current privacy budgets $\vectorfont{\epsilon}_t$ of all users as $\vectorfont{\epsilon}$ and $\vectorfont{w}$;\\ \label{currentPB}
	Divide $\vectorfont{\epsilon}_t$ into two parts $\vectorfont{\epsilon}^{(1)}_t$ and $\vectorfont{\epsilon}^{(2)}_t$ satisfying $\vectorfont{\epsilon}_t=\vectorfont{\epsilon}^{(1)}_t+\vectorfont{\epsilon}^{(2)}_t$;\\\label{dividePB}
	Calculate dissimilarity $dis$ between current estimation and the last estimation by $SM(\vectorfont{\epsilon}^{(1)}_t)$;\label{calculate:dis}\\
	Calculate the reporting error $err$ of current estimation by $OBS(\vectorfont{\epsilon}^{(2)}_t)$;\\ \label{calculate:err}
	\If{$dis>\sqrt{err}$}{\label{begin:judge_report}
		Calculate current estimation $\vectorfont{r}$ by $SM(\vectorfont{\epsilon}^{(2)}_t)$;\\
	} \Else {
		Set current estimation $\vectorfont{r}$ as the last reporting value;\\
	}
	\Return{$\vectorfont{r}$};\label{end:judge_report}
\end{algorithm}

To determine whether to publish a new obfuscated estimation or skip, we need to introduce a judgment measure called the \textit{personalized private dissimilarity measure}.

\textbf{Personalized Private Dissimilarity Measure.} The personalized dissimilarity measure $dis^*$ is defined as the absolute error between the true statistic $\tilde{\vectorfont{c}}_t$ under $SM_s$ (i.e., the \textit{sample} step of SM) at current time slot $t$ and the last publishing $\vectorfont{r}_l$, namely,
\begin{equation}\notag
	dis^* = \frac{1}{d}\sum\limits_{k=1}^{d}|\tilde{\vectorfont{c}}_t[k] - \vectorfont{r}_l[k]|.
\end{equation}
Our goal is to privately obtain the personalized dissimilarity $dis^*$ using the optimal privacy budget $\epsilon_{opt}$ calculated through $OBS$ algorithm. 
The personalized private dissimilarity measure $dis$ is defined as:
\begin{equation}\notag
	dis = dis^*+Lap\left(\frac{1}{d\cdot\epsilon_{opt}}\right),
\end{equation}
where $Lap$ represents the Laplace mechanism.

\subsection{\solutionMethodATotalName{} and \solutionMethodBTotalName{}}
We first introduce some notations to further clarify \solutionA{} in Algorithm~\ref{alg:PWSM}, and then propose two solutions to implement \solutionA{} in different scenarios.

\begin{figure}[t!]
	\centering
	\includegraphics[width=0.40\textwidth]{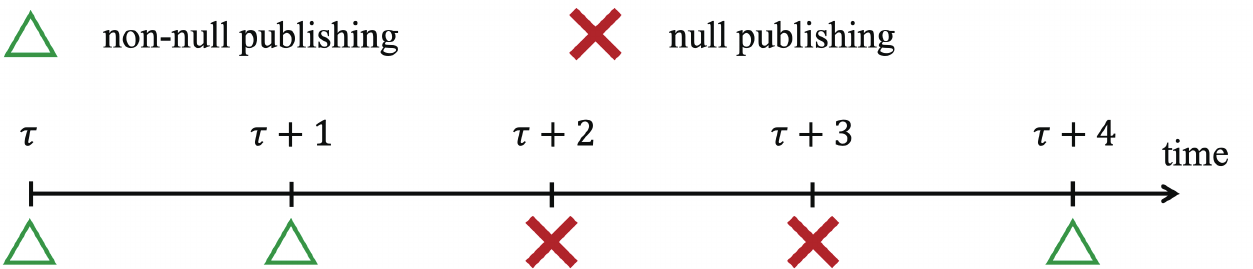}
	\caption{A non-null publishing example.}\label{fig:non-null-publishing}
\end{figure}

\noindent\textbf{Basic notations.} 
For a sequence of publications $(\vectorfont{r}_1,\vectorfont{r}_2,...,\vectorfont{r}_t)$ of length $t$, we define a \textit{null publishing} as an approximation value and \textit{non-null publishing} as a new value. For any time slot $2\leq\tau\leq t$, we refer to $\vectorfont{r}_{\tau-1}$ as the last reporting value (or last publishing) of time slot $\tau$. In the sequence $(\vectorfont{r}_1,\vectorfont{r}_2,...,\vectorfont{r}_{\tau})$, we define the most recent non-null publishing $\vectorfont{r}_l$ where $l<\tau$ as the last non-null publishing.
For example in Figure~~\ref{fig:non-null-publishing}, the publications at time slots $\tau, \tau+1, \tau+4$ are non-null publishing, while those at $\tau+2$ and $\tau+3$ are null publishing.

\begin{figure}[t!]
	\centering
	\includegraphics[width=0.40\textwidth]{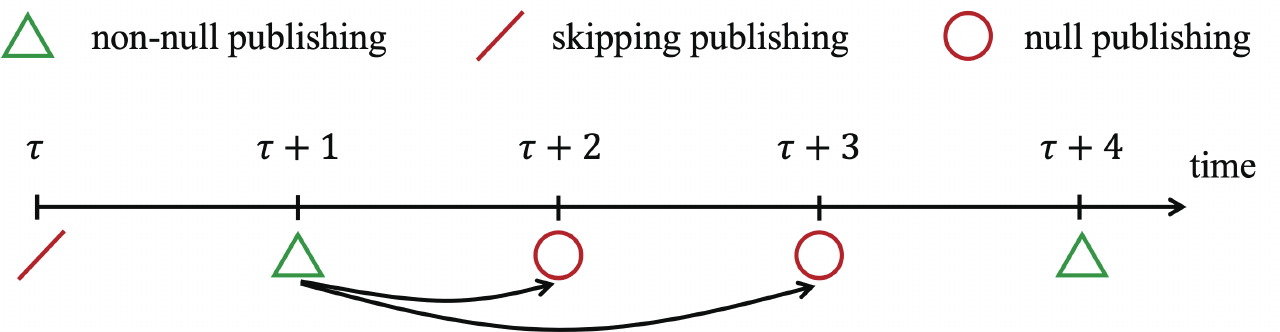}
	\caption{A nullified time slot example.}\label{fig:nullified-publishing}
\end{figure}
Given a privacy budget $\epsilon$ with window size $w$, 
the budget share $\bar{\epsilon}=\epsilon/w$ represents the smallest unit of privacy budget. 
The target is to maintain the total budget below $\epsilon$ within any $w$ window while maximizing utility. 
Assume publishing new obfuscated data costs $x$ budget shares ($x>1$), the following $x-1$ time slots use approximated values from their last reporting values. 
We refer to these $x-1$ time slots as nullified time slots.
For example, in Figure~\ref{fig:nullified-publishing}, with $\epsilon=4$ and $w=4$, $\bar{\epsilon}=1$. 
When time slot $\tau+1$ uses $3$ shares, the time slots $\tau+2$ and $\tau+3$ become nullified.

\begin{algorithm}[t!]\small
	\caption{Dissimilarity Calculation (DC)}
	\label{alg:DC}
	\DontPrintSemicolon
	\KwIn{$D_t$, current personalized privacy budget list  $\vectorfont{\epsilon}_t$, historical data publication $(\vectorfont{r}_1,\vectorfont{r}_2,\ldots, \vectorfont{r}_{t-1})$}
	\KwOut{$\vectorfont{r}_t$}
	$\epsilon_{opt}$= OBS($\vectorfont{\epsilon}_{t}$) ;\\ \label{mt1_start2}
	$\tilde{D}_t=SM_s(D_t,\vectorfont{\epsilon}_{t},\epsilon_{opt})$;\\ \label{pdp_start}
	$\tilde{\vectorfont{c}}_t=Q(\tilde{D}_t)$;\\
	Get the last non-null publishing $\vectorfont{r}_l$ from $(\vectorfont{r}_1,\vectorfont{r}_2,\ldots , \vectorfont{r}_{t-1})$;\\
	\Return{$dis=\frac{1}{d}\sum_{j=1}^{d}|\tilde{\vectorfont{c}}_t[j]-\vectorfont{r}_l[j]|+Lap(1/(d\cdot\epsilon_{opt}))$};\\ \label{pdp_end}
\end{algorithm}

\begin{algorithm}[t!]\small
	\caption{Personalized Budget Distribution}
	\label{alg:PBD}
	\DontPrintSemicolon
	\KwIn{$D_t$, privacy requirement ($\vectorfont{w}$, $\vectorfont{\epsilon}$), historical data publication $(\vectorfont{r}_1,\vectorfont{r}_2,\ldots, \vectorfont{r}_{t-1})$}
	\KwOut{$\vectorfont{r}_t$}
	Get the current window average budget $\bar{\epsilon}_i=\epsilon_i/w_i$ for each $i\in[n]$;\\
	$\vectorfont{\epsilon}_{t}^{(1)}=(\bar{\epsilon}_1/2,\bar{\epsilon}_2/2,\ldots ,\bar{\epsilon}_n/2)$;\\ \label{mt1_start}
	Get dissimilarity $dis$ by DC($D_t, \vectorfont{\epsilon}_{t}^{(1)}, \vectorfont{r}_1,...,\vectorfont{r}_{t-1}$) in Algorithm~\ref{alg:DC};\\ \label{mt1_end}
	$\epsilon_{rm,i}=\epsilon_i/2-\sum_{k=t-w_i+1}^{t-1}\epsilon_{i,k}^{(2)}$;\\ \label{pbd_mt2_start}
	$\vectorfont{\epsilon}_{t}^{(2)}=(\epsilon_{rm,1}/2,\epsilon_{rm,2}/2,\ldots ,\epsilon_{rm,n}/2)$;\\
	$\epsilon_{opt}^{(2)}$, $err_{opt}^{(2)}$ = OBS($\vectorfont{\epsilon}_{t}^{(2)}$);\\
	\label{pbd_mt2_mid} 
	\If{$dis>\sqrt{err_{opt}^{(2)}}$}{
		$\tilde{D}_t^{(2)}=SM_s(D_t,\vectorfont{\epsilon}_{t}^{(2)},\epsilon_{opt}^{(2)})$;\\ \label{new_pub_start}
		$\tilde{\vectorfont{c}}_t^{(2)}=Q(\tilde{D}_t^{(2)})$;\\
		\Return{$\vectorfont{r}_t=SM_d(\tilde{\vectorfont{c}}_t^{(2)},\epsilon_{opt}^{(2)})$}; \label{new_pub_end}
	} \Else {
		\Return{$\vectorfont{r}_t=\vectorfont{r}_{t-1}$};
	}\label{mt2_end}
\end{algorithm}
\noindent\textbf{\solutionMethodATotalName{} (\solutionMethodA{}).} As shown in Algorithm~\ref{alg:PBD}, \solutionMethodA{}  inputs the current user data $D_t$, all users' privacy requirements, and historical data publication.
The privacy budget $\epsilon_i$ of each user $u_i$ is divided into two parts: 1) calculate the dissimilarity between the current data distribution and the last published obfuscated data distribution (denoted as Part$_{DC}$) (Lines~\ref{mt1_start}-\ref{mt1_end}); 2) calculate the new obfuscated publication at the current time slot (denoted as Part$_{NOP}$) (Lines~\ref{pbd_mt2_start}-\ref{pbd_mt2_mid} and Lines~\ref{new_pub_start}-\ref{new_pub_end}).

In Part$_{DC}$, we allocate half of the average privacy budget per time slot for dissimilarity calculation (i.e., $\frac{\epsilon_i}{2w_i}$ for $u_i$). 
The process then calls the Dissimilarity Calculation (Algorithm~\ref{alg:DC}) to determine the dissimilarity. Within Algorithm~\ref{alg:DC}, the OBS algorithm selects the optimal budget threshold $\epsilon_{opt}$. Finally, it uses the SM~\cite{DBLP:conf/icde/JorgensenYC15} to compute the dissimilarity $dis$ (lines \ref{pdp_start}-\ref{pdp_end}).

In Part$_{NOP}$, we first calculate the remaining privacy budget $\epsilon_{rm,i}$ for each $u_i$.
We then set the publication privacy budget for each $u_i$ to half of $\epsilon_{rm,i}$.
Similar to dissimilarity calculation, we use the OBS algorithm to determine the optimal privacy budget $\epsilon_{opt}^{(2)}$ and its corresponding error $err_{opt}^{(2)}$.
At this point, we have obtained two measurements: the dissimilarity $dis$ and the square root of error $\sqrt{err_{opt}^{(2)}}$.
We compare these two measurements to determine whether to publish a new obfuscated statistic result or approximate the current result with the last publication.
If the $dis$ is greater than $\sqrt{err_{opt}^{(2)}}$, it indicates that the difference between the current data and the last published data exceeds the error of noise, then we republish a new obfuscated statistic result.
Otherwise, we take the last published result instead.

We illustrate the process of Personalized Budget Distribution with an example as follows:
\begin{example}\label{example_for_PBD}
	Suppose there are $n$ users distributed across $5$ locations, forming a complete graph.
	Figure~\ref{data_example} illustrates the privacy budget requirements, window size requirements and locations for the first three users across time slots $1$ to $5$.
	\begin{figure}[t!]
		\centering
		\includegraphics[width=0.33\textwidth]{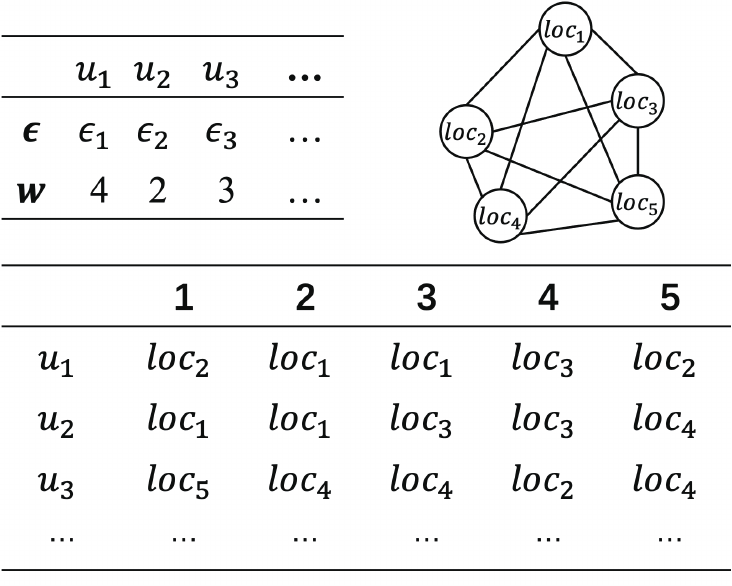}
		\caption{An Information example for \solutionMethodA{}.}\label{data_example}
	\end{figure}
	Figure~\ref{PBD_example} demonstrates the estimation process of \solutionMethodA{}.
	The total privacy budget for each user $u_i$ is evenly split into two parts, each containing $\epsilon_i/2$. 
	The first part is allocated for dissimilarity calculation, while the second is for publication noise calculation.
	For instance, $\epsilon_1$ is divided into $\vectorfont{\epsilon}_1^{(1)}(u_1)=\epsilon_1/2$ and $\vectorfont{\epsilon}_1^{(2)}(u_1)=\epsilon_1/2$.
	We compute the privacy budget usage $\epsilon_{i,t}^{(1)}$ for dissimilarity and $\epsilon_{i,t}^{(2)}$ for noise statistic publication for each user at each time slot. 
	These values are recorded in an $n\times 2$  matrix at each time slot in Figure~\ref{PBD_example}.
	Using $u_1$ as an example, $\epsilon_{1,t}^{(1)} = \vectorfont{\epsilon}_1^{(1)}(u_1)/w_1=\epsilon_1/8$. 
	At time slot $1$, $\epsilon_{1,1}^{(2)}=\vectorfont{\epsilon}_1^{(2)}(u_1)/2=\epsilon_1/4$.
	The algorithm calculates the dissimilarity $dis$ at time slot $1$ using all $\epsilon_{i,1}^{(1)}$, and the error $err_{opt}^{(2)}$ using all $\epsilon_{i,1}^{(2)}$.
	Assume $dis>\sqrt{err_{opt}^{(2)}}$, then a new obfuscated statistic $\vectorfont{r}_1$ is published at time slot $1$.
	At time slot $2$, assume $dis\leq\sqrt{err_{opt}^{(2)}}$, then $\epsilon_{i,2}^{(2)}$ is not used to publish a new obfuscated statistic result, and its usage is set to zeros for all users.
	At time slot $3$,  $\epsilon_{1,3}^{(2)}=(\epsilon_{1}/2-\epsilon_{1,1}^{(2)})/2=\epsilon_{1}/8$.
	The vector below each matrix in Figure~\ref{PBD_example} represents the total privacy budget used at the current time slot for each user. For example, at time slot $1$, the total privacy budget usage for $u_1$ is $\epsilon_{1,1}^{(1)}+\epsilon_{1,1}^{(2)}=3\epsilon_1/8$.
	\begin{figure}[t!]
		\centering
		\includegraphics[width=0.45\textwidth]{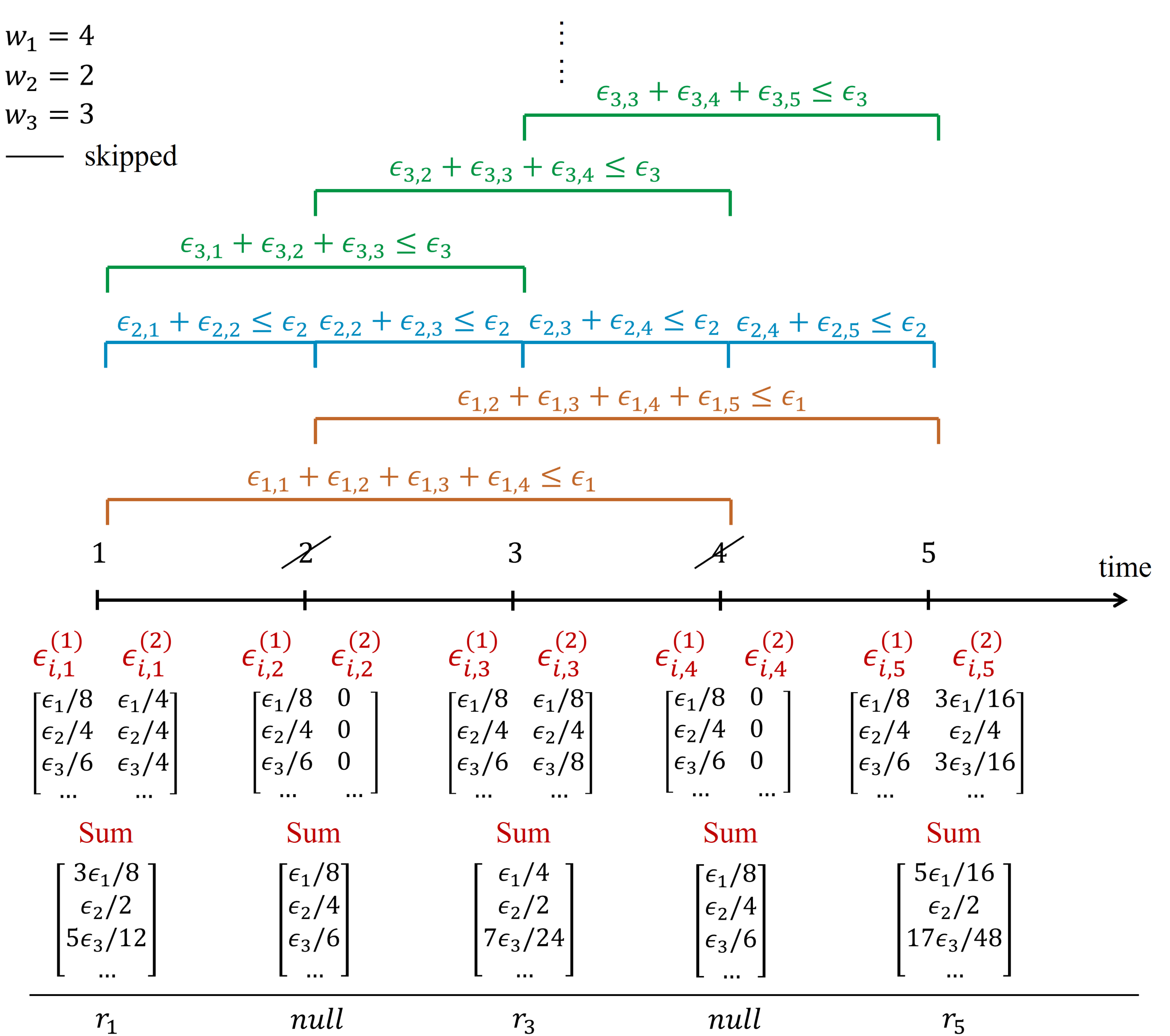}
		\caption{A process example for \solutionMethodA{}.}\label{PBD_example}
	\end{figure}
\end{example}

\noindent\textbf{\solutionMethodBTotalName{} (\solutionMethodB{}).} Algorithm~\ref{alg:PBA} outlines the process of \solutionMethodB{}.
The dissimilarity calculation (Part$_{DC}$) in \solutionMethodB{} is identical to that of \solutionMethodA{}. However, \solutionMethodB{} and \solutionMethodA{} differ significantly in their strategies on allocating the publication privacy budget.

For Part$_{NOP}$ in \solutionMethodB{}, we allocate an average privacy budget of $\frac{\epsilon_{i}}{2w_i}$ (one share) for each $u_i$ at each time slot $t$. 
A publication at time slot $t$ can use more than one share by borrowing from its successor time slots. The variable $t_{i,N}$ in Line~\ref{nullified} represents the number of successor time slots occupied by the last publication.
We calculate the maximal $\tilde{t}_{N}$ of all $t_{i,N}$ and determine whether the current time has been occupied ($t-l\leq\tilde{t}_N$). If so, we approximate the publication using the last published result. Otherwise, we calculate the remaining budget shares from the precursor time slots (i.e., $t_{A,i}$ in Line~\ref{absorbed}) and set the current publication budget as the total absorbed shares (Line~\ref{publicationBudget}).
The subsequent steps follow the same process as outlined in Algorithm~\ref{alg:PBD}.
\begin{algorithm}[t!]\small
	\caption{Personalized Budget Absorption}
	\label{alg:PBA}
	\DontPrintSemicolon
	\KwIn{$D_t$, EPDP privacy requirement ($\vectorfont{w}$, $\vectorfont{\epsilon}$), historical data publication $(\vectorfont{r}_1,\vectorfont{r}_2,\ldots, \vectorfont{r}_{t-1})$}
	\KwOut{$\vectorfont{r}_t$}
	Get the current window average budget $\bar{\epsilon}_i=\epsilon_i/w_i$ for each $i\in[n]$;\\
	$\vectorfont{\epsilon}_{t}^{(1)}=(\bar{\epsilon}_1/2,\bar{\epsilon}_2/2,\ldots ,\bar{\epsilon}_n/2)$;\\ \label{mt22_start}
	Get dissimilarity $dis$ by DC($D_t, \vectorfont{\epsilon}_{t}^{(1)}, \vectorfont{r}_1,...,\vectorfont{r}_{t-1}$) in Algorithm~\ref{alg:DC};\\
	Initialize nullified time slots $t_{i,N}$ as $0$;\\
	Set  $t_{i,N}=\frac{\epsilon_{i,l}^{(2)}}{\epsilon_i/(2w_i)}-1$ for $i\in[n]$ if $l$ exists where $l$ is the last non-null publishing time slot;\\ \label{nullified}
	Set nullified time slot bound $\tilde{t}_N=\max_{i\in[n]}t_{i,N}$;\\
	\If{$t-l\leq\tilde{t}_{N}$}{
		\Return $\vectorfont{r}_t=\vectorfont{r}_{t-1}$;
	} \Else{
		Set absorbed time slots $t_{A,i}=\max{(t-l-t_{i,N},0)}$ for $i\in[n]$;\\	\label{absorbed}
		Set publication budget $\epsilon_{i,t}^{(2)}=\frac{\epsilon_i}{2w_i}\cdot\min{(t_{A,i}, w_i)}$ for $i\in[n]$;\\	\label{publicationBudget}
		$\vectorfont{\epsilon}_{t}^{(2)}=\left(\epsilon_{1,t}^{(2)},\epsilon_{2,t}^{(2)},\ldots ,\epsilon_{n,t}^{(2)}\right)$;\\
		$\epsilon_{opt}^{(2)}$, $err_{opt}^{(2)}$ = OBS($\vectorfont{\epsilon}_{t}^{(2)}$);\\ \label{mt222_start}
		\If{$dis>\sqrt{err_{opt}^{(2)}}$}{
			$\tilde{D}_t^{(2)}=SM_s(D_t,\vectorfont{\epsilon}_{t}^{(2)},\epsilon_{opt}^{(2)})$;\\ 
			$\tilde{\vectorfont{c}}_t^{(2)}=Q(\tilde{D}_t^{(2)})$;\\
			\Return{$\vectorfont{r}_t=SM_d(\tilde{\vectorfont{c}}_t^{(2)},\epsilon_{opt}^{(2)})$};
		} \Else {
			\Return{$\vectorfont{r}_t=\vectorfont{r}_{t-1}$};
		}\label{mt222_end}
	}	
\end{algorithm}

\begin{example}\label{example_for_PBA}
	We continue use the demonstration case shown in Figure~\ref{data_example}. Figure~\ref{PBA_example} illustrates the estimation process of \solutionMethodB{}.
	The dissimilarity calculation process in \solutionMethodB{} is identical to that in Example~\ref{example_for_PBD}.
	For Part$_{NOP}$, at time slot $1$, with no budget to absorb, all users utilize one share (i.e., $\epsilon_i/(2w_i)$) to publish a new obfuscated statistic result.
	Assume time slot $2$ is skipped (i.e., $dis\leq\sqrt{err_{opt}^{(2)}}$). 
	At time slot $3$, $t_{1,N}=t_{2,N}=t_{3,N}=0$. Thus, the nullified bound $\tilde{t}_{N}$ is $0$.
	Since $t-l=3-1=2>\tilde{t}_{N}$, a new obfuscated statistic result is reported.
	The publication budget set is calculated as $\vectorfont{\epsilon}_3^{(2)}=\left(\epsilon_1/4,\epsilon_2/2,\epsilon_3/3,\ldots \right)$.
	At time slot $4$, $t_{1,N}=t_{2,N}=t_{3,N}=1$.
	As $t-l=4-3=1\leq\tilde{t}_{N}$, no output is produced.
	At time slot $5$, all $t_{i,N}$ remain $1$, and $t-l=5-3=2>\tilde{t}_{N}$. 
	The absorbed time slots $t_{A,i}$ all equal $1$.
	The resulting publication budget set is $\vectorfont{\epsilon}_5^{(2)}=\left(\epsilon_1/8,\epsilon_2/4,\epsilon_3/6,\ldots \right)$.
\end{example}

\begin{figure}[t!]
	\centering\vspace{-2ex}
	\includegraphics[width=0.45\textwidth]{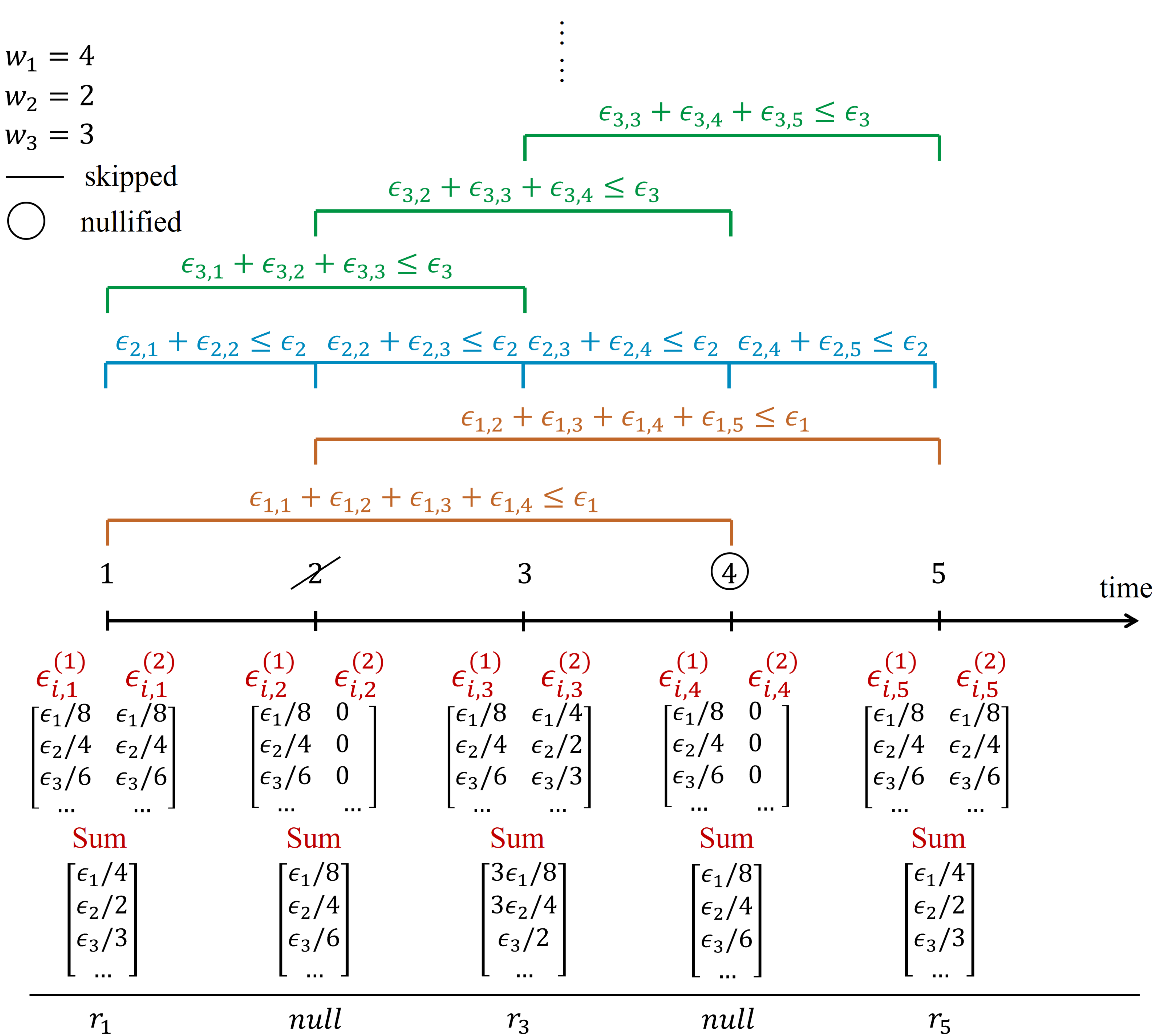}
	\caption{A process example for \solutionMethodB{}.}\label{PBA_example}
\end{figure}

\subsection{Analyses}

\noindent\textbf{Time Cost Analysis.}
Let $m$ be the number of distinct privacy requirements $(w_i,\epsilon_i)$, where $m\leq n$.
The time complexity of OBS is $O(m)$ for both \solutionMethodA{} and \solutionMethodB{}.
The Sample Mechanism and Query operations each have a time complexity of $O(n)$. Thus, the time complexities of \solutionMethodA{} and \solutionMethodB{} both are  $O(n)$.

\noindent\textbf{Privacy Analysis.} The privacy analysis for \solutionMethodA{} and \solutionMethodB{}:
\begin{theorem}\label{Thm:solutionAB_privacy_analysis}
	\solutionMethodA{} and \solutionMethodB{} satisfy $(\vectorfont{w},\vectorfont{\epsilon})$-EPDP.
\end{theorem}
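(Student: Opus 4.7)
The plan is to reduce the claim to two independent budget accountings for each user $u_i$, corresponding to the dissimilarity calculation (Part$_{DC}$) and the new obfuscated publication (Part$_{NOP}$), and then apply the standard sequential composition of differential privacy together with the Sampling Mechanism's personalized guarantee. Concretely, I would first show that a single invocation at time slot $t$ of either Algorithm~\ref{alg:DC} or the publication branch of Algorithm~\ref{alg:PBD} / Algorithm~\ref{alg:PBA} satisfies $\vectorfont{\epsilon}_t$-PDP for the budget vector it was called with, by appealing to the $\epsilon_i$-PDP guarantee of the Sampling Mechanism of Jorgensen et al., together with the fact that the Laplace noise in the last line of DC is scaled to $1/(d\cdot\epsilon_{opt})$ on a statistic of per-coordinate sensitivity $1/d$. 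This takes care of the per-slot bookkeeping and reduces the problem to checking, for each $u_i$, that the sums $\sum_{k=t-w_i+1}^{t}\epsilon_{i,k}^{(1)}$ and $\sum_{k=t-w_i+1}^{t}\epsilon_{i,k}^{(2)}$ are each bounded by $\epsilon_i/2$ for every $t$.

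For Part$_{DC}$ the bound is immediate in both algorithms: each time slot consumes exactly $\bar{\epsilon}_i/2=\epsilon_i/(2w_i)$, so any $w_i$ consecutive slots contribute exactly $\epsilon_i/2$. The nontrivial work is in Part$_{NOP}$. For \solutionMethodA{}, I would argue by induction on $t$ that the invariant $\sum_{k=t-w_i+1}^{t}\epsilon_{i,k}^{(2)}\le \epsilon_i/2$ is preserved: at each non-null slot the algorithm halves the remaining window budget $\epsilon_{rm,i}=\epsilon_i/2-\sum_{k=t-w_i+1}^{t-1}\epsilon_{i,k}^{(2)}$ (line~\ref{pbd_mt2_start}), so the new total is at most $\sum_{k=t-w_i+1}^{t-1}\epsilon_{i,k}^{(2)}+\epsilon_{rm,i}/2\le \epsilon_i/2$, and on null slots the total can only decrease as the oldest term drops out of the window.

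For \solutionMethodB{} the argument is more delicate because a single publication may consume up to $w_i$ shares of size $\epsilon_i/(2w_i)$, and this is where I expect the main obstacle to lie. The key observation is the accounting maintained by the variables $t_{i,N}$ and $t_{A,i}$: once a slot $l$ publishes $x$ shares, the next $x-1$ slots are nullified (line~\ref{nullified}) and contribute $0$ to $u_i$'s budget, while any subsequent publication at slot $t$ only absorbs the $t_{A,i}=\max(t-l-t_{i,N},0)$ shares that have accrued since then, capped by $w_i$ in line~\ref{publicationBudget}. I would formalise this by a charging argument: associate each consumed share with a distinct time slot in $\{t-w_i+1,\dots,t\}$ according to when it was ``earned'' at rate $\epsilon_i/(2w_i)$ per slot, and show that the cap $\min(t_{A,i},w_i)$ prevents double-charging across the window boundary, so the total in any $w_i$-window never exceeds $w_i$ shares $=\epsilon_i/2$.

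Finally, combining the two bounds via sequential composition over the window yields $\sum_{k=t-w_i+1}^{t}\epsilon_{i,k}\le \epsilon_i$ for every $u_i$ and every $t$, and since this holds for all pairs of $w_i$-neighboring stream prefixes (by Definition of $w$-neighboring, changes occur in a span of at most $w_i$ slots), Definition~\ref{Def_EPDP} is satisfied. One subtlety worth handling explicitly is post-processing: the decision of which branch to execute depends on $dis$, which is itself a differentially private release, so the branching does not leak additional budget and can be absorbed into the Part$_{DC}$ account.
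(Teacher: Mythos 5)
Your proposal matches the paper's proof in all essentials: the same split of each user's budget into Part$_{DC}$ (exactly $\epsilon_i/(2w_i)$ per slot, summing to $\epsilon_i/2$ per window) and Part$_{NOP}$, the same halving-induction for \solutionMethodA{}, a slot-counting/charging argument for \solutionMethodB{} that is just the charging form of the paper's explicit inequality $w_i\geq\sum_{j=1}^{s_i}(1+2\alpha_{i,k_j})-\alpha_{i,k_1}-\alpha_{i,k_{s_i}}$, and the same final step composing over the at most $w_i$ differing slots of two $w_i$-neighboring prefixes. The only differences are cosmetic: you make explicit the per-slot $\vectorfont{\epsilon}_t$-PDP guarantee of the Sampling Mechanism and the post-processing status of the branch on the private $dis$, both of which the paper leaves implicit.
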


\begin{proof}
	
	(1) \solutionMethodA{} satisfies $(\vectorfont{w},\vectorfont{\epsilon})$-EPDP.
	
	In the process of Part$_{DC}$, for each user $u_i$, the dissimilarity budget at each time slot
	is $\epsilon_i/(2w_i)$. Then for each time slot $t$, we have $\sum_{k=\max(t-w_i+1,1)}^{t} \epsilon_{i,k}^{(1)} = \epsilon_i/2.$
	%
	
	In Part$_{NOP}$, for each user $u_i$ at time slot $t$, only half of the publication
	budget is used when publication occurs: $\epsilon_{i,t}^{(2)}=(\epsilon_i/2-\sum_{k=\max(t-w_i+1,1)}^{t-1}\epsilon_{i,k}^{(2)})/2$. 
	For any time slot $t\in[1,w_i]$, the summation publication budgets used for $u_i$ is at most {\small$\sum_{k=1}^{w_i}\epsilon_i/(2\cdot 2^{k})\leq (\epsilon_i/2)\cdot (1-\frac{1}{2^{w_i}})\leq \epsilon_i/2$}. 
	Suppose {\small$\sum\limits_{k=\max(t-w_i+1,1)}^{t}\epsilon_{i,k}^{(2)}\leq \epsilon_i/2$} for $t=w_i+s$
	(i.e., $\sum\limits_{k=\max(s+1,1)}^{w_i+s} \epsilon_{i,k}^{(2)}$ $\leq \epsilon_i/2$).
	Then for $t=w_i+s+1$, we have:
	\begin{equation}\label{divide_equation}
		\sum_{k=\max(s+2,1)}^{w_i+s+1} \epsilon_{i,k}^{(2)} = \sum_{k=\max(s+2,1)}^{w_i+s} \epsilon_{i,k}^{(2)} + \epsilon_{i,w_i+s+1}^{(2)}. 
	\end{equation}
	Since $\epsilon_{i,w_i+s+1}^{(2)}$ is at most half of the remaining publication budget at time slot $w_i+s$:
	\begin{equation}\label{distribution_equation}
		\epsilon_{i,w_i+s+1}^{(2)} \leq (\epsilon_i/2-\sum_{k=\max(s+2,1)}^{w_i+s} \epsilon_{i,k}^{(2)})/2.
	\end{equation}
	According to Equations~\eqref{divide_equation} and~\eqref{distribution_equation}, we have:
	\begin{equation}\notag
		\begin{aligned}
			\sum_{k=\max(s+2,1)}^{w_i+s+1} \epsilon_{i,k}^{(2)} &\leq \sum_{\substack{k=\max(\\s+2,1)}}^{w_i+s} \epsilon_{i,k}^{(2)} + (\epsilon_i/2-\sum_{\substack{k=\max(\\s+2,1)}}^{w_i+s} \epsilon_{i,k}^{(2)})/2 \\
			&=\epsilon_i/4 + (\sum_{k=\max(s+2,1)}^{w_i+s} \epsilon_{i,k}^{(2)})/2\\
			&\leq \epsilon_i/4 + \epsilon_i/4\\
			&=\epsilon_i/2.
		\end{aligned}
	\end{equation}
	Therefore, for any $t\geq 1$, we have:
	\begin{equation}\notag
		\sum_{k=\max{(t-w_i+1,1)}}^{t} \epsilon_{i,k}^{(2)} \leq \epsilon_i/2.
	\end{equation}
	According to the Composition Theorems~\cite{DBLP:journals/fttcs/DworkR14}, we have:
	\begin{equation}\notag
		\begin{aligned}
			\sum_{k=\max(t-w_i+1,1)}^{t} \epsilon_{i,k} &= \sum_{\substack{k=\max(t\\-w_i+1,1)}}^{t} \epsilon_{i,k}^{(1)} + \sum_{\substack{k=\max(t\\-w_i+1,1)}}^{t} \epsilon_{i,k}^{(2)} \leq \epsilon_i.
		\end{aligned}
	\end{equation}
	
	For any user $u_i$ and any two $w_i$-neighboring stream prefixes $S_t$ and $S_t'$ (i.e., $S_t\sim_{w_i}S_t'$), let $t_s$ be the earliest time slot where $S_t[t_s]\neq S_t'[t_s]$ and $t_e$ be the latest time slot where $S_t[t_e]\neq S_t'[t_e]$. Then we have $t_e-t_s+1\leq w_i$. Denoting the output of our \solutionMethodA{} as $\solutionMethodA{}(S_t[t])=o_{t}\in \entity{O}$, for any $O\subseteq \entity{O}$, we have:
	\begin{equation}\notag
		\begin{aligned}
			\frac{\Pr[\solutionMethodA{}(S_t)]\in O}{\Pr[\solutionMethodA{}(S_t')]\in O} &\leq \Pi_{k=t_s}^{t_e}\frac{\Pr[\solutionMethodA{}(S_t[k])=o_{k}]}{\Pr[\solutionMethodA{}(S_t'[k])=o_{k}]}\\
			&\leq e^{\sum_{k=t_s}^{t_e}\epsilon_{i,k}} \leq e^{\sum_{k=\max{(t_e-w_i+1,1)}}^{t_e}\epsilon_{i,k}} 
			\leq e^{\epsilon_i}.
		\end{aligned}
	\end{equation}
	Therefore, \solutionMethodA{} satisfies $(\vectorfont{w},\vectorfont{\epsilon})$-EPDP where $\vectorfont{w}=(w_1,w_2,\ldots ,w_n)$ and $\vectorfont{\epsilon}=((u_1,\epsilon_1),(u_2,\epsilon_2),\ldots ,(u_n,\epsilon_n))$.
	
	(2) \solutionMethodB{} satisfies $(\vectorfont{w},\vectorfont{\epsilon})$-EPDP.
	
	The Part$_{DC}$ in \solutionMethodB{} is identical to that that in \solutionMethodA{}. Consequently, for each time slot $t$, we have:
	\begin{equation}\label{absorb_part1}
		\sum_{k=\max(t-w_i+1,1)}^{t} \epsilon_{i,k}^{(1)} = \epsilon_i/2.
	\end{equation}
	
	In Part$_{NOP}$, for any user $u_i$ and any window of size $w_i$, there are $s_i$ publication time slots in the window.
	We denote these publication time slots as $(k_1, k_2,\ldots ,k_{s_i})$.
	For any publication time slot $k_j$ ($j\in[s_i]$), the quantity of its absorbing unused budgets is denoted as $\alpha_{i,k_j}$. 
	Figure~\ref{PBD_proof} illustrates an example where $s_i=3$ and $w_i=9$. 
	\begin{figure}[t!]\vspace{-2ex}
		\centering
		\includegraphics[width=0.45\textwidth]{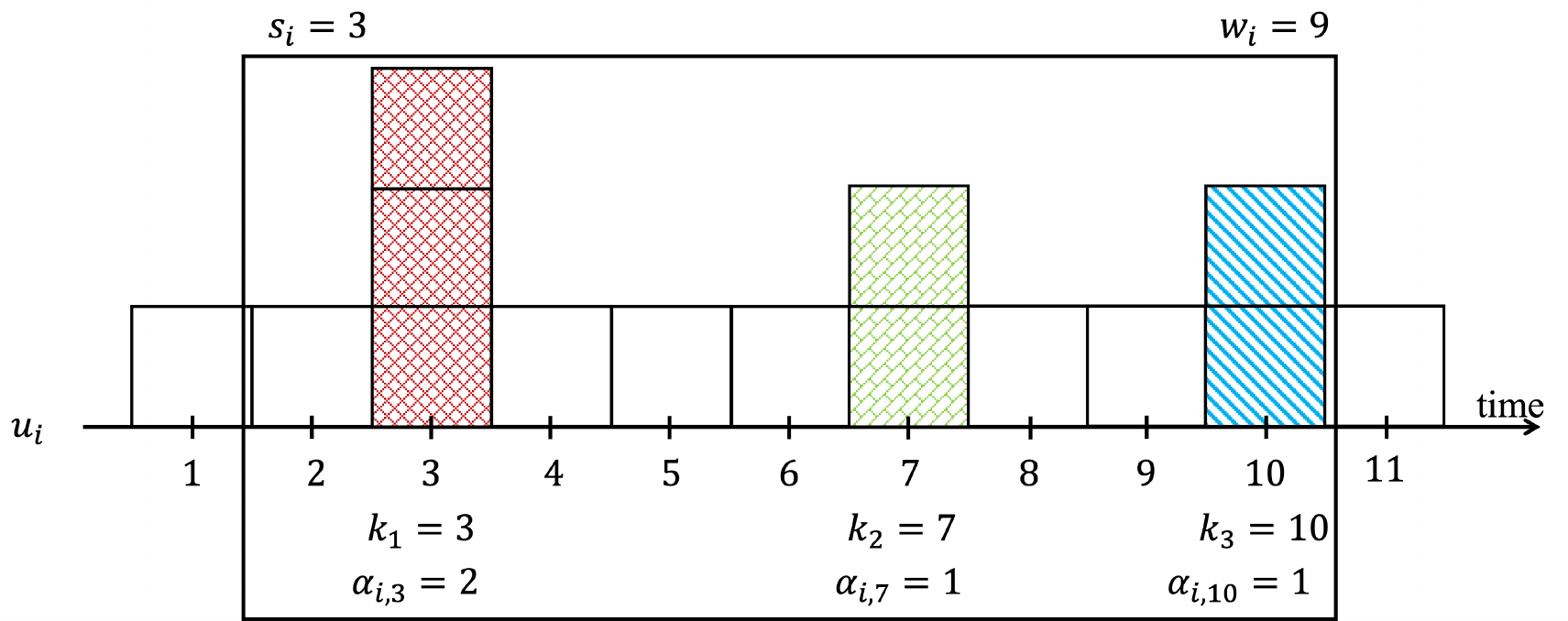}
		\caption{An example for parameters in \solutionMethodB{}.}\label{PBD_proof}
	\end{figure}
	
	Based on Algorithm~\ref{alg:PBA}, we have:
	\begin{equation}\notag
		w_i\geq\sum_{j=1}^{s_i}(1+2\alpha_{i,k_j})-\alpha_{i,k_1}-\alpha_{i,k_{s_i}}.
	\end{equation}
	Then, for the total publication budgets used in any window, we have
	\begin{equation}\label{absorb_part2}
		\begin{aligned}
			\sum_{k=\max{(t-w_i+1,1)}}^{t} \epsilon_{i,k}^{(2)} &\leq \frac{\epsilon_i}{2w_i}\cdot \sum_{j=1}^{s_i}(1+\alpha_{i,k_j})\\
			&\leq \frac{\epsilon_i\cdot\sum_{j=1}^{s_i}(1+\alpha_{i,k_j})}{2\sum_{j=1}^{s_i}(1+2\alpha_{i,k_j})-2\alpha_{i,k_1}-2\alpha_{i,k_{s_i}}}\\
			&= \frac{\epsilon_i\cdot\sum_{j=1}^{s_i}(1+\alpha_{i,k_j})}{2\sum_{j=1}^{s_i}(1+\alpha_{i,k_j})+2\sum_{j=2}^{s_i-1}\alpha_{i,k_j}}\\
			&\leq \epsilon_i/2.
		\end{aligned}
	\end{equation}
	Based on Equations~\eqref{absorb_part1} and~\eqref{absorb_part2}, and applying the Composition Theorems~\cite{DBLP:journals/fttcs/DworkR14}, we obtain:
	\begin{equation}\notag
		\begin{aligned}
			\sum_{k=\max{(t-w_i+1,1)}}^{t} \epsilon_{i,k} &= \sum_{\substack{k=\max(t\\-w_i+1,1)}}^{t} \epsilon_{i,k}^{(1)} + \sum_{\substack{k=\max(t\\-w_i+1,1)}}^{t} \epsilon_{i,k}^{(2)} \leq \epsilon_i.
		\end{aligned}
	\end{equation}
	The subsequent proof process follows the same steps as in \solutionMethodA{}.
	Ultimately, we demonstrate that \solutionMethodB{} also satisfies $(\vectorfont{w},\vectorfont{\epsilon})$-EPDP.
\end{proof}

\textbf{Utility Analysis.}
For each user $u_i$ in \solutionMethodA{} and \solutionMethodB{}, we define $w_L$ as the smallest window size among all users.
For each $u_i$, given $(w_i, \epsilon_i)$, let $\epsilon_{L}=\min_{i\in[n]}\frac{\epsilon_i}{w_i}$ and $\epsilon_{R}=\max_{i\in[n]}\frac{\epsilon_i}{w_i}$ represent the minimum and maximum values of $\frac{\epsilon_i}{w_i}$, respectively.
Let $n_A$ be the number of occurrences of $\epsilon_R$ across all users.

We make the following assumptions: At most $\tilde{s}\leq w_L$ publications occur at time slots $q_1$, $q_2$,\ldots, $q_{\tilde{s}}$ in the window of size $w_L$, with no budget absorption from past time slots outside the window.
Additionally, for each user, each publication approximates the same number of skipped or nullified publications.

We first present a crucial lemma, followed by two theorems that bound the average errors of \solutionMethodA{} and \solutionMethodB{}, respectively.
\begin{lemma}\label{lemma:sm_utility}
	Given $m$ distinct privacy budget-quantity pairs $P=\{(\epsilon_{j},n_j)|j\in[m],\sum_{j\in[m]}n_j=n\}$ where pair $(\epsilon_{j},n_j)$ indicates that $\epsilon_{j}$ appears $n_j$ times in the user privacy requirement, and a query with sensitivity $I$, the error upper bound $\widetilde{err}_O(P)$ of the SM process with privacy budget chosen from OBS is: {\scriptsize$$\min{\left({\frac{2I^2}{\min_j\epsilon_j^2}},(n-n_{M})(n-n_{M}+\frac{1}{4})+{\frac{2I^2}{\max_j\epsilon_j^2}}\right)},$$}
	where $n_{M}=n_k$ with $k=\arg\max_{j\in[m]}{\epsilon_j}$.
\end{lemma}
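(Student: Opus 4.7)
\smallskip
\noindent\textbf{Proof plan.} The plan is to exploit the fact that OBS (Algorithm~\ref{alg:OPT_B_C}) returns the argmin of $err_s(\epsilon_\theta) + err_{dp}(\epsilon_\theta)$ over all candidate thresholds $\tilde{\epsilon}_k \in \tilde{\vectorfont{\epsilon}}$. Consequently, the error $\widetilde{err}_O(P)$ achieved by OBS is upper-bounded by the error evaluated at \emph{any} particular admissible threshold. It therefore suffices to produce two concrete upper bounds — one for each of the two natural extreme choices $\epsilon_\theta = \min_j \epsilon_j$ and $\epsilon_\theta = \max_j \epsilon_j$ — and take their minimum.

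First I would evaluate the bound at $\epsilon_\theta = \min_j \epsilon_j$. At this choice every user satisfies $\epsilon_i \geq \epsilon_\theta$, so the sampling step $SM_s$ includes every record deterministically and the sampling error in Equation~\eqref{eq:sampling_error} collapses to zero. Only the Laplace noise term survives, yielding $\tfrac{2I^2}{(\min_j \epsilon_j)^2}$ once Equation~\eqref{eq:noise_error} is adapted to a query of sensitivity $I$ by the standard scaling $b = I/\epsilon_\theta$ of the Laplace mechanism. This furnishes the first term inside the outer $\min$.

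Next I would evaluate the bound at $\epsilon_\theta = \max_j \epsilon_j$. Now exactly $n_M$ users pass the threshold deterministically, while each of the remaining $n - n_M$ users is retained with probability $p_i \in [0,1]$. I would apply two elementary inequalities to Equation~\eqref{eq:sampling_error}: the variance piece uses $p_i(1-p_i) \leq 1/4$, giving a contribution of at most $(n-n_M)/4$; the squared-bias piece uses $1 - p_i \leq 1$, giving a contribution of at most $(n - n_M)^2$. Summing these produces $(n - n_M)\bigl(n - n_M + \tfrac{1}{4}\bigr)$, to which one adds the noise term $\tfrac{2I^2}{(\max_j \epsilon_j)^2}$ to obtain the second term inside the outer $\min$.

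Finally, because both quantities are feasible values of OBS's objective (attained at $\tilde{\epsilon}_k$ equal to the minimum and maximum budget respectively), the algorithm's output is bounded above by the smaller of the two, which is exactly the claimed expression. I do not expect a genuine obstacle here: the only points that require verification are that the sampling probabilities stay in $[0,1]$ (immediate from $\epsilon_i \leq \epsilon_\theta \Rightarrow p_i \leq 1$) and that the noise-error formula extends to sensitivity $I$ via the $2b^2$ variance of $\mathrm{Lap}(I/\epsilon_\theta)$. Both are standard.
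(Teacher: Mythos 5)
Your proposal is correct and follows essentially the same route as the paper: the paper likewise instantiates the OBS objective at the two extreme thresholds $\min_j\epsilon_j$ (zero sampling error, pure Laplace noise) and $\max_j\epsilon_j$ (bounding $p_k(1-p_k)\leq\tfrac14$ and $1-p_k\leq 1$ to get $(n-n_M)(n-n_M+\tfrac14)$), then uses the fact that OBS returns the minimizer over all candidate thresholds. No gaps; the two verification points you flag are handled identically in the paper.
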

\begin{proof}
	Let $M_L$ be the SM with privacy budget chosen as $\min_j{\epsilon_j}$.
	According to the SM process, all budget types will be selected. 
	In this case, the sampling error $err_s$ is $0$ and the noise error $err_{dp}$ is $2\cdot(\frac{I}{\min_j{\epsilon_j}})^2=\frac{2I^2}{\min_j{\epsilon_j^2}}$.
	Thus, the total error of $M_L$ is $err_{M_L}=\frac{2I^2}{\min_j{\epsilon_j^2}}$.
	Let $M_R$ be the SM with privacy budget chosen as $\max_j{\epsilon_j}$. In this case, $(m-1)$ types of privacy budget are chosen with probability $p_k=\frac{e^{\epsilon_k}-1}{e^{\max_j{\epsilon_j}}-1}$ less than $1$ ($k\in[m]$). For the sampling error, we have:
	\begin{equation}\notag
		\begin{aligned}
			err_s &= \sum_{\epsilon_k<\max_j{\epsilon_j}} n_k p_k(1-p_k) + \left(\sum_{\epsilon_k<\max_j{\epsilon_j}} n_k(1-p_k)\right)^2\\
			&< \sum_{\epsilon_k<\max_j{\epsilon_j}} n_k \left(\frac{p_k+1-p_k}{2}\right)^2 + \left(\sum_{\epsilon_k<\max_j{\epsilon_j}} n_k\right)^2\\
			&=(n-n_{M})(n-n_{M}+\frac{1}{4}).
		\end{aligned}
	\end{equation}
	The noise error $err_{dp}$ in this case is $2\cdot(\frac{I}{\max_j{\epsilon_j}})^2=\frac{2I^2}{\max_j{\epsilon_j^2}}$.
	Thus, the total error of $M_R$ is $err_{M_R}=(n-n_M)(n-n_M+\frac{1}{4})+\frac{2I^2}{\max_j{\epsilon_j^2}}$.
	According to the OBS process, we have $\widetilde{err}_{O}(P)\leq err_{M_L}$ and $\widetilde{err}_{O}(P)\leq err_{M_R}$.
	Therefore, 
	\begin{equation}\notag
		\begin{aligned}
			\widetilde{err}_{O}(P) &\leq\min{(err_{M_L},err_{M_R})}=\min{\left(\frac{2I^2}{\min_j{\epsilon_j^2}},(n-n_M)(n-n_M+\frac{1}{4})+\frac{2I^2}{\max_j{\epsilon_j^2}}\right)}.
		\end{aligned}
	\end{equation}
\end{proof}

For \solutionMethodA{}, we present Theorem~\ref{Thm:solutionMethodA_utility_analysis} as follows.
\begin{theorem}\label{Thm:solutionMethodA_utility_analysis}
	The average error per time slot in \solutionMethodA{} is at most {\scriptsize$\min{\left(\frac{8}{d^2\epsilon_L},Z+\frac{8}{d^2\epsilon_R}\right)}+\min{\left(\frac{32\cdot(4^{\tilde{s}}-1)}{3\tilde{s}\epsilon_L},Z+\frac{32\cdot(4^{\tilde{s}}-1)}{3\tilde{s}\epsilon_R}\right)}$} where {\scriptsize$Z=(n-n_A)(n-n_A+\frac{1}{4})$}, if at most $\tilde{s}$ publications occur in any window with size $w_L$.
\end{theorem}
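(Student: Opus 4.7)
The plan is to split the reporting error at each time slot into two independent contributions coming from disjoint halves of each user's privacy budget: the dissimilarity calculation error from Part$_{DC}$ and the publication noise error from Part$_{NOP}$. Each contribution will correspond to one of the two $\min$-terms in the theorem. I would bound them separately by invoking Lemma~\ref{lemma:sm_utility} with the appropriate sensitivity and budget set, and then sum.

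For Part$_{DC}$, the budget share used by user $u_i$ at every time slot is $\epsilon_i/(2w_i)$, so the distinct budget values fed to OBS all lie in $[\epsilon_L/2,\epsilon_R/2]$, with the maximum $\epsilon_R/2$ attained by $n_A$ users by definition of $n_A$. Since $dis$ is defined as a $d^{-1}$-scaled average of absolute deviations, a single-user change alters at most one coordinate by one unit, giving query sensitivity $I=1/d$. Substituting $I=1/d$, $\min_j\epsilon_j=\epsilon_L/2$, $\max_j\epsilon_j=\epsilon_R/2$, and $n_M=n_A$ into Lemma~\ref{lemma:sm_utility} directly yields the first $\min$-term of the theorem.

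For Part$_{NOP}$, I would first establish by induction the key invariant that the $k$-th publication inside a window of size $w_L$ consumes a per-user budget of exactly $\epsilon_i/2^{k+1}$; this follows from the halving rule $\epsilon_{i,t}^{(2)}=\epsilon_{rm,i}/2$ in Line~\ref{pbd_mt2_start} of Algorithm~\ref{alg:PBD} applied to the remaining window budget, combined with the fact that skipped time slots contribute nothing. Instantiating Lemma~\ref{lemma:sm_utility} at each publication with $I=1$ gives a per-publication error whose two branches scale as $8\cdot 4^k$. Under the stated assumption that each publication approximates an equal number of nullified slots, every publication's error propagates over $w_L/\tilde{s}$ consecutive time slots in the window. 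Summing over $k=1,\ldots,\tilde{s}$ produces the geometric series $\sum_{k=1}^{\tilde{s}} 4^k=4(4^{\tilde{s}}-1)/3$; multiplying by the $w_L/\tilde{s}$ propagation factor and averaging over the $w_L$ slots then delivers the second $\min$-term.

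The main obstacle is the bookkeeping that justifies the $\epsilon_i/2^{k+1}$ schedule in the presence of both irregular skipping (time slots where $dis\leq\sqrt{err_{opt}^{(2)}}$) and heterogeneous window sizes $w_i\geq w_L$: one must verify that wherever the $\tilde{s}$ publications fall inside a common window of length $w_L$, the halving invariant still holds and the $(\vectorfont{w},\vectorfont{\epsilon})$-EPDP budget constraint of Theorem~\ref{Thm:solutionAB_privacy_analysis} is never exceeded. A secondary subtlety is replacing per-user quantities $\epsilon_i$ inside Lemma~\ref{lemma:sm_utility} by the global extrema $\epsilon_L$ and $\epsilon_R$, which are defined over per-slot shares rather than raw budgets; the assumption $\tilde{s}\leq w_L$ is what keeps these worst-case substitutions consistent across all users and yields the clean $\epsilon_L,\epsilon_R$ form in the final bound.
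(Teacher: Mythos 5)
Your plan follows essentially the same route as the paper's proof: the same decomposition into $\overline{err}_{DC}+\overline{err}_{NOP}$, the same application of Lemma~\ref{lemma:sm_utility} with sensitivity $I=1/d$ and per-slot shares $\epsilon_i/(2w_i)$ for Part$_{DC}$, the same halved-budget schedule $\epsilon_i/2^{k+1}$ for the $k$-th publication, the same $w_L/\tilde{s}$ propagation factor for approximated slots, and the same geometric series $\sum_{k\in[\tilde{s}]}4^k=\frac{4(4^{\tilde{s}}-1)}{3}$ yielding the second $\min$-term. The obstacles you flag (the bookkeeping for the halving invariant under irregular skipping, and substituting the global extrema $\epsilon_L,\epsilon_R$ for per-user shares) are precisely the points the paper also treats only informally, so your proposal is a faithful match.
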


\begin{proof}
	Given a privacy budget-quantity pair set $P$, let $EOPT(P)$ be the optimal privacy budget chosen from OBS.
	Given a positive number $\beta$, we define $\beta\cdot P=\{(\beta\cdot\epsilon_j,n_j)|(\epsilon_j,n_j)\in P\}$.
	For each user $u_i$ with privacy requirement pair $(w_i,\epsilon_i)$, we calculate their average budget per window as $\frac{\epsilon_i}{w_i}$. We denote the set of all average budgets as $\overline{\epsilon}=\{\frac{\epsilon_i}{w_i}|i\in[n]\}$.
	We then construct the privacy budget-quantity pair set of each type of average budget as $P_A=\{(\epsilon_j, n_j)|\epsilon_j\in\overline{\epsilon}\}$.
	Let $Z=(n-n_{A})(n-n_{A}+\frac{1}{4})$ be the sampling error upper bound, where $n_{A}$ is the quantity of $\max_{i\in[n]}\frac{\epsilon_i}{w_i}$ in $\overline{\epsilon}$.
	
	When Part$_{DC}$ is not private, the error stems from Part$_{NOP}$.
	In Part$_{NOP}$, errors arise from both publications and approximations.
	According to the Part$_{NOP}$, an approximation error does not exceed the publication error at the most recent publication time slot.
	For the average error $\overline{err}_{NOP}$ of all time slots within the window of size $w_L$, based on the \solutionMethodA{} process, we have: 
	\begin{equation}\label{equation:pbd_m2}
		\begin{aligned}
			\overline{err}_{NOP} &= \frac{1}{w_L}\sum_{k\in[\tilde{s}]}\frac{w_L}{\tilde{s}}\cdot \widetilde{err}_O\left(\frac{1}{2^{k+1}}P_A\right)\\
			&< \frac{1}{\tilde{s}}\sum_{k\in[\tilde{s}]}\min{\left(\frac{2}{(\frac{\epsilon_L}{2^{k+1}})^2},Z+\frac{2}{(\frac{\epsilon_R}{2^{k+1}})^2}\right)}\\
			&< \frac{1}{\tilde{s}}\min\left(\sum_{k\in[\tilde{s}]}\frac{8\cdot4^k}{\epsilon_L^2}, \tilde{s}\cdot Z+\sum_{k\in[\tilde{s}]}\frac{8\cdot 4^k}{\epsilon_R^2}\right)\\
			&= \min\left(\frac{32\cdot(4^{\tilde{s}}-1)}{3\tilde{s}\epsilon_L^2}, Z+\frac{32\cdot (4^{\tilde{s}}-1)}{3\tilde{s}\epsilon_R^2}\right).
		\end{aligned}
	\end{equation}
	
	When Part$_{DC}$ is private, the error from Part$_{DC}$ can lead to two scenarios: (1) falsely skipping a publication or (2) falsely performs a publication.
	Both cases are bounded by the error in Part$_{DC}$.
	In Part$_{DC}$, we execute the SM with OBS. The sensitivity of $dis$ is $1/d$.
	For the average error $\overline{err}_{DC}$ of each time slot in window size $w_L$,
	according to Lemma~\ref{lemma:sm_utility}, we have:
	\begin{equation}\label{equation:pbd_m1}
		\begin{aligned}
			\overline{err}_{DC} &< \min{\left({\frac{2}{d^2\min_{i\in[n]}(\frac{\epsilon_i}{2w_i})^2}},Z+{\frac{2}{d^2\max_{i\in[n]}(\frac{\epsilon_i}{2w_i})^2}}\right)}\\
			&= \min{\left({\frac{8}{d^2\epsilon_L^2}},Z+{\frac{8}{d^2\epsilon_R^2}}\right)}.
		\end{aligned}
	\end{equation}
	
	Based on Equation~\eqref{equation:pbd_m1} and~\eqref{equation:pbd_m2}, we can get the average error upper bound as $\overline{err}_{DC}+\overline{err}_{NOP}$.
	
\end{proof}

\solutionMethodA{} achieves low error when the number of publications $\tilde{s}$ per window is small.
However, the error increases exponentially with $\tilde{s}$.
Additionally, the error in Part$_{DC}$ (the first part of the error upper bound in \solutionMethodA{}) rises as $w_L$ increases, however, it diminishes as $d$ increases.
This is because a large $d$ reduces sensitivity leading to smaller noise error.

For \solutionMethodB{}, assume $\alpha$ skipped publications occur before a publication. 
Let $\epsilon_{\tilde{L}}$ and $\epsilon_{\tilde{R}}$ be the minimum and maximum publication privacy budget among all users at time slots $t=w_{L}$ and  $t=(\alpha+1)$, respectively. 
According to the \solutionMethodB{} process, there will be $\alpha$ nullified publications after the publication.
These nullified publications are filled by the last time slot's publication without comparison.
Consequently, the nullified publication error depends on the data distribution at nullified time slots.
We denote the average error of each nullified publication in \solutionMethodB{} as $\overline{err}_{nlf}$.
For \solutionMethodB{}, we have Theorem~\ref{Thm:solutionMethodB_utility_analysis} as follows.
\begin{theorem}\label{Thm:solutionMethodB_utility_analysis}
	The average error per time slot in \solutionMethodB{} is at most $\min(\frac{8}{d^2\epsilon_L},Z+\frac{8}{d^2\epsilon_R})+\frac{1}{2\alpha+1}(\widetilde{err}_{NOP}^{(s,p)}+\alpha\cdot\overline{err}_{nlf})$ 
	where $\widetilde{err}_{NOP}^{(s,p)}$ is $\min(\frac{2}{\epsilon_L^2}H^2_{\alpha+1},(\alpha+1)Z+\frac{2}{\epsilon_R^2}H^2_{\alpha+1})$ when $\alpha\leq w_{L}$ and $\min(\frac{2}{\epsilon_L^2}H^2_{w_L},w_L Z+\frac{2}{\epsilon_R^2}H^2_{w_L})+ (\alpha-w_L+1)\min(\frac{2}{\epsilon_{\tilde{L}}^2}, Z+\frac{2}{\epsilon_{\tilde{R}}^2})$ when $\alpha>w_{L}$ and
	$Z=(n-n_A)(n-n_A+\frac{1}{4})$ and $H^2_{x}$ is the $x$-th square harmonic number, if there are $\alpha$ skipped publications occur in average before each publication.
\end{theorem}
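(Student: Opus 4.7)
The plan is to follow the same decomposition used in the proof of Theorem~\ref{Thm:solutionMethodA_utility_analysis}, splitting the average per-slot error of \solutionMethodB{} as $\overline{err}=\overline{err}_{DC}+\overline{err}_{NOP}$ and bounding each piece separately. This split is legitimate because Part$_{DC}$ and Part$_{NOP}$ use disjoint halves of each user's budget and the two errors are independent by construction of Algorithm~\ref{alg:PBA}.

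For the dissimilarity term, I would observe that Part$_{DC}$ of Algorithm~\ref{alg:PBA} is bit-for-bit identical to Part$_{DC}$ of Algorithm~\ref{alg:PBD}: every time slot runs the SM with OBS on a query of sensitivity $1/d$ using per-user share $\bar{\epsilon}_i/2=\epsilon_i/(2w_i)$. Hence I can reuse Equation~\eqref{equation:pbd_m1} together with Lemma~\ref{lemma:sm_utility} verbatim, yielding $\overline{err}_{DC}\le \min(8/(d^2\epsilon_L^2),\,Z+8/(d^2\epsilon_R^2))$, which is exactly the first term of the claimed bound.

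The publication term is the substantive step. I would organize the stream into the steady-state cycle implied by the hypothesis: $\alpha$ skipped slots, one publication that absorbs the shares unused during those $\alpha$ slots, and $\alpha$ nullified slots forced by that absorption, for a total cycle length of $2\alpha+1$. The key observation is that at the $k$-th slot after the previous publication, Algorithm~\ref{alg:PBA} would allocate $\min(k,w_i)$ shares per user, and the skip test $dis\le\sqrt{err_{opt}^{(2)}}$ guarantees that any skipped slot inherits an approximation error bounded by the would-be publication error at that slot. Applying Lemma~\ref{lemma:sm_utility} position-by-position to the scaled privacy pair set then gives a per-slot bound of $\min(2/(k\epsilon_L/2)^2,\,Z+2/(k\epsilon_R/2)^2)$ while $k\le w_L$, and the capped bound $\min(2/\epsilon_{\tilde{L}}^2,\,Z+2/\epsilon_{\tilde{R}}^2)$ once $k>w_L$, where $\epsilon_{\tilde L},\epsilon_{\tilde R}$ enter through the capped publication budget $\epsilon_i/2$. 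Summing $k=1,\ldots,\alpha+1$ collapses the $1/k^2$ terms into a square-harmonic number $H^2_{\alpha+1}$ in the uncapped regime $\alpha\le w_L$, and into $H^2_{w_L}$ plus $\alpha-w_L+1$ identical capped contributions in the regime $\alpha>w_L$, reproducing the two branches of $\widetilde{err}_{NOP}^{(s,p)}$. The remaining $\alpha$ nullified slots in each cycle merely reuse the previous publication and by definition contribute $\overline{err}_{nlf}$ each; averaging the total cycle error over its $2\alpha+1$ slots produces the leading factor $1/(2\alpha+1)$ and combines with $\overline{err}_{DC}$ into the stated upper bound.

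The main obstacle will be step three, specifically the conversion of the stochastic skip decision into a deterministic per-position error bound: although $dis$ is noisy, I must argue that whenever the algorithm skips at cycle-position $k$, the approximation error inherits the same Lemma~\ref{lemma:sm_utility} guarantee that a would-be publication with $\min(k,w_i)$ absorbable shares would receive. The case split at $k=w_L$ needs care as well, because this is where the cap $\min(t_{A,i},w_i)$ in Line~\ref{publicationBudget} of Algorithm~\ref{alg:PBA} activates and the new quantities $\epsilon_{\tilde L},\epsilon_{\tilde R}$ must be correctly tied to the capped budget $\epsilon_i/2$ at time $t=w_L$ (versus $(\alpha+1)\bar{\epsilon}_i$ at time $t=\alpha+1$).
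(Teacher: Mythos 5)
Your proposal follows the paper's proof essentially step for step: the same split $\overline{err}=\overline{err}_{DC}+\overline{err}_{NOP}$, the same verbatim reuse of Equation~\eqref{equation:pbd_m1} and Lemma~\ref{lemma:sm_utility} for Part$_{DC}$, the same $(2\alpha+1)$-slot cycle ($\alpha$ skips, one absorbing publication, $\alpha$ nullified slots) with a position-by-position application of Lemma~\ref{lemma:sm_utility} collapsing into square harmonic numbers, the same case split at $w_L$ introducing $\epsilon_{\tilde L},\epsilon_{\tilde R}$, and the same treatment of nullified slots via $\overline{err}_{nlf}$. The one wrinkle is your per-position budget $k\epsilon_L/2$ versus the paper's $k\epsilon_L$ (the paper silently drops the Part$_{NOP}$ halving at this step), which would turn the constant $2/\epsilon_L^2$ in front of $H^2_{\alpha+1}$ into $8/\epsilon_L^2$; aside from that constant, the argument, including the acknowledged looseness in converting the noisy skip decision into a deterministic per-slot bound, matches the paper's.
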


\begin{proof}
	Similar to \solutionMethodA{}, we first analyze the error of Part$_{NOP}$ in \solutionMethodB{} by assuming Part$_{DC}$ is not private.
	We then add the error of Part$_{DC}$, which is identical to that in \solutionMethodA{}, to obtain the final total error.
	When Part$_{DC}$ is not private, the error stems from Part$_{NOP}$.	
	In Part$_{NOP}$, each publication corresponds to $\alpha$ skipped publications preceding it and $\alpha$ nullified publications succeeding it.
	
	For each user $u_i$'s skipped publication, the publication privacy budget lower bound doubles with each time slot increase until it reaches $\epsilon_i/2$ or a publication occurs.
	For example, in Figure~\ref{PBD_publication_lower bound_proof}, where $\alpha=5$, the publication time slot is $t_6$.
	At time slot $t_1$, each $u_i$'s publication budget lower bound is $\epsilon_i/(2w_i)$.
	Take $u_1$ as an example: it reaches $\epsilon_1/2$ at time slot $t_4$. 
	The publication lower bound for $u_1$ remains at $\epsilon_1/2$ until time slot $t_6$.
	\begin{figure}[t!]
		\centering
		\includegraphics[width=0.45\textwidth]{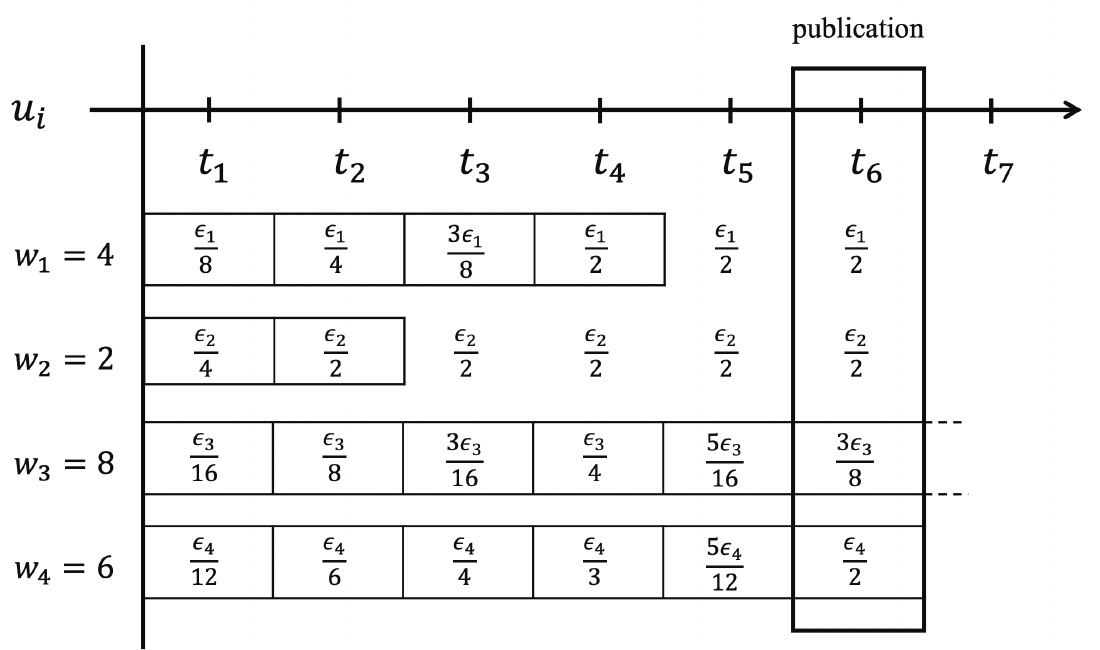}
		\caption{An example of the publication budget lower bound in \solutionMethodB{}.}\label{PBD_publication_lower bound_proof}
	\end{figure}
	Let the publication budget lower bound set for all users at skipped time slots (spanning $\alpha$ time slot) be $\hat{\vectorfont{\epsilon}}=\{\vectorfont{\epsilon}_1,\vectorfont{\epsilon}_2,\ldots ,\vectorfont{\epsilon}_{\alpha}\}$.
	Then, the error upper bound of each skipped publication is the error of publishing new data using $\vectorfont{\epsilon}_{k}$ ($k\in[\alpha]$).
	For example in Figure~\ref{PBD_publication_lower bound_proof}, the error upper bound at $t_3$ is the error of publication a new obfuscated statistic result using $\{\frac{3\epsilon_1}{2},\frac{\epsilon_2}{2},\frac{3\epsilon_3}{16},\frac{\epsilon_4}{4}\}$.
	
	Let $Z=(n-n_{A})(n-n_{A}+\frac{1}{4})$ be the sampling error upper bound, where $n_{A}$ is the number of users with maximum value of $\frac{\epsilon_i}{w_i}$.
	We now consider two cases: $\alpha\leq w_{L}$ and $\alpha > w_{L}$.
	
	\textbf{(1) case 1}: $\alpha\leq w_{L}$.
	
	In this case, the publication budget lower bound doubles with each time slot increase.
	Let $err_{NOP}^{(sk)}(\alpha)$ and $err_{NOP}^{(pb)}$ be the total error upper bounds of the $\alpha$ skipped publications and the publication in Part$_{NOP}$, respectively.
	Let $err_{NOP}^{(s,p)}$ be the error of all skipped publications and the publication in Part$_{NOP}$.
	According to Lemma~\ref{lemma:sm_utility}, we have
	\begin{equation}\label{skipped_error}
		\begin{aligned}
			err_{NOP}^{(sk)}(\alpha) &< \sum_{k\in[\alpha]}\min{\left(\frac{2}{(k\epsilon_L)^2},Z+\frac{2}{(k\epsilon_R)^2}\right)} \\
			&\leq \min{\left(\frac{2}{\epsilon_L^2}H^{2}_{\alpha},\alpha Z+\frac{2}{\epsilon_R^2}H^{2}_{\alpha}\right)} \\
		\end{aligned}
	\end{equation}
	and 
	\begin{equation}\label{skipped_publication_error}
		\begin{aligned}
			err_{NOP}^{(s,p)} &< err_{NOP}^{(sk)}(\alpha) + err_{NOP}^{(pb)} =err_{NOP}^{(sk)}(\alpha+1)\\
			&= \min{\left(\frac{2}{\epsilon_L^2}H^{2}_{\alpha+1},(\alpha+1)Z+\frac{2}{\epsilon_R^2}H^{2}_{\alpha+1}\right)}.
		\end{aligned}
	\end{equation}
	Thus, we derive the average error upper bound $\overline{err}_{NOP}$ of each time slot in Part$_{NOP}$ as 
	\begin{equation}\label{sp_err_1}
		\begin{aligned}
			\overline{err}_{NOP} &< \frac{1}{2\alpha+1}(\widetilde{err}_{NOP}^{(s,p)}+\alpha\cdot\overline{err}_{nlf}),
		\end{aligned}
	\end{equation}
	where $\widetilde{err}_{NOP}^{(s,p)}$ is the final value in Equation~\eqref{skipped_publication_error}.
	
	\textbf{(2) case 2}: $\alpha>w_{L}$.

	In this case, we have
	\begin{equation}\label{skipped_publication_error2}
		\begin{split}
			err_{NOP}^{(s,p)} <& err_{NOP}^{(sk)}(w_L) + \sum_{k=w_L+1}^{\alpha+1}\min{\left(\frac{2}{\epsilon_{\tilde{L}}^2},Z+\frac{2}{\epsilon_{\tilde{R}}^2}\right)} \\
			=& err_{NOP}^{(sk)}(w_L) + (\alpha-w_L+1)\min{\left(\frac{2}{\epsilon_{\tilde{L}}^2}, Z+\frac{2}{\epsilon_{\tilde{R}}^2}\right)}\\
			<& \min{\left(\frac{2}{\epsilon_L^2}H^{2}_{w_L},w_L Z+\frac{2}{\epsilon_R^2}H^{2}_{w_L}\right)} + (\alpha-w_L+1)\min{\left(\frac{2}{\epsilon_{\tilde{L}}^2}, Z+\frac{2}{\epsilon_{\tilde{R}}^2}\right)}.
		\end{split}
	\end{equation}
	Therefore, we obtain the average error upper bound $\overline{err}_{NOP}$ for each time slot in Part$_{NOP}$ as 
	\begin{equation}\label{sp_err_2}
		\begin{aligned}
			\overline{err}_{NOP} &< \frac{1}{2\alpha+1}(\widetilde{err}_{NOP}^{(s,p)}+\alpha\cdot\overline{err}_{nlf})
		\end{aligned}
	\end{equation}
	where $\widetilde{err}_{NOP}^{(s,p)}$ is the value derived in Equation~\eqref{skipped_publication_error2}.
	
	When Part$_{DC}$ is private, its error is identical to that in \solutionMethodA{}:
	\begin{equation}\label{methodB_m_1_error}
		\overline{err}_{DC}<\min{\left({\frac{8}{d^2\epsilon_L^2}},Z+{\frac{8}{d^2\epsilon_R^2}}\right)}.
	\end{equation}
	
	Based on Equation~\eqref{methodB_m_1_error}, \eqref{sp_err_1} and \eqref{sp_err_2}, we can derive the average error upper bound for each time slot in \solutionMethodB{} as:
	\begin{equation}\notag
		\begin{aligned}
			\min{\left({\frac{8}{d^2\epsilon_L^2}},Z+{\frac{8}{d^2\epsilon_R^2}}\right)} + \frac{1}{2\alpha+1}(\widetilde{err}_{NOP}^{(s,p)}+\alpha\cdot\overline{err}_{nlf}),
		\end{aligned}
	\end{equation}
	where $\widetilde{err}_{NOP}^{(s,p)}$ is the final result from Equation~\eqref{skipped_publication_error} when $\alpha\leq w_L$, and from Equation~\eqref{skipped_publication_error2} when $\alpha> w_L$.
\end{proof}

\section{Experiments}\label{experiment}
\begin{figure}[t!]\centering
	\subfigure[][{\scriptsize Taxi}]{
		\scalebox{0.26}[0.26]{\includegraphics{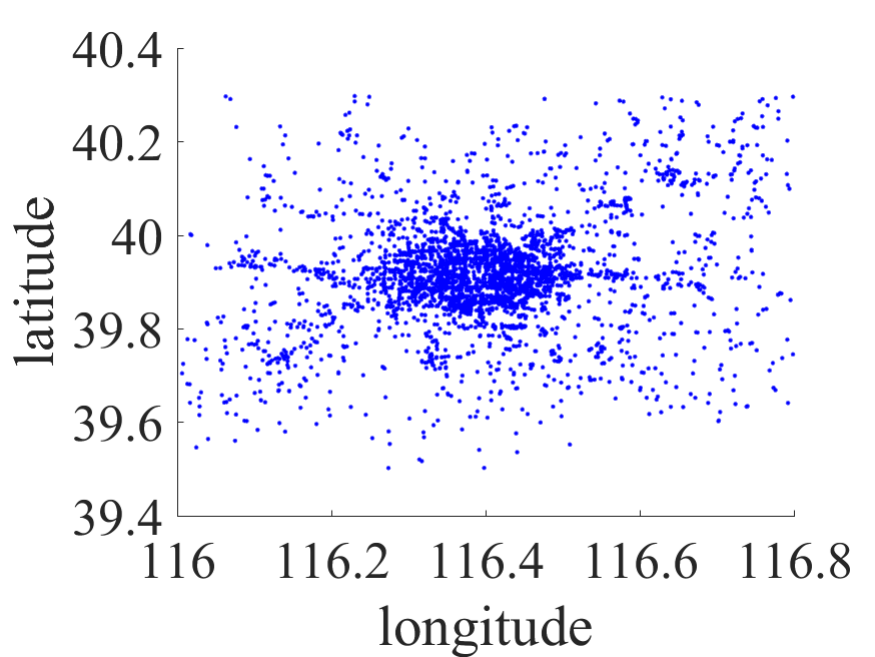}}
		\label{subfig:trajectory}}
	\subfigure[][{\scriptsize Foursquare}]{
		\scalebox{0.26}[0.26]{\includegraphics{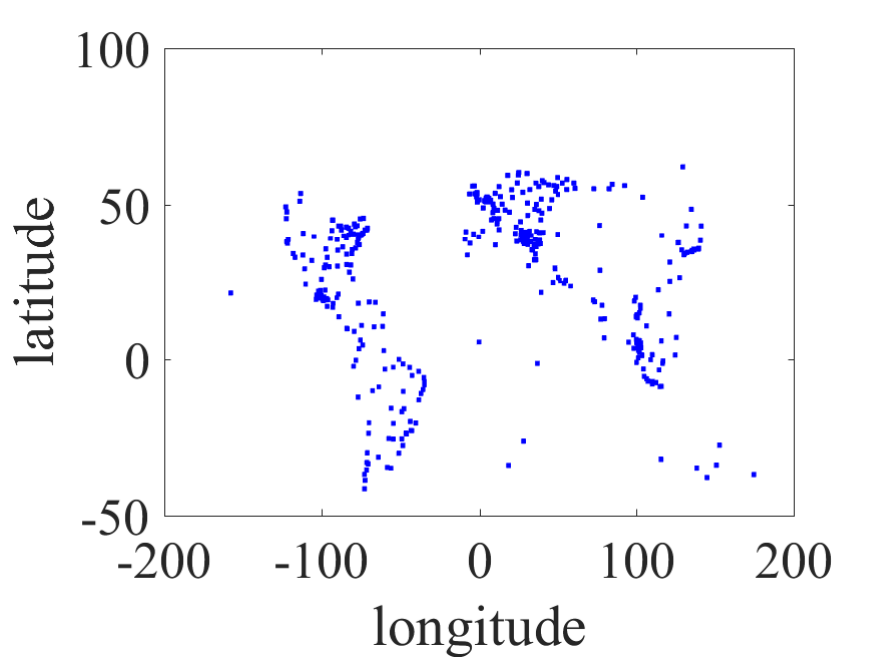}}
		\label{subfig:checkin}}
	\caption{Illustration of Real datasets.}
	\label{fig:real_dataset}
\end{figure}
\subsection{Datasets}

\noindent
\textbf{Real datasets.} We use two real-world datasets, \textit{\trajectoryDatasetName{}}~\cite{DBLP:conf/kdd/YuanZXS11,DBLP:conf/gis/YuanZZXXSH10} and \textit{\checkInDatasetName{}}~\cite{DBLP:journals/tist/YangZQ16,DBLP:journals/jnca/YangZCQ15}, to evaluate the performance of our algorithms. 

\textit{\trajectoryDatasetName{}.} It contains real-time trajectories of $10,357$ taxis' in Beijing from February $2$ to February $8$, 2008.
Each taxi has up to $154,699$ records, where each record comprises~\textit{taxi id}, \textit{data time}, \textit{longitude} and \textit{latitude}. 
For the spatial dimension, we first remove all duplicate records, then extract records with longitude between $116$ and $116.8$ and latitude between $39.5$ and $40.3$, resulting in $14,859,377$ records. We denote this area ($[116,116.8]\times[39.5,40.3]$) as $A_{E}$. 
Figure~\ref{subfig:trajectory} shows $50\%$ of uniformly extracted trajectory points in $A_{E}$.
We further divide $A_{E}$ uniformly into a $10\times 10$ grids, designating these $100$ cells as the location space.
For the time dimension, we sample records every minute and get $8,889$ records.

\textit{\checkInDatasetName{}.} It contains $33,278,683$ Foursquare check-ins from $266,909$ users, during April 2012 to September 2013. 
Each record consists of user id, venue id (place), and time. 
We convert the venue id to the country where the venue is located.
After removing invalid records, we uniformly extract $5\%$ of users' check-ins as shown in Figure \ref{subfig:checkin}.
We set the publication time interval to $100$ minutes, thus divide the chick-ins period into $7,649$ time slots.

\noindent\textbf{Synthetic datasets.} We generate three binary stream datasets using different sequence models. 
We set the length of each binary stream as $T$ and the number of users as $N$.
For each stream, we first generate a probability sequence $(p_1, p_2,..., p_T)$.
At each time slot $t$, each user's real value is set to $1$ with probability $p_t$ and $0$ otherwise.
The probability function we use are as follows:
\begin{itemize}[leftmargin=*]
	\item \tlnsDatasetName{} function. In TLNS, $p_t=p_{t-1}+\algvar{N}(0,Q)$, where $\algvar{N}(0,Q)$ is Gaussian noise with standard variance $\sqrt{Q}=0.0025$. We set $p_0=0.05$ as the initial value.
	If $p_t<0$, we set $p_t=0$; If $p_t>1$, we set $p_t=1$.
	\item \sinDatasetName{} function. In Sin, $p_t=A\sin{(\omega t)}+h$, where $A=0.05$, $\omega=0.01$ and $h=0.075$.
	\item \logDatasetName{} function. In Log, $p_t=A/(1+e^{-bt})$, where $A=0.25$ and $b=0.01$.
\end{itemize}

\subsection{Experiment Setup}
We compare our \solutionMethodA{} and \solutionMethodB{} with two non-personalized methods:  \solutionCMPATotalName{} (\solutionCMPA{}) and \solutionCMPBTotalName{} (\solutionCMPB{})~\cite{DBLP:journals/pvldb/KellarisPXP14}. We also compare against a simple personalized LDP method, \solutionCMPPLDPUTotalName{} (\solutionCMPPLDPU{}), which extends \solutionCMPLDPUTotalName{} (\solutionCMPLDPU{})~\cite{DBLP:conf/sigmod/RenSYYZX22} by replacing the inner CDP mechanism with an LDP mechanism.

Let $\epsilon$ and $w$ be the privacy budget and window size in non-personalized static methods (\solutionCMPA{} and \solutionCMPB{}).
For non-personalized static methods, we set the $\epsilon$ to vary from $0.2$ to $1.0$ and $w$ to vary from $40$ to $200$.
To make our \solutionMethodA{} and \solutionMethodB{} comparable with \solutionCMPA{} and \solutionCMPB{}, we set the lower bound of each user's privacy budget as $\epsilon$ and the upper bound of each user's window size as $w$ in \solutionMethodA{} and \solutionMethodB{} to match the requirement of privacy level.

Given $\tilde{n}$ different privacy budgets $\vectorfont{\tilde{\epsilon}}=\{\epsilon_1,..., \epsilon_{\tilde{n}}\}$, let $N(\epsilon_i)$ be the count of budget value $\epsilon_i$, and  $N(\vectorfont{\tilde{\epsilon}})=\sum_{i=1}^{\tilde{n}}N(\epsilon_i)$ be the total count of all the budgets.
For any $\epsilon_i\in\vectorfont{\tilde{\epsilon}}$, we define the privacy budget ratio of $\epsilon_i$ as $\frac{N(\epsilon_i)}{N(\vectorfont{\tilde{\epsilon}})}$.
Similarly, we define the window size ratio of any $w_i$ in different window sizes $\vectorfont{\tilde{w}}=\{w_1,...,w_{\tilde{n}}\}$ as $\frac{N(w_i)}{N(\tilde{\vectorfont{w}})}$.
We set the privacy domain as $\{0.5, 1.0\}$ and the window size domain as $\{10, 20\}$.
We alter the ratio $o$ of $\epsilon_{i}=0.5$ and  $w_{i}=10$ from $0.1$ to $0.9$.

\begin{table}[t!]\vspace{-2ex}
	\begin{center}
		{\small  
			\caption{\small Experimental settings.} \label{tab:settings}
			\begin{tabular}{l|l}\hline
				{\bf \qquad Parameters \qquad \quad } & {\bf \qquad  \qquad Values \qquad } \\ \hline
				static privacy budget $\epsilon$    &       $0.2, 0.4, \textbf{0.6}, 0.8, 1.0$ \\
				static window size $w$      &       $40, 80, \textbf{120}, 160, 200$\\
				personalized privacy budget $\epsilon_i$ & $\epsilon,\ldots, 0.8, 1.0$ \\
				personalized window size $w_i$ & $40, 80,\ldots, w$\\
				users' quantity ratio 	$o$	&  0.1, 0.3, \textbf{0.5}, 0.7, 0.9\\
				\hline
			\end{tabular}
		}\vspace{-2ex}
	\end{center}
\end{table}

The parameters are shown in Table~\ref{tab:settings}, where the default values are in bold font.   
We run the experiments on an Intel(R) Xeon(R) Silver 4210R CPU @ 2.4GHz with 128 RAM in Java.
Each experiment is run 10 times, and we report the average result.

\begin{figure*}[t!]\centering\vspace{-1ex}
	\subfigure{
		\scalebox{0.33}[0.33]{\includegraphics{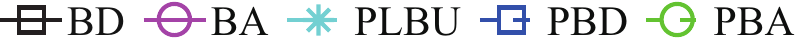}}}\hfill\\
	\addtocounter{subfigure}{-1}\vspace{-2.5ex}
	\subfigure[][{\small \trajectoryDatasetName{}}]{
		\scalebox{0.238}[0.238]{\includegraphics{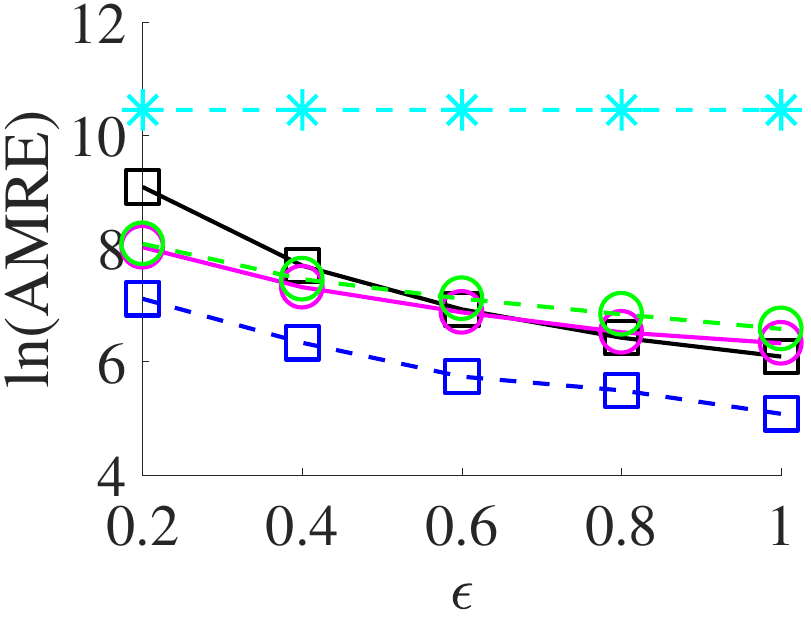}}
		\label{subfig:trajectory_budget_change}}\hfill
	\subfigure[][{\small \checkInDatasetName{}}]{
		\scalebox{0.238}[0.238]{\includegraphics{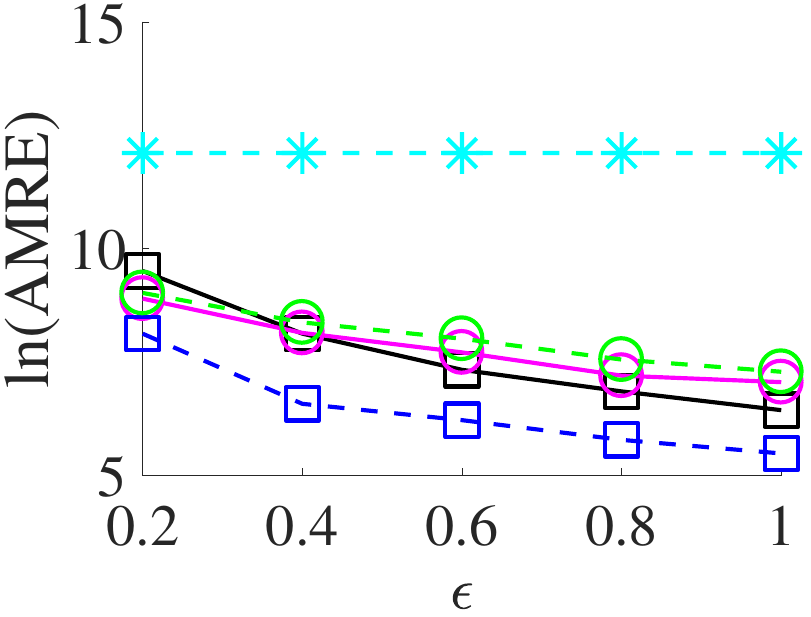}}
		\label{subfig:check_in_budget_change}}\hfill	
	\subfigure[][{\small \tlnsDatasetName{}}]{
		\scalebox{0.238}[0.238]{\includegraphics{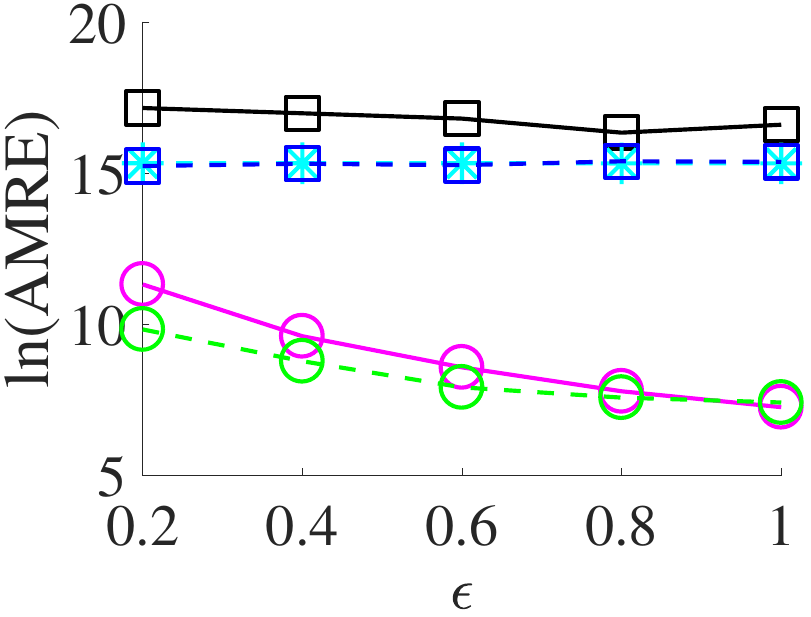}}
		\label{subfig:tlns_budget_change}}\hfill 
	\subfigure[][{\small \sinDatasetName{}}]{
		\scalebox{0.238}[0.238]{\includegraphics{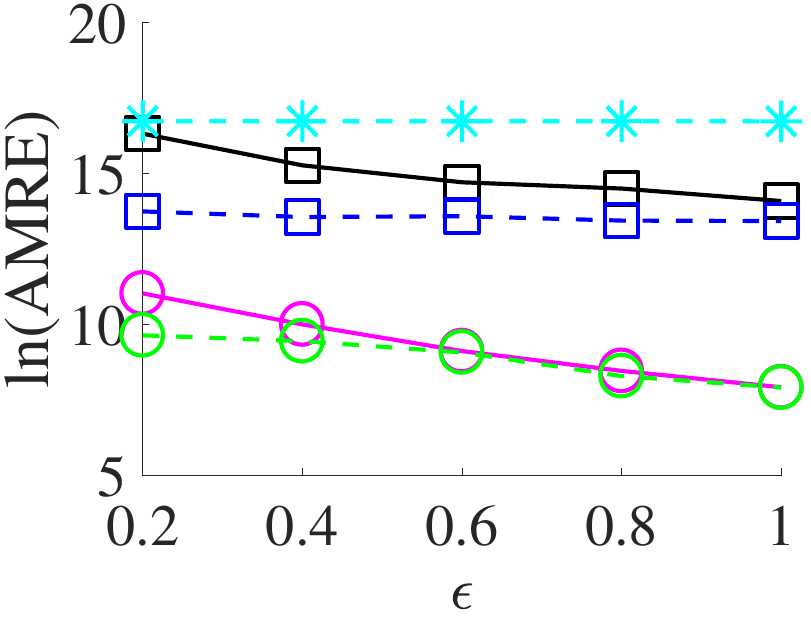}}
		\label{subfig:sin_budget_change}}\hfill 
	\subfigure[][{\small \logDatasetName{}}]{
		\scalebox{0.238}[0.238]{\includegraphics{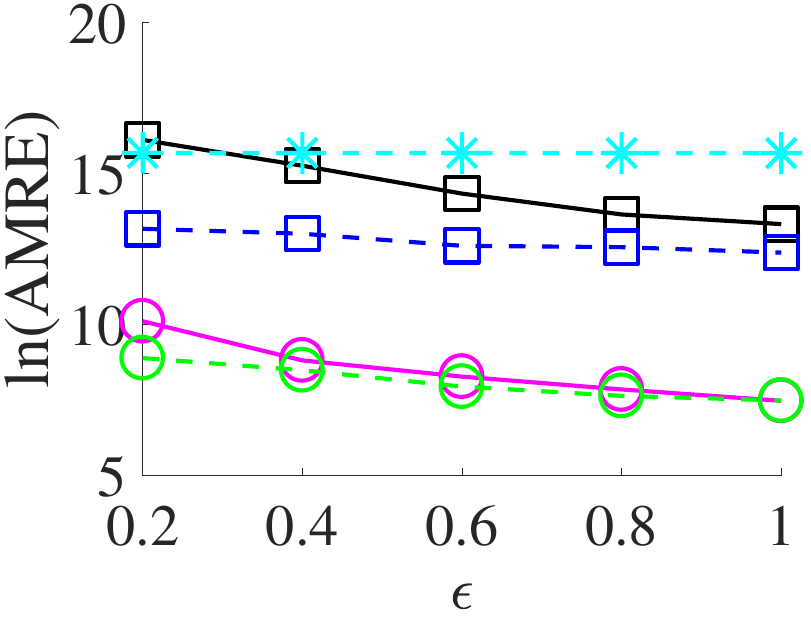}}
		\label{subfig:log_budget_change}}\hfill 
	\caption{\small $AMRE$ with $\epsilon$ varied.}
	\label{fig:alter_e}
\end{figure*}

\begin{figure*}[t!]\centering
	\subfigure{
		\scalebox{0.33}[0.33]{\includegraphics{figures/experiment_result_add/bar3.pdf}}}\hfill\\
	\addtocounter{subfigure}{-1}\vspace{-2.5ex}
	\subfigure[][{\small \trajectoryDatasetName{}}]{
		\scalebox{0.238}[0.238]{\includegraphics{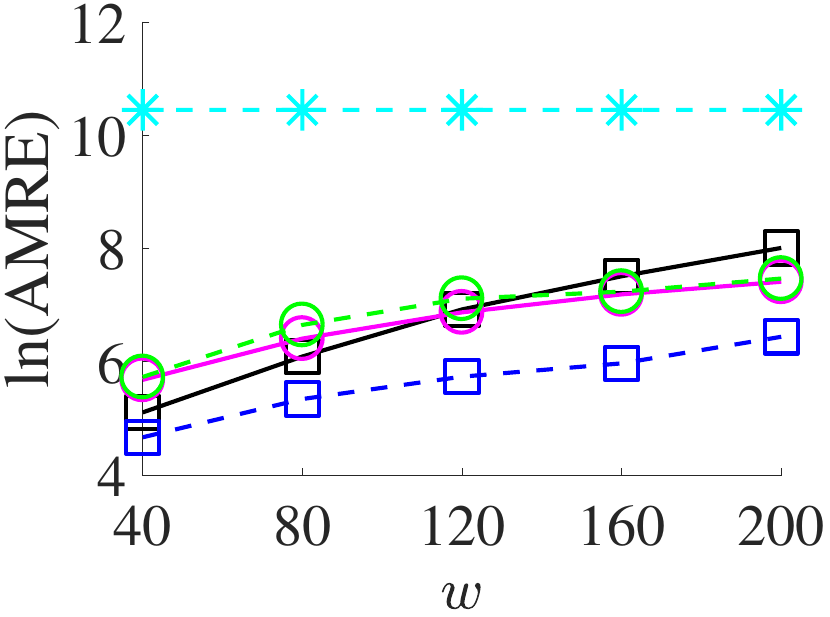}}
		\label{subfig:trajectory_window_size_change}}\hfill
	\subfigure[][{\small \checkInDatasetName{}}]{
		\scalebox{0.238}[0.238]{\includegraphics{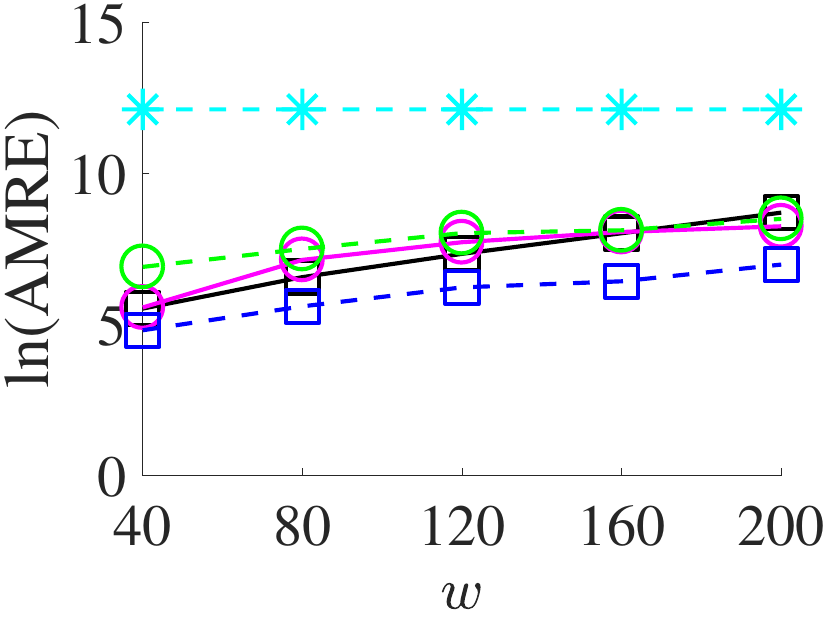}}
		\label{subfig:check_in_window_size_change}}\hfill	
	\subfigure[][{\small \tlnsDatasetName{}}]{
		\scalebox{0.238}[0.238]{\includegraphics{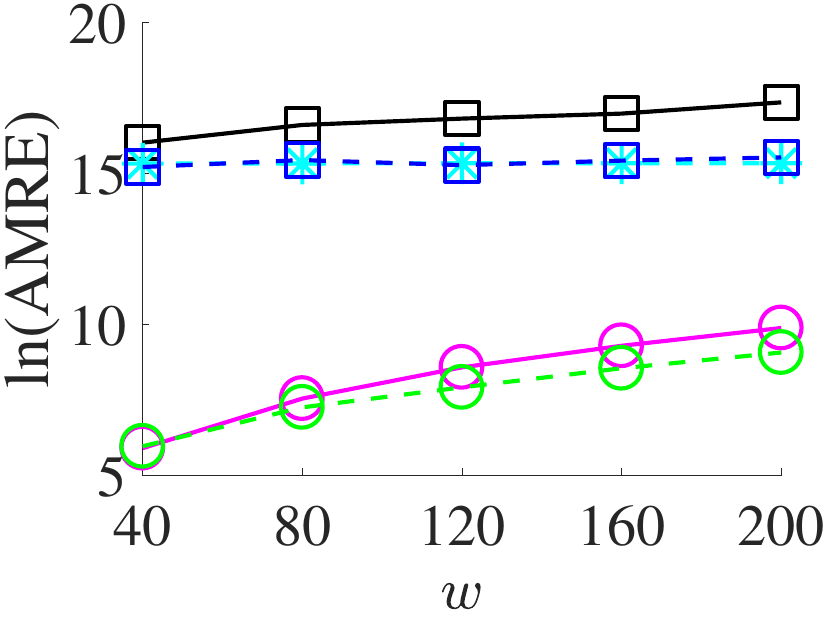}}
		\label{subfig:tlns_window_size_change}}\hfill 
	\subfigure[][{\small \sinDatasetName{}}]{
		\scalebox{0.238}[0.238]{\includegraphics{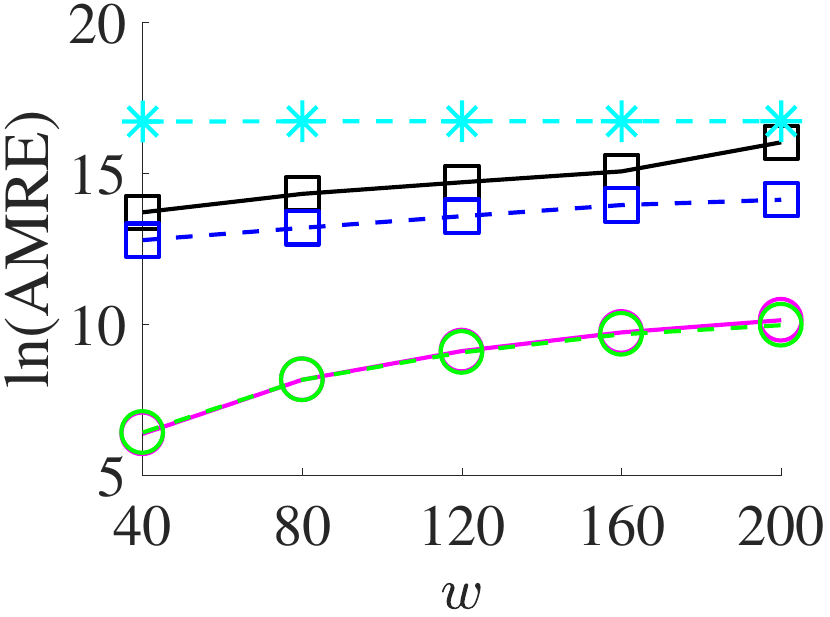}}
		\label{subfig:sin_window_size_change}}\hfill 
	\subfigure[][{\small \logDatasetName{}}]{
		\scalebox{0.238}[0.238]{\includegraphics{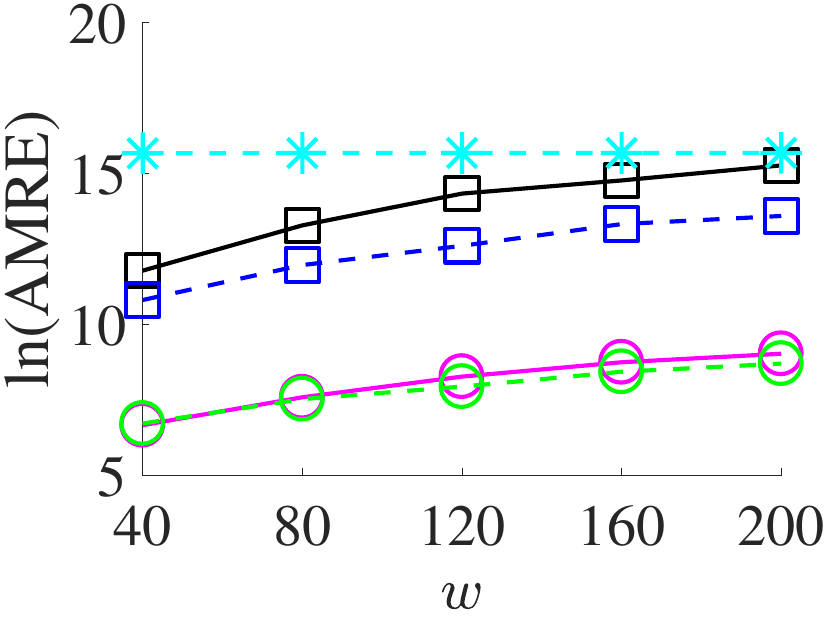}}
		\label{subfig:log_window_size_change}}\hfill 
	\caption{\small $AMRE$ with $w$ varied.}\vspace{-1ex}
	\label{fig:alter_w}
\end{figure*}

\begin{figure*}[t!]\centering
	\subfigure{
		\scalebox{0.33}[0.33]{\includegraphics{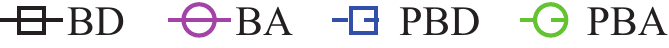}}}\hfill\\
	\addtocounter{subfigure}{-1}\vspace{-2.5ex}
	\subfigure[][{\small \trajectoryDatasetName{}}]{
		\scalebox{0.238}[0.238]{\includegraphics{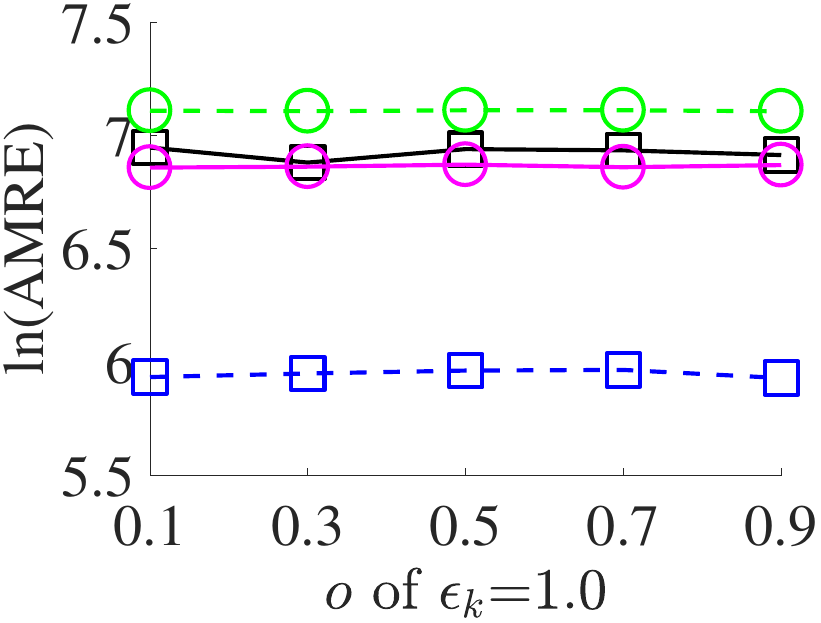}}
		\label{subfig:trajectory_ratio_change_two_budget}}\hfill
	\subfigure[][{\small \checkInDatasetName{}}]{
		\scalebox{0.238}[0.238]{\includegraphics{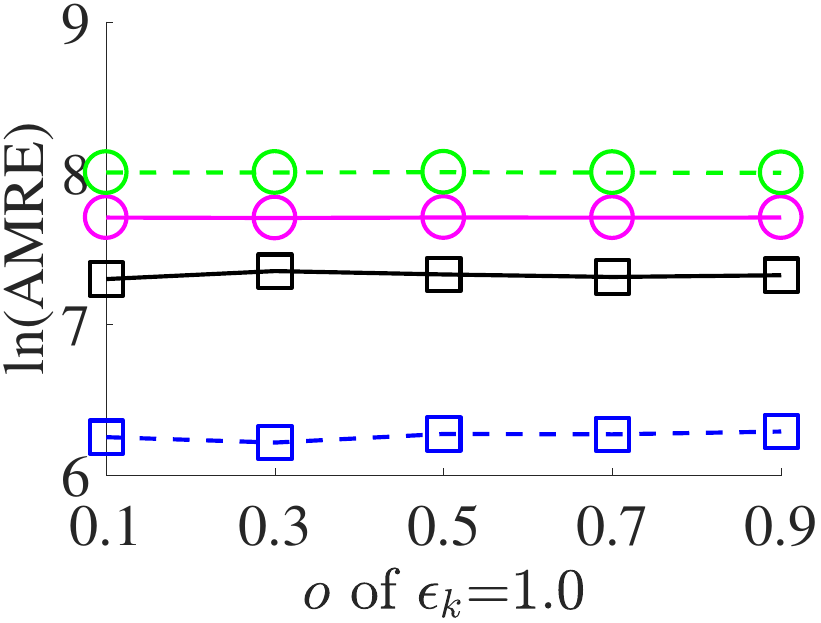}}
		\label{subfig:check_in_ratio_change_two_budget}}\hfill	
	\subfigure[][{\small \tlnsDatasetName{}}]{
		\scalebox{0.238}[0.238]{\includegraphics{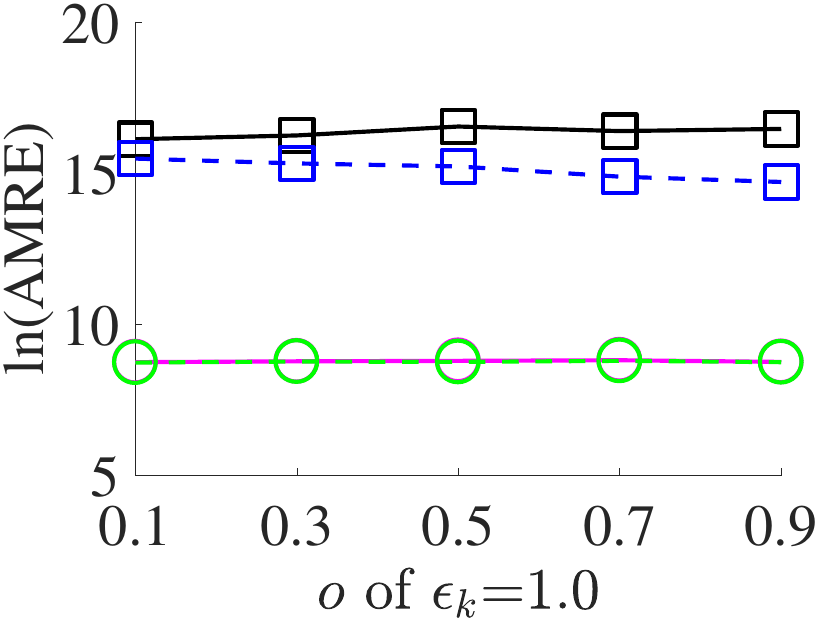}}
		\label{subfig:tlns_ratio_change_two_budget}}\hfill 
	\subfigure[][{\small \sinDatasetName{}}]{
		\scalebox{0.238}[0.238]{\includegraphics{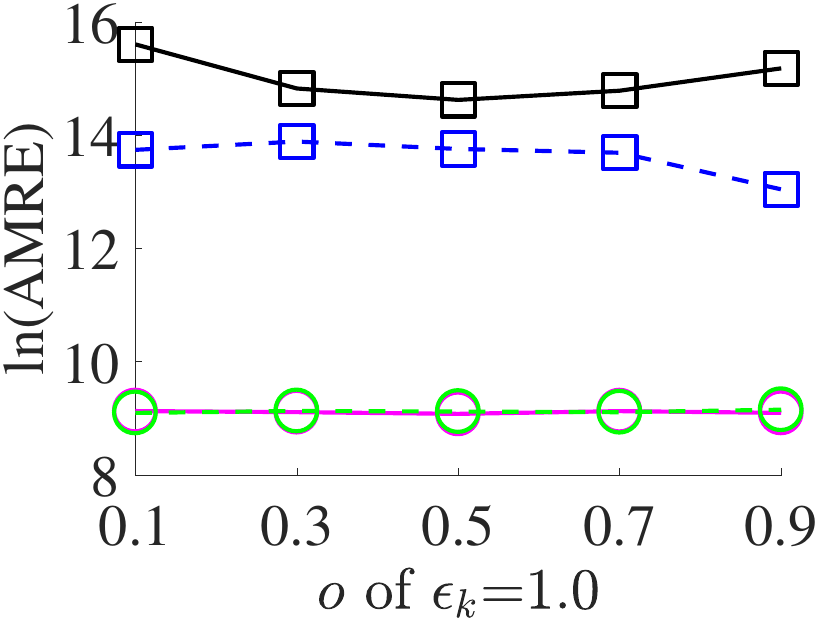}}
		\label{subfig:sin_ratio_change_two_budget}}\hfill 
	\subfigure[][{\small \logDatasetName{}}]{
		\scalebox{0.238}[0.238]{\includegraphics{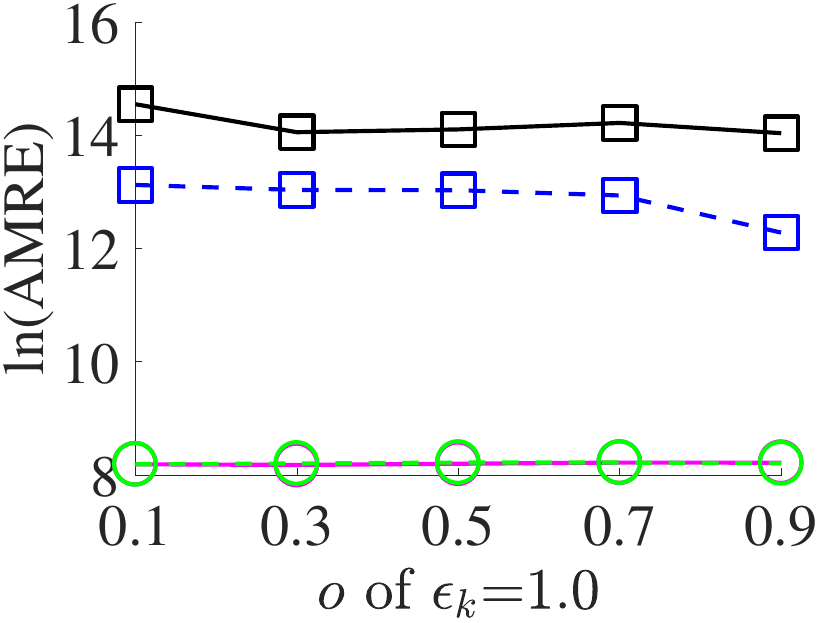}}
		\label{subfig:log_ratio_change_two_budget}}\hfill 
	\caption{\small $AMRE$ with  ratio for  privacy budget varied.}\vspace{-1ex}
	\label{fig:alter_ratio_two_budget}
\end{figure*}

\begin{figure*}[t!]\centering
	\subfigure{
		\scalebox{0.33}[0.33]{\includegraphics{figures/experiment_result/bar_2.pdf}}}\hfill\\
	\addtocounter{subfigure}{-1}\vspace{-2.5ex}
	\subfigure[][{\small \trajectoryDatasetName{}}]{
		\scalebox{0.238}[0.238]{\includegraphics{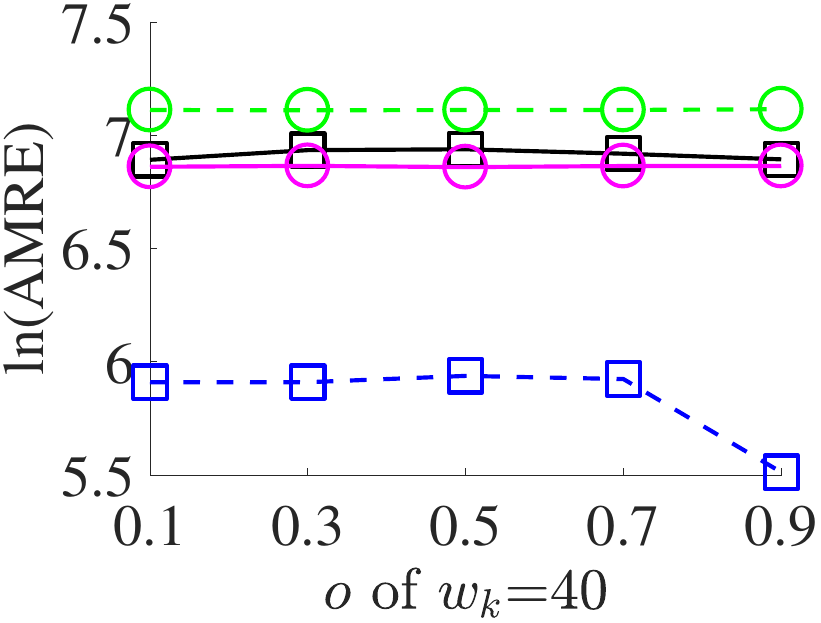}}
		\label{subfig:trajectory_ratio_change_two_window_size}}\hfill
	\subfigure[][{\small \checkInDatasetName{}}]{
		\scalebox{0.238}[0.238]{\includegraphics{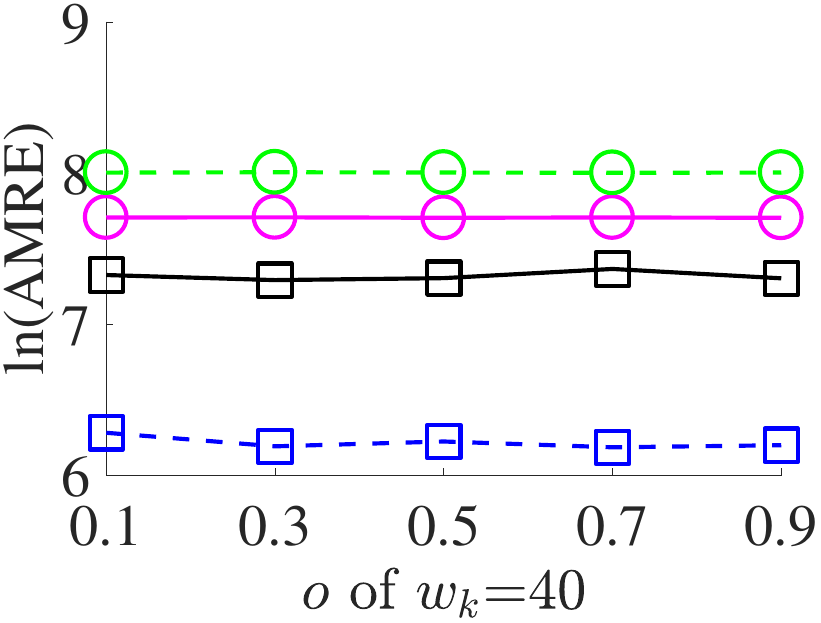}}
		\label{subfig:check_in_ratio_change_two_window_size}}\hfill	
	\subfigure[][{\small \tlnsDatasetName{}}]{
		\scalebox{0.238}[0.238]{\includegraphics{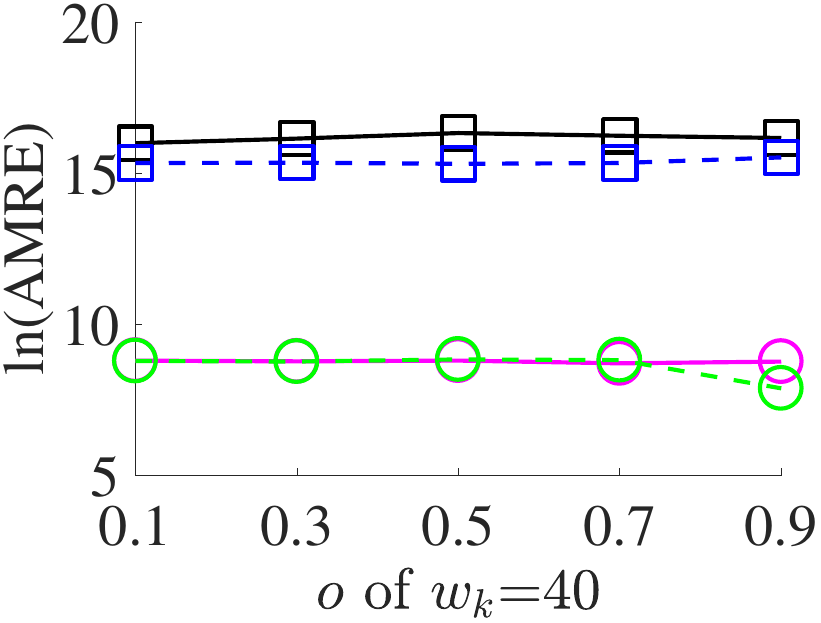}}
		\label{subfig:tlns_ratio_change_two_window_size}}\hfill 
	\subfigure[][{\small \sinDatasetName{}}]{
		\scalebox{0.238}[0.238]{\includegraphics{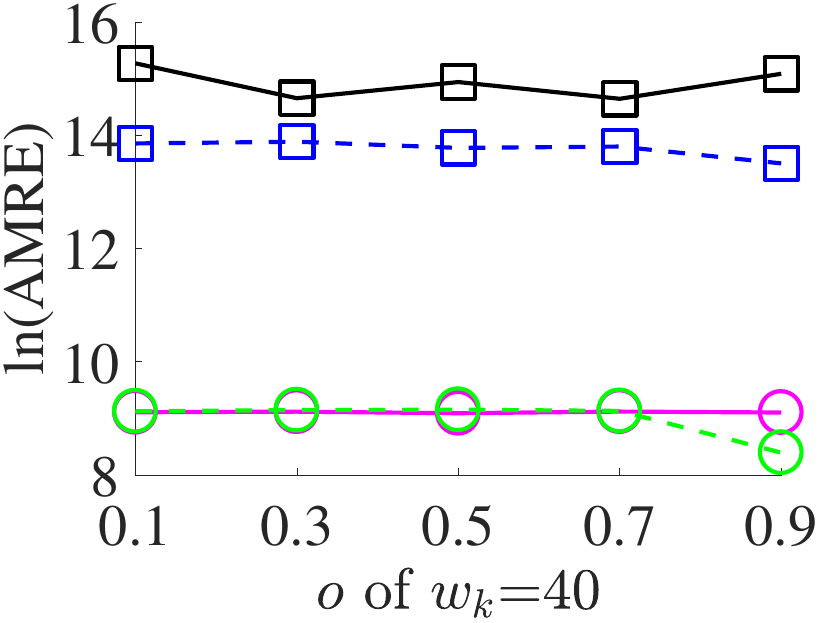}}
		\label{subfig:sin_ratio_change_two_window_size}}\hfill 
	\subfigure[][{\small \logDatasetName{}}]{
		\scalebox{0.238}[0.238]{\includegraphics{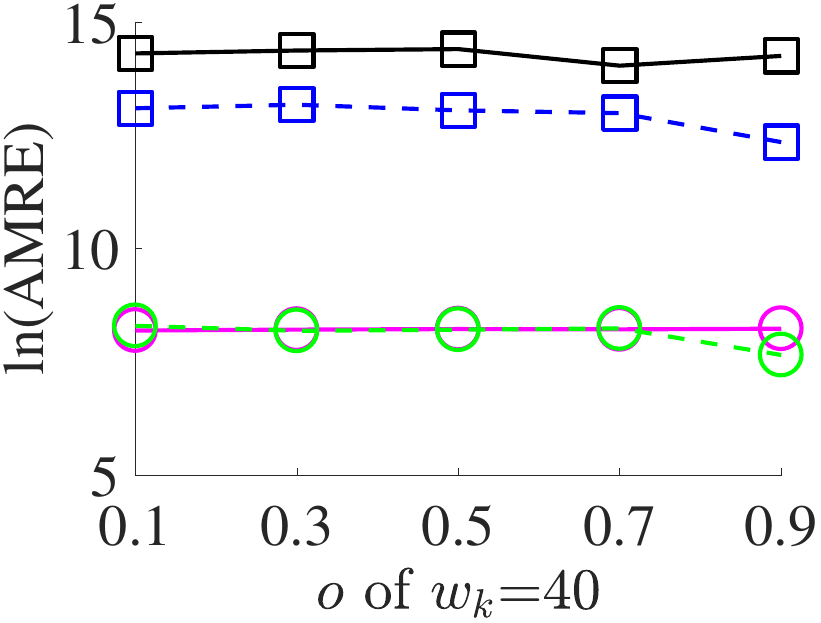}}
		\label{subfig:log_ratio_change_two_window_size}}\hfill 
	\caption{\small $AMRE$ with ratio for  window size varied.}
	\label{fig:alter_ratio_two_window_size}
\end{figure*}

\subsection{Measures}
We evaluate the performance of different mechanisms based on their running time and data utility. 
We measure data utility as \textit{Average Mean Relative Error} ($AMRE$) and \textit{Average Jensen-Shannon Divergence} ($AJSD$, $\bar{D}_{JS}$). Let $T$ represent the number of time slots and $d$ denote the dimension of data.

$AMRE$ is defined as the average value of Mean Relative Error ($MRE$), which is shown in Equation~\eqref{dataUtility}.

\begin{equation}\label{dataUtility}
	\begin{aligned}
		AMRE = \frac{1}{T}\sum_{\tau=1}^{T}MRE_{\tau} = \frac{1}{T}\sum_{\tau=1}^{T}\frac{1}{d}\|\vectorfont{r}_{\tau}-\vectorfont{c}_{\tau}\|_2^2.
	\end{aligned}
\end{equation}

$AJSD$ is defined as the average value of Jensen-Shannon Divergence ($JSD$, $D_{JS}$)~\cite{DBLP:journals/tit/Lin91}, which is based on Kullback-Leibler Divergence~\cite{kullback1951information}, as shown in Equation~\eqref{jsUtility}.
\begin{equation}\label{jsUtility}
	\begin{aligned}
		\bar{D}_{JS}(\vectorfont{r}\|\vectorfont{c}) &= \frac{1}{T}\sum_{\tau=1}^{T}D_{JS}(\vectorfont{r}\|\vectorfont{c})=\frac{1}{T}\sum_{\tau=1}^{T}\left(\frac{1}{2}D_{KL}(\vectorfont{r}\|\vectorfont{v})+\frac{1}{2}D_{KL}(\vectorfont{c}\|\vectorfont{v})\right) \\ &=\frac{1}{2T}\sum_{\tau=1}^{T}\sum_{j=1}^{d}\left(\vectorfont{r}_{\tau}(j)\log{\left(\frac{\vectorfont{r}_{\tau}(j)}{\vectorfont{v}_{\tau}(j)}\right)}+\vectorfont{c}_{\tau}(j)\log{\left(\frac{\vectorfont{c}_{\tau}(j)}{\vectorfont{v}_{\tau}(j)}\right)}\right),
	\end{aligned}
\end{equation}
where $\vectorfont{v}$ represents the average distribution of $\vectorfont{r}$ and $\vectorfont{c}$, i.e., $\vectorfont{v}(j)=\frac{1}{2}(\vectorfont{r}(j)+\vectorfont{c}(j))$.
For time slot $\tau$, $r_{\tau}(j)$ and $c_{\tau}(j)$ represent the $j$-th dimensional values in the obfuscated and original data, respectively.

\subsection{Overall Utility Analysis}\label{exp:utility}
Figure~\ref{fig:alter_e} shows the natural logarithm of $AMRE$ as the privacy budget $\epsilon$ varies.
Across all datasets, $AMRE$ decreases as $\epsilon$ increases, because a larger $\epsilon$ results in smaller noise variance, leading to a lower $AMRE$.
The decrease in $AMRE$ is more pronounced on real datasets compared to synthetic ones. 
It is because data density function changes rapidly in real datasets, while changing gradually in synthetic datasets.
When the density function changes rapidly, the dissimilarity at each time slot becomes large. In this case, \solutionMethodA{} publishes more new statistical results than \solutionMethodB{} because \solutionMethodA{} always reserves part of its privacy budget for the next time slot, even though the budget decreases over time within a window. Thus, \solutionMethodA{} leads to higher accuracy than \solutionMethodB{}.
When the density function changes gradually, the dissimilarity at each time slot remains small.
In this case, publishing one highly accurate statistical result at a time slot is more important than publishing multiple new statistical results.
Therefore, \solutionMethodB{} performs significantly better than \solutionMethodA{}.
\solutionCMPPLDPU{} performs worse than other methods across all datasets except for TLNS, since LDP methods achieve lower accuracy than CDP methods under the same privacy budget.
In real datasets, our \solutionMethodA{} consistently outperforms other methods.
The $AMRE$ of \solutionMethodA{} is on average $70.8\%$ ($17.5\%$ in terms of $\ln{(AMRE)}$) lower than that of \solutionCMPA{} on
\trajectoryDatasetName{} dataset and $69.6\%$ ($15.9\%$ in terms of $\ln{(AMRE)}$) lower on \checkInDatasetName{} dataset. 
Our \solutionMethodB{} performs slightly worse than \solutionCMPB{}, since our \solutionMethodB{} is more sensitive to noise in high-dimensional data.
For synthetic datasets, our \solutionMethodB{} consistently outperforms other methods.
Compared to \solutionCMPB{}, the $AMRE$ of \solutionMethodB{} is lower on average of $36.9\%$ ($6.0\%$ in terms of $\ln{(AMRE)}$) on \tlnsDatasetName{} dataset, $27.7\%$ ($4.2\%$ in terms of $\ln{(AMRE)}$) on \sinDatasetName{} dataset, and $28.9\%$ ($4.5\%$ in terms of $\ln{(AMRE)}$) on \logDatasetName{} dataset.
Moreover, our \solutionMethodA{} consistently outperforms \solutionCMPA{}.

Figure~\ref{fig:alter_w} shows the natural logarithm of $AMRE$ as the window size $w$ varies.
As $w$ increases, $AMRE$ rises gently, particularly on the synthetic datasets.
This occurs because a large window size results in a small privacy budget at each time slot, leading to increased error.
\solutionCMPPLDPU{} shows lower performance than other methods on all datasets except for TLNS, since LDP methods achieve lower accuracy than CDP methods under equivalent privacy budgets.
For real datasets, our \solutionMethodA{} achieves the lowest error compared to others methods.
The $AMRE$ of \solutionMethodA{} is on average $63.1\%$ ($15.6\%$ in terms of $\ln{(AMRE)}$) lower than that of \solutionCMPA{} on \trajectoryDatasetName{} dataset and $68.4\%$ ($16.5\%$ in terms of $\ln{(AMRE)}$) on \checkInDatasetName{} dataset. 
For synthetic datasets, our \solutionMethodB{} demonstrates the lowest error among all methods.
Compared to \solutionCMPB{}, the $AMRE$ of \solutionMethodB{} is lower by average of $35.1\%$ ($5.4\%$ in terms of $\ln{(AMRE)}$) for \tlnsDatasetName{}, $4.2\%$ ($0.4\%$ in terms of $\ln{(AMRE)}$) for \sinDatasetName{}, and $16.6\%$ ($2.2\%$ in terms of $\ln{(AMRE)}$) for \logDatasetName{}.
Moreover, our \solutionMethodA{} consistently outperforms \solutionCMPA{} across all datasets.

In summary, our \solutionMethodA{} demonstrates superior performance on real datasets, with $68\%$ smaller $AMRE$ on average than \solutionCMPA{}. 
For synthetic datasets, our \solutionMethodB{} outperforms \solutionCMPB{} with $24.9\%$ smaller $AMRE$ on average.

\subsection{Impact of User Requirement Type}
We define a set of users with  privacy  requirement as \textit{$(w_k,\epsilon_k)$-requirement type}.
In this subsection, we examine the impact of user type on the utility.

For the analysis, we consider $\epsilon_k\in\{0.6,1.0\}$ with a default of $0.6$, and the $w_k\in\{40,120\}$ with a default of $120$.
We first vary the users' quantity ratio of $\epsilon_k=1.0$ from $0.1$ to $0.9$ while keeping $w_k=120$. 
Then we vary the users' quantity ratio of $w_k=40$ from $0.1$ to $0.9$ while keeping $\epsilon_k=0.6$.

Figure~\ref{fig:alter_ratio_two_budget} illustrates the change in users' quantity ratio for $\epsilon_k=1.0$ from $0.1$ to $0.9$, with a fixed window size of $w_k=120$.
Figure~\ref{fig:alter_ratio_two_window_size} shows the effect on changing users' quantity for $w_k=40$ from $0.1$ to $0.9$, with a fixed privacy budget of $\epsilon_k=0.6$.
We observe that as the users' quantity ratio increases, the $AMRE$ remains relatively stable.
However, when the users' quantity ratio of $\epsilon_k=1.0$ or  $w_k=40$ exceeds $0.8$, we can see a significant decrease in $AMRE$ for \solutionMethodA{} and \solutionMethodB{}.
This occurs because when the ratios surpasses a certain threshold, the optimal budget from  OBS in Algorithm~\ref{alg:OPT_B_C} becomes dominated by a higher $\epsilon$, resulting in lower error.

\section{Conclusion}
We address the \problemDefineTotalName{} problem by proposing a mechanism called \solutionATotalName{} (\solutionA{}).
Based on \solutionA{}, we develop two methods: \solutionMethodATotalName{} (\solutionMethodA{}) and \solutionMethodBTotalName{} (\solutionMethodB{}).
We evaluate both methods against recent solutions, \solutionCMPATotalName{} (\solutionCMPA{}) and \solutionCMPBTotalName{} (\solutionCMPB{}), to demonstrate their efficiency and effectiveness. 
Our results show that \solutionMethodA{} reduces error by $68\%$ compared to \solutionCMPA{} on real datasets, while \solutionMethodB{} achieves $24.9\%$ lower error than \solutionCMPB{} on synthetic datasets.

\section{acknowledgment}
Peng Cheng's work is partially supported by NSFC under Grant No. 62102149 and the Fundamental Research Funds for the Central Universities. Lei Chen’s work is partially supported by National Key Research and Development Program of China Grant No. 2023YFF0725100, National Science Foundation of China (NSFC) under Grant No. U22B2060, Guangdong-Hong Kong Technology Innovation Joint Funding Scheme Project No. 2024A0505040012, the Hong Kong RGC GRF Project 16213620, RIF Project R6020-19, AOE Project AoE/E-603/18, 
Theme-based project TRS T41-603/20R, CRF Project C2004-21G, Guangdong Province Science and Technology Plan Project 2023A0505030011, Guangzhou municipality big data intelligence key lab, 2023A03J0012, Hong Kong ITC ITF grants MHX/078/21 and PRP/004/22FX, Zhujiang scholar program 2021JC02X170, Microsoft Research Asia Collaborative Research Grant, 
HKUST-Webank joint research lab and 2023 HKUST Shenzhen-Hong Kong Collaborative Innovation Institute Green Sustainability Special Fund, from Shui On Xintiandi and the InnoSpace GBA. Heng Tao Shen is supported by the Fundamental Research Funds for the Central Universities. Xuemin Lin is supported by NSFC U2241211. 
Corresponding author: Peng Cheng.

\bgroup\small
\bibliographystyle{ACM-Reference-Format}
\let\xxx=\bibitem\def\bibitem{\par\vspace{1mm}\xxx}
\bibliography{reference}
\egroup

\section{Appendix}
\subsection{Running time Analysis}
	In this subsection, we compare the running time of \solutionCMPA{}, \solutionCMPB{}, \solutionMethodA{} and \solutionMethodB{}.

	Figure~\ref{fig:alter_e_running_time} shows the running time as the privacy budget varies from $0.2$ to $1$. For synthetic datasets, the running time remains stable across different privacy budgets. This stability occurs because as datasets change gradually and skipped time slots increase, different privacy budgets have minimal impact on the number of new publications.
	In real datasets, the running time of \solutionCMPB{} increases slightly, likely due to larger privacy budgets requiring more comparisons between dissimilarity and error when datasets change rapidly. 
	\solutionMethodA{} requires the highest computation time among all methods, particularly with synthetic datasets, while \solutionCMPA{} requires the least time for real datasets and some synthetic datasets (i.e., TLNS and Sin). It is because non-personalized methods (\solutionCMPA{} and \solutionCMPB{}) have fewer steps in \solutionCMPA{} than in \solutionCMPB{} for publication judgments (which is denoted as algorithm complexity running time, $TC_{ac}$). As a result, \solutionCMPA{} requires less time than \solutionCMPB{} for most datasets.
	Personalized methods (\solutionMethodA{} and \solutionMethodB{}), however, require additional steps for optimal budget selection. 
	These methods also have a higher probability of dissimilarity exceeding error (since they achieve lower error rates than non-personalized methods), resulting in fewer skips or nullifications compared to non-personalized methods. 
	Fewer skips or nullifications leads to more comparisons in the total stream publication (which is defined as publication number running time, $TC_{pn}$).
	When time slots are sufficiently large, $TC_{pn}$ has a greater impact than $TC_{ac}$.
	This effect becomes particularly noticeable with data changing slowly (as seen in synthetic datasets).

	Figure~\ref{fig:alter_w_running_time} shows the running time as the window size changes from $40$ to $200$. All methods except \solutionMethodA{} maintain stable running times as the window size increases. For \solutionMethodA{}, its running time increases when the window sizes increase, because larger windows result in smaller per-user privacy budgets within each window. Reserving half of the privacy budget for future publications leads to larger dissimilarity and error.
	Since \solutionMethodA{}'s optimal budget selection step mitigates error's growth, the dissimilarity increases at a lower rate than the error, resulting in more frequent publications.
	Similar to Figure~\ref{fig:alter_e_running_time}, \solutionCMPA{} requires the least running time among all methods on real datasets and most synthetic datasets (i.e., TLNS and Sin), while \solutionMethodA{} requires the highest running time. It is because $TC_{ac}$ is lower in \solutionCMPA{} than in \solutionCMPB{}. 
	Additionally, personalized methods introduce an optimal budget selection step that increases the running time by $TC_{pn}$. Compared to \solutionMethodB{}, the dissimilarity of \solutionMethodA{} increases more rapidly than the error, resulting in fewer skips or nullifications and thus a larger $TC_{pn}$ in \solutionMethodA{}. When time slots are sufficiently large, $TC_{pn}$ has a greater impact than $TC_{ac}$, causing \solutionMethodA{} to have the longest running time.

\begin{figure*}[h]\centering
	\subfigure{
		\scalebox{0.33}[0.33]{\includegraphics{figures/experiment_result/bar_2.pdf}}}\hfill\\
	\addtocounter{subfigure}{-1}\vspace{-2ex}
	\subfigure[][{\small \trajectoryDatasetName{}}]{
		\scalebox{0.238}[0.238]{\includegraphics{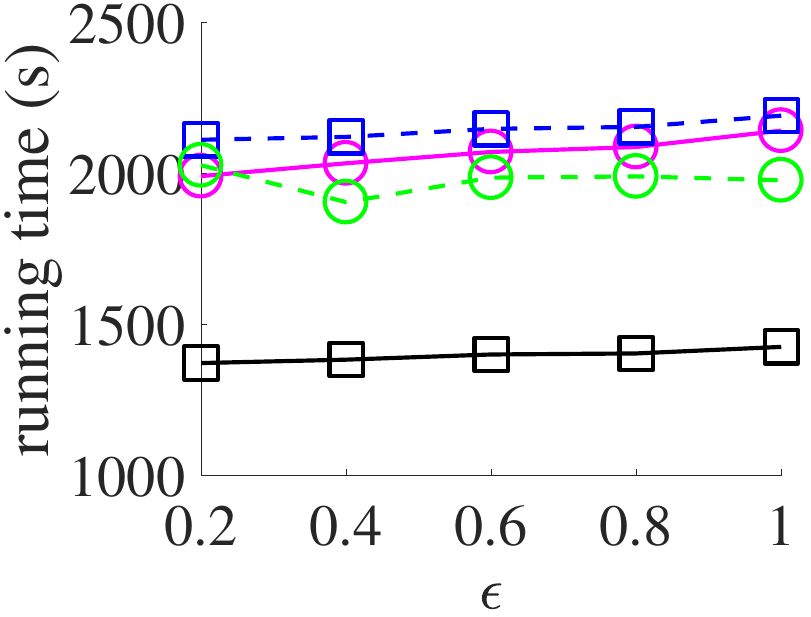}}
		\label{subfig:trajectory_budget_change_running_time}}\hfill
	\subfigure[][{\small \checkInDatasetName{}}]{
		\scalebox{0.238}[0.238]{\includegraphics{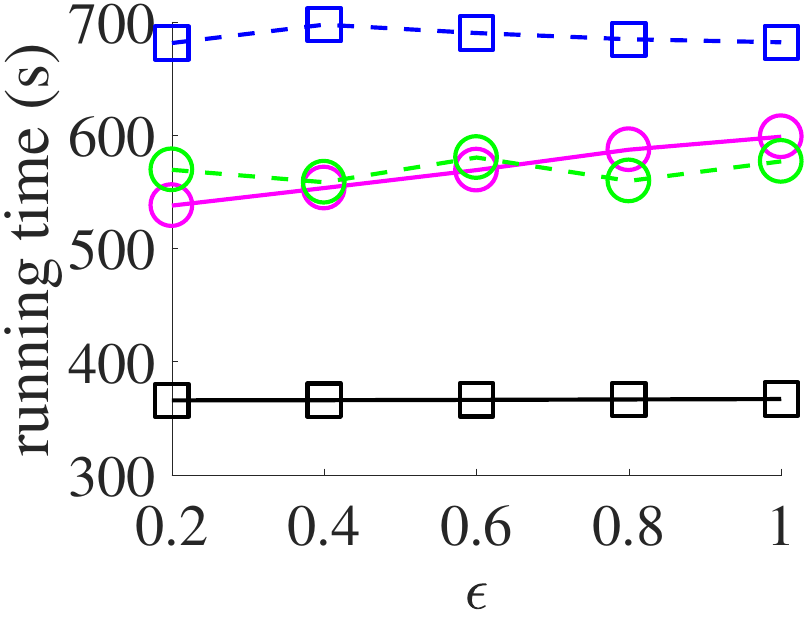}}
		\label{subfig:check_in_budget_change_running_time}}\hfill	
	\subfigure[][{\small \tlnsDatasetName{}}]{
		\scalebox{0.238}[0.238]{\includegraphics{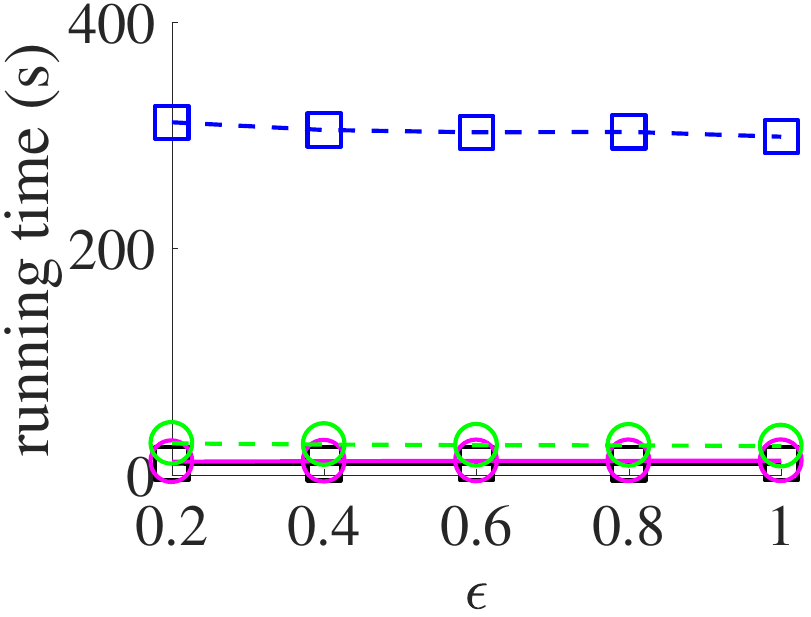}}
		\label{subfig:tlns_budget_change_running_time}}\hfill 
	\subfigure[][{\small \sinDatasetName{}}]{
		\scalebox{0.238}[0.238]{\includegraphics{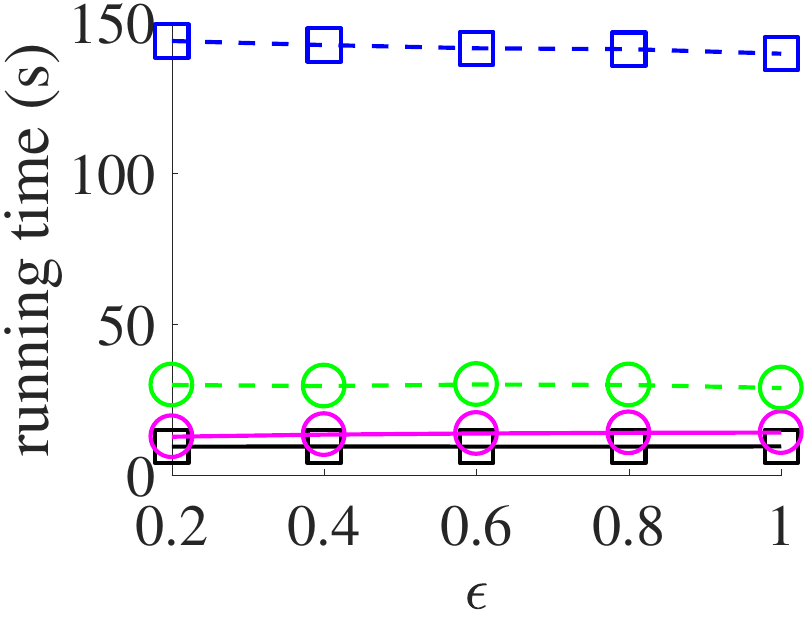}}
		\label{subfig:sin_budget_change_running_time}}\hfill 
	\subfigure[][{\small \logDatasetName{}}]{
		\scalebox{0.238}[0.238]{\includegraphics{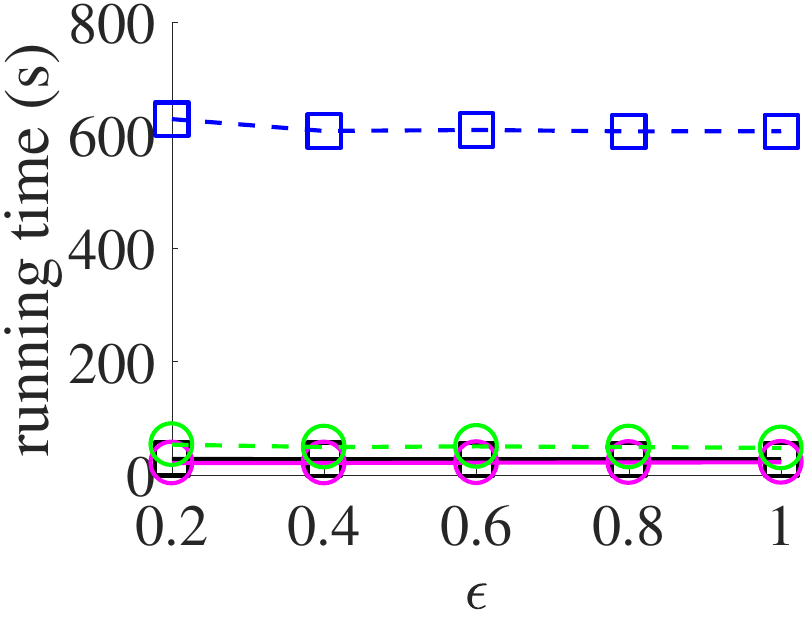}}
		\label{subfig:log_budget_change_running_time}}\hfill 
	\caption{\small The running time with $\epsilon$ varied.}
	\label{fig:alter_e_running_time}
\end{figure*}

\begin{figure*}[h]\centering
	\subfigure{
		\scalebox{0.33}[0.33]{\includegraphics{figures/experiment_result/bar_2.pdf}}}\hfill\\
	\addtocounter{subfigure}{-1}\vspace{-2ex}
	\subfigure[][{\small \trajectoryDatasetName{}}]{
		\scalebox{0.238}[0.238]{\includegraphics{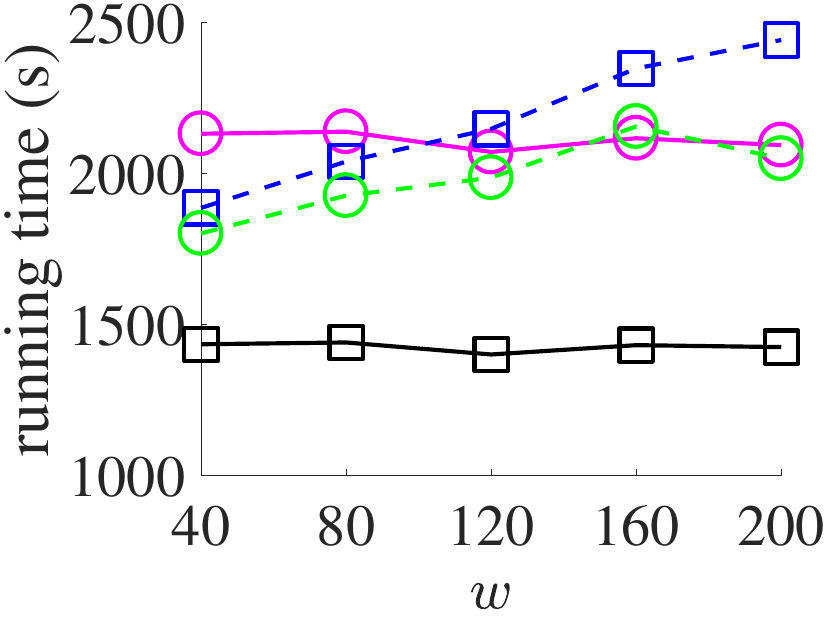}}
		\label{subfig:trajectory_window_size_change_running_time}}\hfill
	\subfigure[][{\small \checkInDatasetName{}}]{
		\scalebox{0.238}[0.238]{\includegraphics{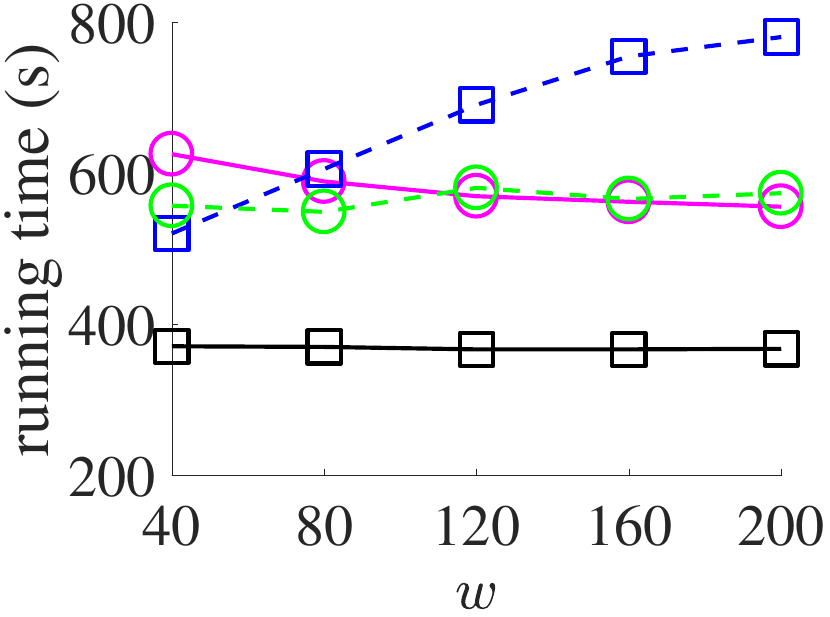}}
		\label{subfig:check_in_window_size_change_running_time}}\hfill	
	\subfigure[][{\small \tlnsDatasetName{}}]{
		\scalebox{0.238}[0.238]{\includegraphics{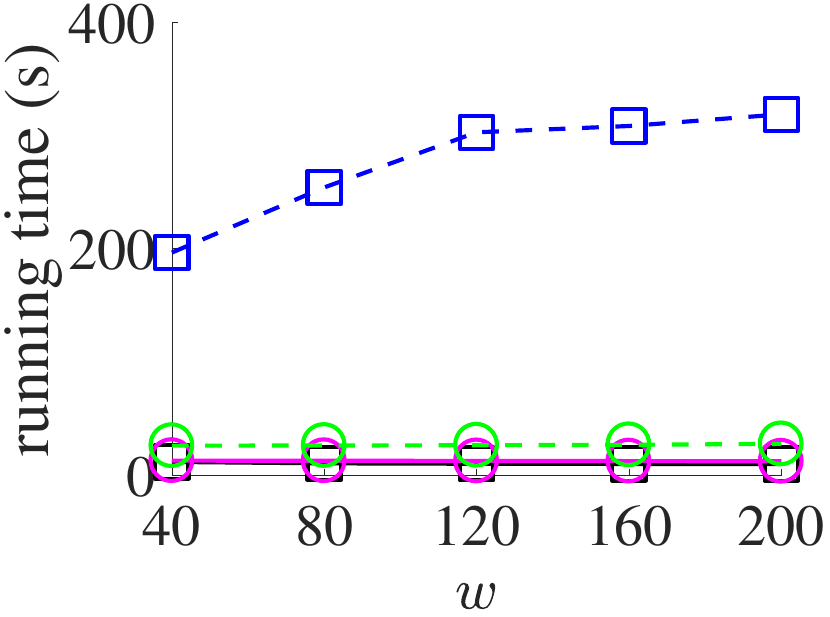}}
		\label{subfig:tlns_window_size_change_running_time}}\hfill 
	\subfigure[][{\small \sinDatasetName{}}]{
		\scalebox{0.238}[0.238]{\includegraphics{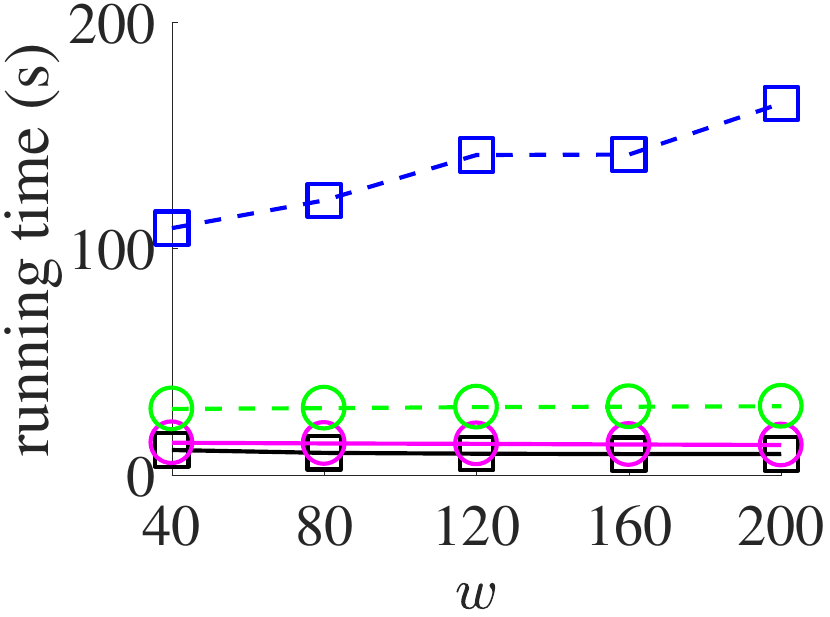}}
		\label{subfig:sin_window_size_change_running_time}}\hfill 
	\subfigure[][{\small \logDatasetName{}}]{
		\scalebox{0.238}[0.238]{\includegraphics{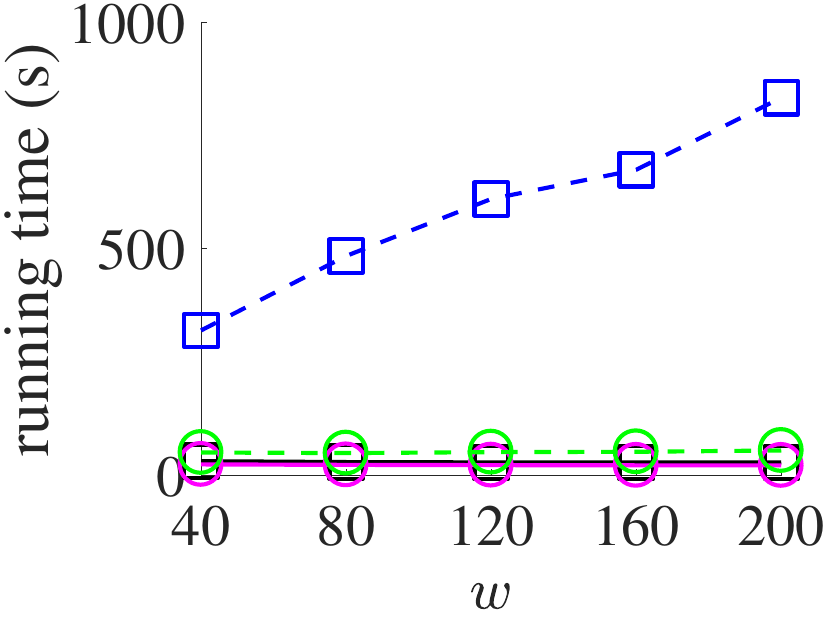}}
		\label{subfig:log_window_size_change_running_time}}\hfill 
	\caption{\small The running time with $w$ varied.}
	\label{fig:alter_w_running_time}
\end{figure*}

\subsection{Experimental Result under $AJSD$ Metric}\label{appendix:under_AJSD_metric}
	In this subsection, we compare the performance of \solutionCMPA{}, \solutionCMPB{}, \solutionCMPPLDPU{},  \solutionMethodA{} and \solutionMethodB{} using $AJSD$ metric.

\begin{figure*}[]\centering
	\subfigure{
		\scalebox{0.33}[0.33]{\includegraphics{figures/experiment_result_add/bar3.pdf}}}\hfill\\
	\addtocounter{subfigure}{-1}\vspace{-2ex}
	\subfigure[][{\small \trajectoryDatasetName{}}]{
		\scalebox{0.238}[0.238]{\includegraphics{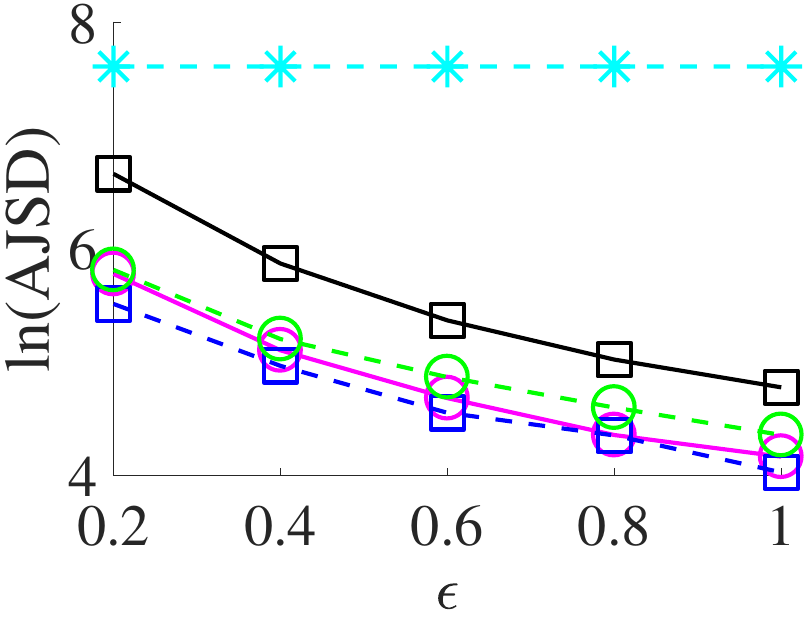}}
		\label{subfig:trajectory_budget_change_AJSD}}\hfill
	\subfigure[][{\small \checkInDatasetName{}}]{
		\scalebox{0.238}[0.238]{\includegraphics{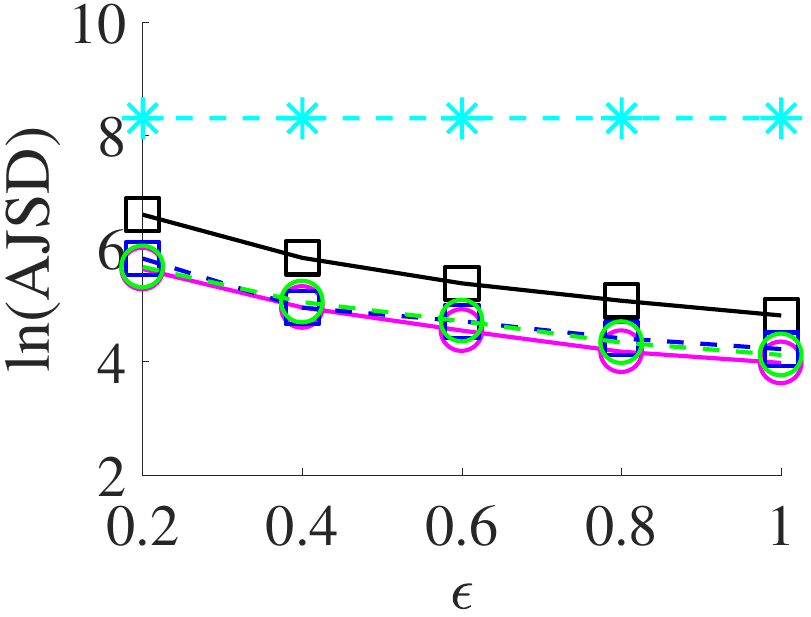}}
		\label{subfig:check_in_budget_change_AJSD}}\hfill	
	\subfigure[][{\small \tlnsDatasetName{}}]{
		\scalebox{0.238}[0.238]{\includegraphics{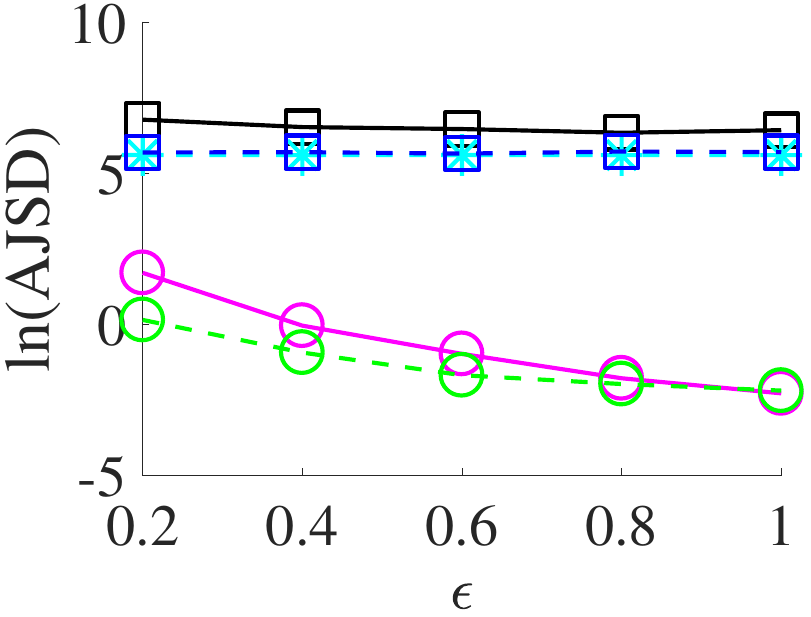}}
		\label{subfig:tlns_budget_change_AJSD}}\hfill 
	\subfigure[][{\small \sinDatasetName{}}]{
		\scalebox{0.238}[0.238]{\includegraphics{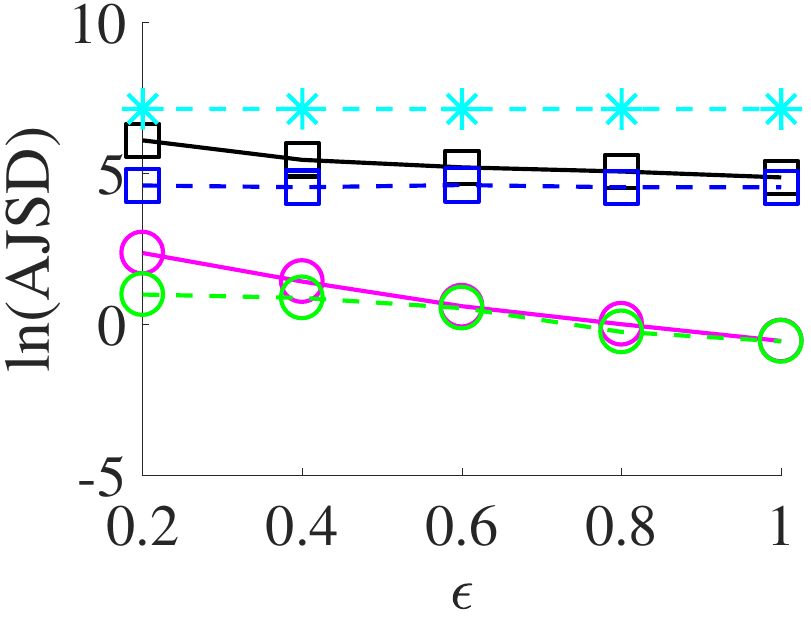}}
		\label{subfig:sin_budget_change_AJSD}}\hfill 
	\subfigure[][{\small \logDatasetName{}}]{
		\scalebox{0.238}[0.238]{\includegraphics{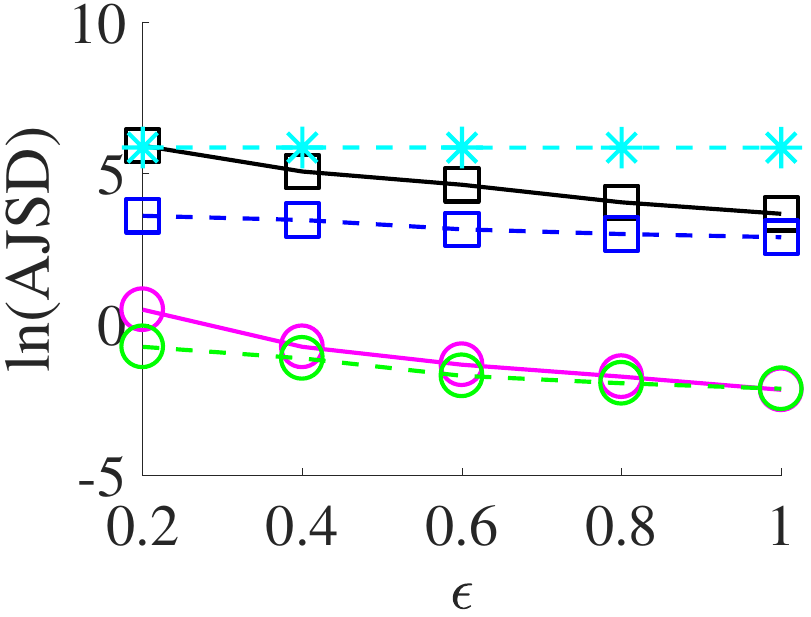}}
		\label{subfig:log_budget_change_AJSD}}\hfill 
	\caption{\small The $AJSD$ with $\epsilon$ varied.}
	\label{fig:alter_e_AJSD}
\end{figure*}

\begin{figure*}[h]\centering
	\subfigure{
		\scalebox{0.33}[0.33]{\includegraphics{figures/experiment_result_add/bar3.pdf}}}\hfill\\
	\addtocounter{subfigure}{-1}\vspace{-2ex}
	\subfigure[][{\small \trajectoryDatasetName{}}]{
		\scalebox{0.238}[0.238]{\includegraphics{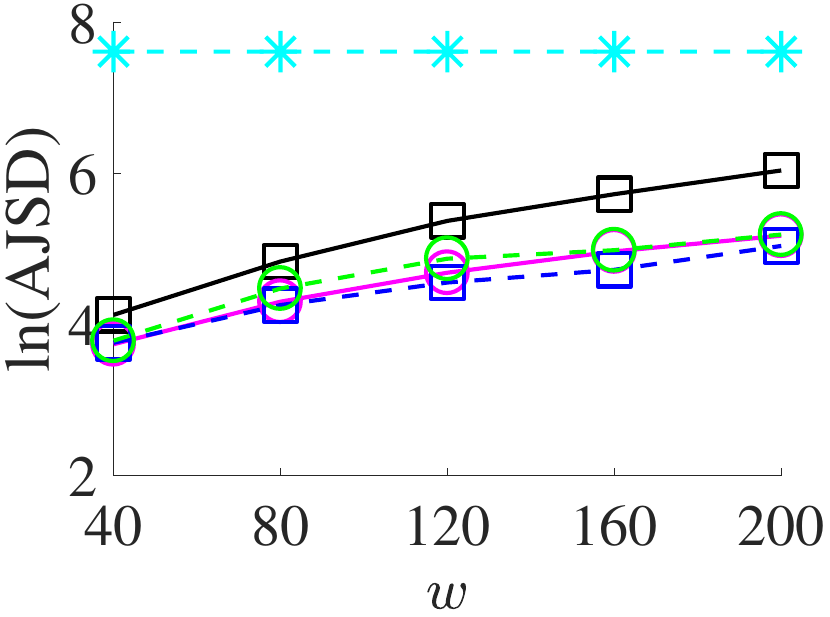}}
		\label{subfig:trajectory_window_size_change_AJSD}}\hfill
	\subfigure[][{\small \checkInDatasetName{}}]{
		\scalebox{0.238}[0.238]{\includegraphics{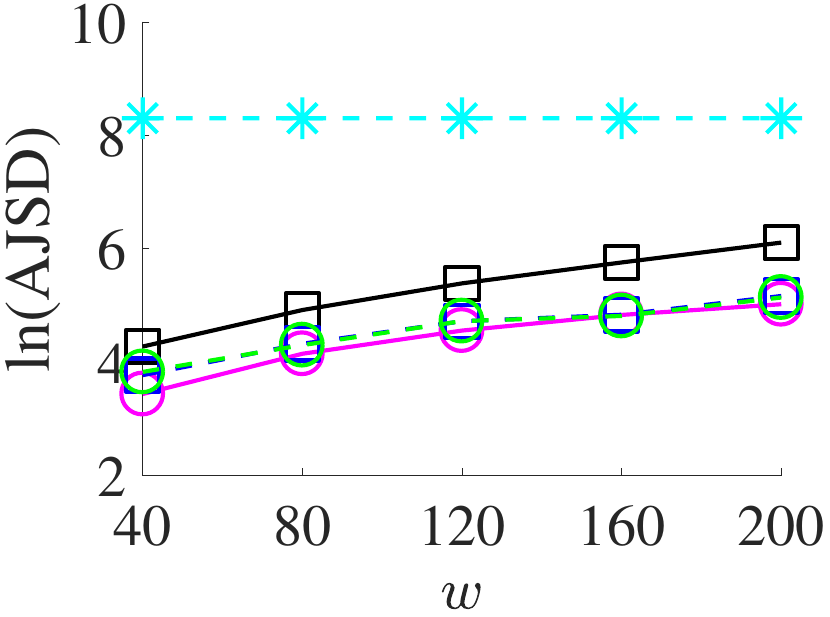}}
		\label{subfig:check_in_window_size_change_AJSD}}\hfill	
	\subfigure[][{\small \tlnsDatasetName{}}]{
		\scalebox{0.238}[0.238]{\includegraphics{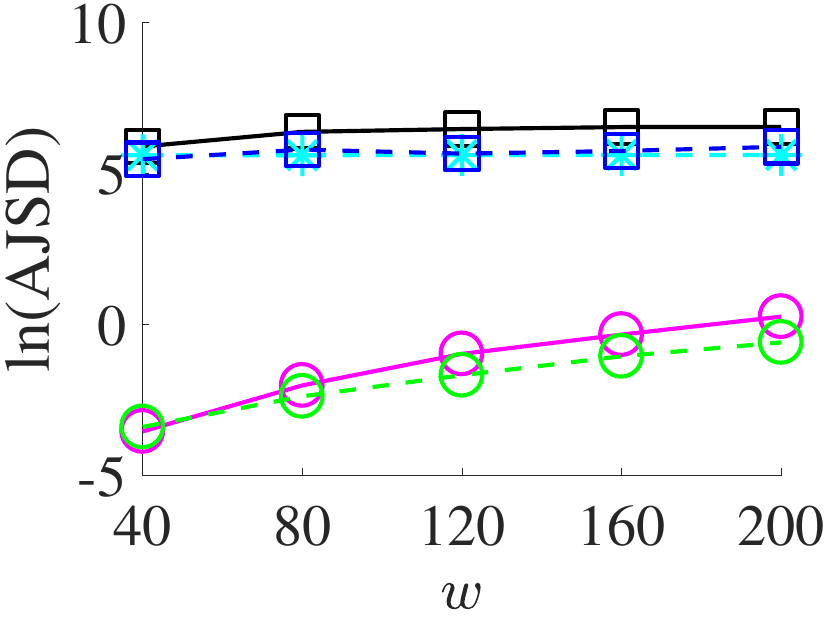}}
		\label{subfig:tlns_window_size_change_AJSD}}\hfill 
	\subfigure[][{\small \sinDatasetName{}}]{
		\scalebox{0.238}[0.238]{\includegraphics{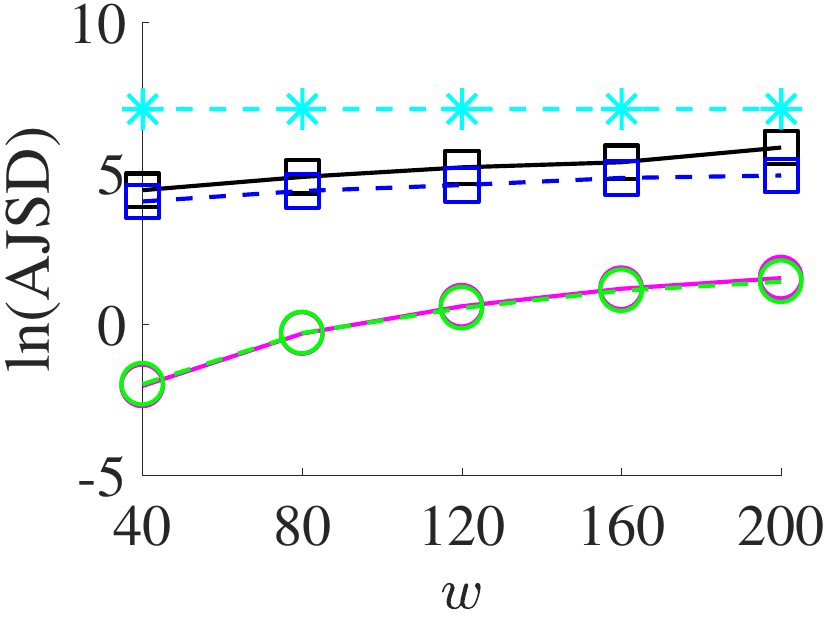}}
		\label{subfig:sin_window_size_change_AJSD}}\hfill 
	\subfigure[][{\small \logDatasetName{}}]{
		\scalebox{0.238}[0.238]{\includegraphics{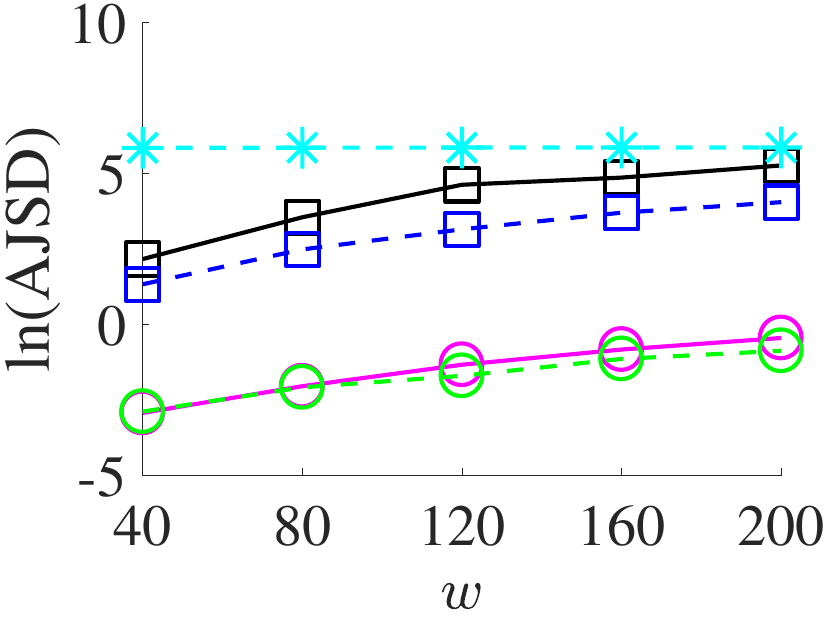}}
		\label{subfig:log_window_size_change_AJSD}}\hfill 
	\caption{\small The $AJSD$ with $w$ varied.}
	\label{fig:alter_w_AJSD}
\end{figure*}

	Figure~\ref{fig:alter_e_AJSD} shows the results of $AJSD$ as the privacy budget $\epsilon$ varies from $0.2$ to $1$.
	For all methods, $AJSD$ decreases as $\epsilon$ increases, which aligns with the $AMRE$ results in Section~\ref{exp:utility}.
	\solutionCMPPLDPU{} performs worse than other methods across all datasets except TLNS as LDP methods achieve lower accuracy than CDP methods under the same privacy budget.
	Both \solutionMethodA{} and \solutionMethodB{} consistently outperform \solutionCMPA{}.
	\solutionMethodA{} achieves the best accuracy on the Taxi dataset, while \solutionMethodB{} performs best with the three synthetic datasets. In the Foursquare dataset, \solutionCMPB{} outperforms other methods, due to the dataset's sparsity causing larger error in $AJSD$ calculation.

	Figure~\ref{fig:alter_w_AJSD} shows the results of $AJSD$ as the window size $w$ varies from $20$ to $200$.
	$AJSD$ also increases with larger window sizes for all methods.
	\solutionCMPPLDPU{} shows lower utility than other methods in all datasets except TLNS, since LDP methods achieve lower accuracy than CDP methods under equivalent privacy budgets.
	Consistent with the results in Figure~\ref{fig:alter_e_AJSD}, both \solutionMethodA{} and \solutionMethodB{} outperform \solutionCMPA{}.
	\solutionMethodA{} achieves the best performance in the Taxi dataset, while \solutionMethodB{} leads in the three synthetic datasets. 
	For the Foursquare dataset, \solutionCMPB{} achieves the lowest $AJSD$. However, this result may be unreliable due to the dataset's sparsity, which causes large calculation errors in the $AJSD$ measurements.

\end{document}